\title[Notes on the BKP hierarchy]
{From the $B$-Toda to the BKP hierarchy}
\author{Yuancheng Xie} 
\date{\today}
\address{Beijing International Center for Mathematical Research, Peking University,
Beijing 100871, China}
\email{xieyuancheng@bicmr.pku.edu.cn}
\subjclass[2000]{}
\def\tbox(#1,#2)#3{
\x=#1 \y=#2 
\multiply\x by 12 
\multiply\y by 12 
\z=\x \t=\y
\advance\z by 12 
\advance\t by 12 
\psline(\x,\y)(\x,\t)(\z,\t)(\z,\y)(\x,\y)
\advance\x by 6
\advance\y by 6 
\rput(\x,\y){{\bf #3}}}
\def\proof{\par{\it Proof}. \ignorespaces}
\def\endproof{{\ \vbox{\hrule\hbox{%
     \vrule height1.3ex\hskip0.8ex\vrule}\hrule }}\par}
\def\sproof{\par{\it Sketch of Proof}. \ignorespaces}
\def\endsproof{{\ \vbox{\hrule\hbox{%
				\vrule height1.3ex\hskip0.8ex\vrule}\hrule }}\par}
\numberwithin{equation}{section}
\let\trueint=\int
\let\truesum=\sum
\def\int{\mathop{\textstyle\trueint}\limits}
\def\sum{\mathop{\textstyle\truesum}\limits}
\def\t{\mathbf{t}}
\def\0{\mathbf{0}}
\def\edge{\ar@{-}}
\def\dedge{\ar@{.}}
\newtheorem{theorem}{Theorem}[section]
\newtheorem{definition}[theorem]{Definition}
\newtheorem{proposition}[theorem]{Proposition}
\newtheorem{lemma}[theorem]{Lemma}
\newtheorem{example}[theorem]{Example}
\newtheorem{corollary}[theorem]{Corollary}
\newtheorem{remark}[theorem]{Remark}
\newcommand{\ttt}{\mathbf t}
\renewcommand*\env@matrix[1][*\c@MaxMatrixCols c]{%
  \hskip -\arraycolsep
  \let\@ifnextchar\new@ifnextchar
  \array{#1}}
\newcommand{\thmrefer}[1]{\renewcommand\thetheorem
  {\protect\ref{#1}}\addtocounter{theorem}{-1}}
\begin{document}

\begin{abstract}
It is shown that all $\tau$-functions of BKP hierarchy can be written as Pfaffians of skew-symmetric matrices. $\tau$-functions of BKP hierarchy are parameterized by points in the  universal orthogonal Grassmannian manifold (UOGM). The UOGM is a disjoint union of Schubert cells, we classify and give explicit parameterization for points in each Schubert cell by constructing a frame for UOGM in the sense of Sato. $\tau$-functions are then expressed in terms of these frames and Schur-Q functions. For concreteness we give a comprehensive study for the $\tau$-functions of $B$-Toda which can be viewed as a finite version of the BKP hierarchy. Along the way we also give a constructive description for complex pure spinors du E. Cartan. As an application of our construction, we reprove a theorem due to A. Alexandrov which states that KdV solves BKP up to rescaling of the time parameters by $2$. We prove this by showing that the KdV hierarchy can be viewed as $4$-reduction of the BKP hierarchy.
This interpretation 
gives complete characterization for the KdV orbits inside the BKP hierarchy. Other than a few facts from representation theory, the main tools we use to show the above results, however, are surprisingly simple linear algebra.
\end{abstract}

\maketitle

\tableofcontents

\section{Introduction}
Around forty years ago, inspired by the work of M. Sato and Y. Sato which relates KP hierarchy with the group orbit of $\mathcal{A}_{\infty}$ through the highest weight vector of its fundamental module \cite{Sato1981, Sato-Sato1982}, E. Date, M. Jimbo, M. Kashiwara and T. Miwa introduced in \cite{Date-Jimbo-Kashiwara-Miwa1982} the BKP hierarchy which describes the group orbit of $\mathcal{B}_{\infty}$. A few years later, Y. You studied the rational solutions of the BKP hierarchy in \cite{You1989} and showed that the building blocks of the $\tau$-function of the BKP hierarchy are Schur-Q functions which are associated with characters of projective representations of symmetric groups, and these functions exhibit the Pfaffian structure instead of the determinant structure which is ubiquitous in KP theory. In the same year, Hirota showed in  \cite{Hirota1989} that some soliton solutions of the BKP hierarchy can also be written in the Pfaffian form. BKP hierarchy enjoys the same universality as KP hierarchy in the sense that many other integrable hierarchies can be obtained as reductions of the BKP hierarchy, for example the Sawada-Kotera hierarchy and Ramani hierarchy are $3$- and $5$-reduction of the BKP hierarchy, respectively. However, compared with the rapid development of the KP theory, the study of BKP hierarchy apparently falls behind. Not much progress has been made after its initial blossom in the 1980s until recently there are some renewed interests in BKP hierarchy because of its connection with representation theory and enumerative geometry (c.f. \cite{Kac-Leur2019, Kac-Rozhkovskaya-Leur2021, Mironov-Morozov2021, Alexandrov2021, Liu-Yang2021, Liu-Yang2022, Harnad-Orlov2021a, Harnad-Orlov2021b, Harnad-Orlov2021c}). A major reason accounting for such imbalance is that BKP hierarchy lacks a simple and unified formula for its $\tau$-functions, for example, results in \cite{Shigyo2016} show that when the $\tau$-function of BKP hierarchy is expanded in terms of Schur-Q functions, the coefficients satisfy a set of quite complicated equations and it is not clear where to find proper coefficients verifying all these relations. The main goal of the current paper is to remedy such deficiency in the literature. 

To illustrate what we mean by this, let us first recall M. Sato and Y. Sato's formula of $\tau$-functions for the KP hierarchy.  Let $\Lambda = (\delta_{\mu, \nu-1})_{\mu, \nu \in \mathbb{Z}}$ be the shift operator, $\eta(\t, \Lambda) = \sum \limits_{n = 1}^{\infty}t_n\Lambda^n$ and $\xi_0 = (\delta_{\mu, \nu})_{\stackrel{\mu, \nu \in \mathbb{Z}}{\nu < 0}}, \xi = (\xi_{\mu, \nu})_{\stackrel{\mu, \nu \in \mathbb{Z}}{\nu < 0}}$, then the $\tau$-function of KP hierarchy can be expressed compactly as the determinant of an infinite matrix
\begin{equation}\label{eq:tauKPintro}
\tau_{\text{KP}}(\t; {\xi}) = \tau_{\text{KP}}(t_1, t_2, t_3, \dots; {\xi}) := \det({\xi}^T_0e^{\eta(\t, \Lambda)}{\xi}).
\end{equation}
Note that here $\xi$ parameterizes the solutions, and up to a constant $\tau_{\text{KP}}(\t; {\xi})$ is invariant under the right action of invertible $\mathbb{N} \times \mathbb{N}$ matrices on $\xi_0$ and $\xi$. That is to say KP hierarchy defines a dynamics on a homogeneous space, the so-called universal Grassmannian manifold (UGM), of a properly defined Lie group $\mathcal{A}_{\infty}$ consisting of invertible $\mathbb{Z} \times \mathbb{Z}$ matrices. The elegant formula \eqref{eq:tauKPintro} immediately reveals a lot of interesting and intrinsic structures of the $\tau$-functions, for example, all the Schur polynomials are $\tau$-functions, Schur expansion of $\tau_{\text{KP}}(\t; \xi)$ follows from the Binet-Cauchy formula, and the coefficients of the Schur expansion automatically satisfy all the Pl\"ucker relations, etc.

The parallel theory for BKP hierarchy, however, is a little subtler. First, the solution space of BKP hierarchy are parameterized by the universal orthogonal Grassmannian manifold(UOGM), which are maximal isotropic with respect to the non-degenerate symmetric quadratic form defining the infinite-dimensional type-$B$ Lie group $\mathcal{B}_{\infty}$ and Lie algebra $\mathfrak{b}_{\infty}$, that is to say there exist some serious constraints on the corresponding $\xi$ (see Equation \eqref{eq:tauKPintro}) in type-$B$ theory. Second, the fundamental module we need to consider is from the so-called spin representation of $\mathfrak{b}_{\infty}$ which is unfortunately only familiar to a few experts. Third, even though $\mathfrak{b}_{\infty}$ can be viewed as a sub-Lie algebra of $\mathfrak{a}_{\infty}$,  the building blocks of its $\tau$-function--the Schur Q-function arising in BKP theory does not seem to be directly related to the more familiar Schur function in KP theory at first sight\footnote{Except Schur and Schur-Q functions are both special cases of Hall-Littlewood polynomials.}, let alone the equations satisfied by the coefficients for general $\tau$-functions of KP and BKP hierarchies, respectively. Nevertheless, simple and unified formula for $\tau$-functions of BKP hierarchy do exist. 

As warm up, we can state the following combinatorial formula for $\tau$-functions of BKP hierarchy in the big Schubert cell.
\begin{theorem}\label{thm:taufunctiongeneric}
Let
\[\tilde{J} = \begin{pmatrix}
& & & \udots \\
& & \udots & \\
& 1 & & \\
1 & & & 
\end{pmatrix}, \qquad \text{and} \qquad Q(\t_B) = \begin{pmatrix}
\ddots & \vdots & \vdots & \vdots & \vdots\\
\cdots & 0 & -Q_{3, 2}(\t_B) & -Q_{3, 1}(\t_B) & -Q_{3, 0}(\t_B)\\
\cdots & Q_{3, 2}(\t_B) & 0 & -Q_{2, 1}(\t_B) & -Q_{2, 0}(\t_B)\\
\cdots & Q_{3, 1}(\t_B) & Q_{2, 1}(\t_B) & 0 & -Q_{1, 0}(\t_B)\\
\cdots & Q_{3, 0}(\t_B) & Q_{2, 0}(\t_B) & Q_{1, 0}(\t_B) & 0 
\end{pmatrix},\]
where $Q_{\lambda_i, \lambda_j}$'s are the elementary Schur-$Q$ functions (see Section \ref{sec:SchurQ} for our convention) and $\t_{B} = (t_1, t_3, t_5, \dots)$ is the BKP flow.
Then $\tau$-functions of the BKP hierarchy in the big Schubert cell have the following form:
\begin{equation}\label{eq:Pfaffiantau}
\begin{array}{rcl}
\tau_{\text{BKP}}(\t_B) & = & \text{Pf}
\begin{pmatrix}[ccc|ccc]
 &    &    &  & & \\
 & Q &   &  & \tilde{J} &\\
 &  &     & & & \\
\hline 
 &    & &      & & \\
 & -\tilde{J}^T & &  & R  &\\
 &  &     & & & \\
\end{pmatrix}\\
& = & \sum \limits_{\lambda \in \text{DP}}\text{Pf}(R_{\lambda})Q_{\lambda},
\end{array}
\end{equation}
where $R$ is an arbitrary skew symmetric matrix, i.e. $R = (r_{i, j})_{i, j \ge 0}, r_{i, j} = - r_{j, i}$, Pf denotes the pfaffian of a skew-symmetric matrix and the summation runs over all strict partitions. 
\end{theorem}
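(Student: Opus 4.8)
The plan is to derive both equalities in~\eqref{eq:Pfaffiantau} from the Sato-type frame description of the UOGM developed in the body of the paper. For the first equality, recall that every point of the big Schubert cell of the UOGM carries a unique normalized frame, and that the maximal-isotropy condition forces the part of this frame off the pivot positions to be a skew-symmetric matrix, which is the matrix $R$ in the statement. The BKP $\tau$-function attached to such a point is, by construction, a Pfaffian pairing of this frame against the vacuum frame acted on by the BKP flow, and the anti-diagonal block $\tilde J$ records precisely this pairing in the chosen indexing. It then remains to identify the evolved-vacuum block with $Q(\t_B)$: the generating identity $\sum_{n\ge0}Q_n(\t_B)z^n=\exp\!\big(\sum_{k\ \mathrm{odd}}\tfrac2k t_k z^k\big)$ produces the single-row functions, and the induced skew pairing of two such columns produces exactly the two-row functions $Q_{a,b}(\t_B)=Q_aQ_b+2\sum_{i\ge1}(-1)^iQ_{a+i}Q_{b-i}$ appearing in $Q(\t_B)$ (both families being the elementary Schur-$Q$ functions of the statement). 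Assembling the three blocks gives the block Pfaffian.

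For the second equality I would expand that block Pfaffian over perfect matchings of its index set, which partitions into a $Q$-side and an $R$-side, in the spirit of the Pfaffian analogue of the Cauchy--Binet formula. Since the off-diagonal block $\tilde J$ is supported on the anti-diagonal, every edge crossing from the $Q$-side to the $R$-side is forced: a $Q$-side index $i$ can only be matched with its $\tilde J$-partner on the $R$-side. A matching is therefore equivalent to the choice of an even subset $S$ of $Q$-side indices matched internally, contributing $\mathrm{Pf}(Q_S)$; the complementary $Q$-side indices are matched across, and the leftover $R$-side indices, which are exactly the $\tilde J$-images of $S$, are matched internally, contributing $\mathrm{Pf}(R_{\tilde S})$ for the reflected set $\tilde S$. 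Summing, the Pfaffian factorizes as $\sum_S\pm\,\mathrm{Pf}(Q_S)\,\mathrm{Pf}(R_{\tilde S})$, the sums being effectively finite exactly as in the analogous Cauchy--Binet argument for KP.

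Finally I would invoke the classical Pfaffian (Nimmo) formula for Schur-$Q$ functions: for $S=\{\lambda_1>\dots>\lambda_{2m}\}$ one has $\mathrm{Pf}(Q_S)=Q_\lambda(\t_B)$ with $\lambda=(\lambda_1,\dots,\lambda_{2m})\in\mathrm{DP}$, a $0$ being adjoined when $|S|$ is odd (legitimate since $Q_{a,0}=Q_a$), and every strict partition arises from a unique such $S$. Reconciling the reflection $S\mapsto\tilde S$ together with the sign coming out of the matching expansion --- both absorbable into the freely chosen skew-symmetric $R$, so that $\mathrm{Pf}(R_{\tilde S})$ becomes $\mathrm{Pf}(R_\lambda)$ as written --- collapses the sum to $\sum_{\lambda\in\mathrm{DP}}\mathrm{Pf}(R_\lambda)\,Q_\lambda$. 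The main obstacle is precisely this bookkeeping: fixing conventions so that the signs in the matching expansion, the orientation of the $Q$-block (with $0$ in the bottom-right corner), and the reflection induced by $\tilde J$ all line up to reproduce the two-row functions with the signs shown in $Q(\t_B)$ and the $R$-minors indexed by $\lambda$ rather than by its reflection. The identification of the flow action with elementary Schur-$Q$ functions is the conceptual core, but it is routine once the frame is in place.
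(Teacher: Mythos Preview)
Your treatment of the second equality is essentially what the paper invokes: the paper cites the minor summation formula of Ishikawa--Wakayama, which is exactly the Pfaffian Cauchy--Binet expansion you describe, and the identification $\mathrm{Pf}(Q_S)=Q_\lambda$ is standard. So that half is fine in spirit, modulo the sign bookkeeping you yourself flag.

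The first equality, however, has a genuine gap. You write that the BKP $\tau$-function is ``by construction, a Pfaffian pairing of this frame against the vacuum frame'', but nothing in the setup gives this directly: the $\tau$-function is defined as a matrix coefficient in the \emph{spin} representation, and no Pfaffian is visible there a priori. The paper's route is quite different. It passes through the identity $\tau_{\text{BKP}}^2(\t_B;g)=\langle e_{-n}\wedge\cdots\wedge e_{-0},\exp(\Theta_e)g\cdot e_{-n}\wedge\cdots\wedge e_{-0}\rangle$, which lives in the \emph{wedge} representation and is a determinant; it then computes this determinant explicitly as $\det(G_{\mathcal{P}}^T G_{\tilde N})$ after putting both the flow factor and the frame into normal form (Proposition~\ref{prop:normalforme}), and rewrites it via $\det(I+AB)=\det\begin{pmatrix}A&\tilde J\\-\tilde J^T&B\end{pmatrix}$. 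Only then does one observe that the resulting block matrix is skew-symmetric, so that $\tau$ itself is its Pfaffian. Your sketch skips this entire chain; without it, the Pfaffian structure is an assertion, not a conclusion.

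Two smaller points. First, your generating function and two-row formula for $Q_{a,b}$ do not match the paper's conventions in Section~\ref{sec:SchurQ} (there $e^{\sum_{k\ \mathrm{odd}}t_kz^k}=\sum q_iz^i$ and $Q_{a,b}=\tfrac12 q_aq_b+\sum_{k\ge 1}(-1)^kq_{a+k}q_{b-k}$); since the statement explicitly refers to those conventions, you should use them. Second, the identification of the flow block with $Q(\t_B)$ is not as immediate as you suggest: in the paper it comes from applying the same normal-form procedure to $\exp(\Theta_f(\t_B))$ and using the quadratic identity $q_k^2+2\sum_{i\ge 1}(-1)^iq_{k-i}q_{k+i}=0$, which is what produces the skew matrix of $Q_{i,j}$'s rather than a full lower-triangular block.
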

First let's make a remark regarding the meaning of the pfaffian for an infinite-dimensional skew-symmetric matrix. Note that the Schur-Q expansion which is most important for our purpose in this paper is consistent with the finite-dimensional truncations, that is we can view this infinite-dimensional skew-symmetric matrix both as injective limit of finite-dimensional $(2n+2) \times (2n+2)$ skew-symmetric matrices or projective limit of finite-dimensional $(2n+2) \times (2n+2)$ skew-symmetric matrices. The difference is that the former gives us an algebraic variety which includes for example all the points representing rational solutions of the BKP hierarchy and the latter contains more general ones such as soliton and quasi-periodic solutions of the BKP hierarchy. The other possibilities for example those with topological structures sit between these two extremes (c.f. \cite{Takasaki1989}).

The Schur-Q expansion for $\tau_{\text{BKP}}(\t_B)$ in the big Schubert cell, i.e. the second expression for $\tau_{\text{BKP}}(\t_B)$ in \eqref{eq:Pfaffiantau} where $\tau(\0)=1$, is essentially known to the Kyoto School  \cite{Date-Jimbo-Kashiwara-Miwa1982} (see also \cite{Harnad-Balogh2021}). The first equality which is missed for so many years actually follows easily from the combinatorial identities in \cite{Ishikawa-Wakayama1995} where we just need to let the size of the relevant matrices go to infinity. We will give a representation theoretical proof for this formula and generalize it to obtain a unified expression for all $\tau_{\text{BKP}}(\t_B)$, including singular solutions of the BKP hierarchy.

Before we state our next result, let us remark that the minor summation formula in \eqref{eq:Pfaffiantau} is really an analogue of Binet-Cauchy formula for determinant in formula \eqref{eq:tauKPintro} in the sense that when we expand the pfaffian expression we need to take the sum of the corresponding product of all possible minors, and our representation theoretical proof reflects exactly this fact (see a similar viewpoint also in \cite{Balogh-Harnad-Hurtubise2021}). 

Formula \eqref{eq:Pfaffiantau} is of course based on the well-known fact that the big Schubert cell in UOGM is parameterized by skew-symmetric matrices $R$ (c.f. \cite{Cartan1966, Chevalley1954, Date-Jimbo-Kashiwara-Miwa1982}). To obtain an expression for general $\tau$-function $\tau_{\text{BKP}}(\t_B)$, we need to parameterize all the Schubert cells in UOGM. Let us introduce some notations in order to state this classification result.  

Let $\tilde{V} \cong \mathbb{C}^{\infty}$ be the infinite-dimensional vector space with basis $(\dots, e_{-n}, \dots, e_{-1}, e_{-0}, e_{+0}, e_{1}, \dots, e_{n}, \dots)$. We equip with $\tilde{V}$ the non-degenerate quadratic form $\tilde{Q}$ such that
\[\tilde{Q}(\sum x_{\alpha}e_{\alpha}) = \sum \limits_{k = 0}^{\infty} (-1)^{k}x_{+k}x_{-k}.\]
Our infinite-dimensional orthogonal Lie algebra $\mathfrak{b}_{\infty}$ is associated with the quadratic subspace $(V, Q) \subset (\tilde{V}, \tilde{Q})$, where $V$ is the orthogonal complement of the vector $(e_{-0} - e_{+0})$ in $\tilde{V}$, and $Q$ is the restriction of $\tilde{Q}$ to $V$. 

The Schubert cells of UOGM are indexed by the normal subgroup $\mathfrak{N}_B$ of the Weyl group $\mathfrak{W}_B$ of $\mathfrak{b}_{\infty}$ which changes the signs of some short roots $\varepsilon_i$'s of $\mathfrak{b}_{\infty}$(see Section \ref{sec:BnDn1}). For each element $w \in \mathfrak{N}_B$, we can associate to it a unique strict partition $\lambda = (\lambda_1, \lambda_{2}, \dots, \lambda_{2\ell})$, where $\lambda_1 > \lambda_2 > \dots > \lambda_{2\ell} \ge 0$ such that $w(\varepsilon_{\lambda_j}) = -\varepsilon_{\lambda_j}$ for all $\lambda_j \ne 0$. We denote this element by $w_{\lambda}$ and the corresponding Schubert cell by $\mathcal{L}_{\lambda}$ in the following. We associate a sequence of increasing integers $(i_1, i_2, \dots, i_{n+1}, \dots)$ to $\lambda$ by the rule that if $i_k \in \{\lambda_1, \lambda_2, \dots, \lambda_{2\ell}\}$ then $i_k = -k$, otherwise $i_k = k$.

Consider a $\mathbb{Z} \times \mathbb{Z}$ matrix $L$. We number the rows and columns of $L$ by 
\[(\cdots, -n, \cdots, -1, -0, +0, 1, \cdots, n, \cdots)\]
 from left to right and from top to bottom. Let $\tilde{E}_{i,j}$ be the matrix with a $1$ at the $i$-th row and $j$-th column and $0$ at other places, then we define
\[\left\{\begin{array}{l}
E_{i, j} = \tilde{E}_{i, j} \qquad \text{for } i, j \not\in \{+0, -{0}\}\\
E_{i,0} = \tilde{E}_{i, +0} + \tilde{E}_{i, -0}\quad \text{and} \quad E_{0, i} = \tilde{E}_{+0, i} + \tilde{E}_{-0, i}.
\end{array}\right.\]

The Schubert cell associated with the strict partition $\lambda$ in $({V}, {Q})$ is parameterized by
\begin{theorem}\label{thm:normalforminf}
Columns $(i_1, i_2, \dots, i_{n+1}, \dots )$ of the following matrix
\[L_{\lambda} := \tilde{M} + \sum \limits_{i_k > |i_l|}r_{i_k, i_l}[E_{i_k, i_l} - E_{i_l, i_k}] - \sum \limits_{i_k, i_l > 0}r_{i_k, 0}r_{i_l, 0}E_{i_k, i_l},\]
give a canonical form for elements in the Schubert cell $\mathcal{L}_{\lambda}$ of UOGM associated with $w_{\lambda} \in \mathfrak{N}_B$.
Here $\tilde{M}$ is the Gram matrix of $\tilde{Q}$ and $r_{i_k, i_l} \in \mathbb{C} \ (i_k > |i_l|)$ are arbitrary constants which provide the inhomogeneous coordinate system for the Schubert cell $\mathcal{L}_{\lambda}$ of UOGM indexed by $w_{\lambda}$. 
\end{theorem}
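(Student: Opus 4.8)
The plan is to run, uniformly over all Schubert cells, the same elementary argument that produces the big-cell parameterization by skew-symmetric matrices cited above: put a general point of $\mathcal{L}_{\lambda}$ into the reduced echelon form prescribed by its pivot sequence, then impose isotropy and see which entries survive. Fix $\lambda\in\mathrm{DP}$ and its increasing label sequence $(i_1,i_2,\dots)$. By the description of the Bruhat stratification recalled just before the statement, a point $U\in\mathcal{L}_{\lambda}$ is a maximal $Q$-isotropic subspace of $V$ whose projection onto the coordinate slots $(i_1,i_2,\dots)$ is an isomorphism and whose pivot pattern with respect to the standard flag is the one carried by the base point $\tilde M$; concretely, the columns $(i_1,i_2,\dots)$ of $\tilde M$ span exactly the maximal isotropic $\langle e_{-k}\ (k\notin\lambda),\ e_{+k}\ (k\in\lambda)\rangle$ attached to $w_{\lambda}$, so that the pivot rows are the sign-flipped indices and all remaining entries are available as deformation parameters. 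Hence $U$ has a unique spanning matrix of the shape $L=\tilde M+(\text{entries in non-pivot positions})$, and the whole content of the theorem is to identify which of these parameters survive.

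For the second step I would write out the isotropy conditions $Q(\mathrm{col}_{i_k}L,\mathrm{col}_{i_l}L)=0$ for all $k,l$, using that the Gram form of $\tilde Q$ pairs $e_{+m}$ with $e_{-m}$ (sign $(-1)^m$) for $m\ge 1$, whereas the slot $0$ — surviving in $V=(e_{-0}-e_{+0})^{\perp}$ as a vector of nonzero square rather than as half of a hyperbolic pair — behaves differently. These equations split into two families. The relations involving only hyperbolic pairs $\{e_{+m},e_{-m}\}$ are bilinear in the entries and force the block of $L$ on rows and columns $(i_1,i_2,\dots)$ to be skew-symmetric, contributing $\sum_{i_k>|i_l|}r_{i_k,i_l}(E_{i_k,i_l}-E_{i_l,i_k})$. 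The relations that feel the $0$-row are, precisely because of the non-hyperbolic square, genuinely quadratic; solving them for the remaining $0$-row entries in terms of the free parameters $r_{i_k,0}$ and substituting back produces the rank-one correction $-\sum_{i_k,i_l>0}r_{i_k,0}r_{i_l,0}\,E_{i_k,i_l}$. Verifying that after this elimination the free parameters are \emph{exactly} the $r_{i_k,i_l}$ with $i_k>|i_l|$, with no residual constraints, and that reassembling them returns $L_{\lambda}$ verbatim, is the step I expect to be the main obstacle — the bookkeeping of the signs $(-1)^m$ and of the doubled index $\pm 0$ must be handled carefully.

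Third, I would close the equivalence in the other direction: a direct substitution into $Q$ shows that for any tuple $(r_{i_k,i_l})$ the column span of $L_{\lambda}$ is isotropic — the cross terms created by the $0$-column are annihilated by precisely the quadratic term — and that its pivot pattern is the one attached to $w_{\lambda}$, so $L_{\lambda}\in\mathcal{L}_{\lambda}$; conversely, uniqueness of the echelon representative shows distinct tuples give distinct $U$. Letting $w_{\lambda}$ range over $\mathfrak{N}_B$, the admissible (i.e.\ isotropic) echelon types are exactly the pivot sequences $(i_1,i_2,\dots)$, so this recovers the whole Bruhat decomposition of the UOGM; and the parameter count in each truncation ($\binom{n}{2}$ skew entries plus $n$ parameters $r_{i_k,0}$ in the $e_0$-direction, i.e.\ $\binom{n}{2}+n$) matches the dimension of the corresponding finite orthogonal Grassmannian, providing a consistency check.

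Finally, on the infinite-dimensional bookkeeping: every sum above is locally finite on the ind-variety version of $\mathcal{L}_{\lambda}$ (only finitely many $r_{i_k,i_l}$ nonzero), and on the pro-completion the formulas are read entrywise; the entire computation, carried out in each $(2n+2)\times(2n+2)$ truncation discussed after Theorem~\ref{thm:taufunctiongeneric}, is compatible with the truncation maps, so the statement follows by passing to the limit with no new analytic input required.
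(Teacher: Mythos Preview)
Your approach is sound but genuinely different from the paper's. You attack the problem geometrically: take an isotropic subspace in $\mathcal{L}_{\lambda}$, put it in echelon form with the pivot pattern dictated by $w_{\lambda}$, then impose the pairwise isotropy conditions $Q(\mathrm{col}_{i_k},\mathrm{col}_{i_l})=0$ and solve. The paper instead works group-theoretically inside the $D_{n+1}$ embedding: it starts from an arbitrary $\tilde{N}\in\mathcal{N}_-$, shows (Lemma~\ref{lem:normalformg}) that there is a unique $P_{\lambda}\in\mathcal{N}_-\cap\dot w_{\lambda}\mathcal{P}_0\dot w_{\lambda}^{-1}$ putting $\tilde{N}P_{\lambda}$ into RREF away from the rows $(-0,+0)$, and then simply computes $\tilde{N}P_{\lambda}\tilde{M}$. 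The point is that isotropy is never imposed as a constraint: since $\tilde{N}P_{\lambda}\in\mathcal{SO}_{2n+1}$ already, the relevant columns are automatically Lagrangian. Writing $\tilde{N}P_{\lambda}=\exp(n_{\lambda})$ with $n_{\lambda}=\sum_{i_k>|i_l|}r_{i_k,i_l}[(-1)^{i_l}E_{i_k,-i_l}-(-1)^{i_k}E_{i_l,-i_k}]$, the exponential truncates at order two and the only surviving quadratic piece comes from the $E_{\bullet,0}$ terms, which is exactly the rank-one correction $-\sum r_{i_k,0}r_{i_l,0}E_{i_k,i_l}$. So what you obtain by solving quadratic isotropy equations, the paper obtains for free from nilpotency of $n_{\lambda}$.

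What each buys: your route is more elementary and self-contained (no parabolic bookkeeping), but the step you yourself flag as the obstacle---checking that the isotropy constraints leave \emph{exactly} the parameters $r_{i_k,i_l}$ with $i_k>|i_l|$ free, with the signs and the doubled index $\pm 0$ handled correctly---is real work, and in the general cell it is where errors creep in. The paper's route avoids that computation entirely at the cost of the preliminary Lemma~\ref{lem:normalformg}, whose proof is a one-line triangularity argument. One small correction: your dimension count $\binom{n}{2}+n$ is the big-cell count only; for general $\lambda$ the number of free $r_{i_k,i_l}$ is $\tfrac{n(n+1)}{2}-|\lambda|$, matching $\dim\mathcal{L}_{\lambda}$, and this is the check you should record.
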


With the classification result in Theorem \ref{thm:normalforminf}, the general $\tau$-functions of BKP hierarchy are given by

\begin{theorem}\label{thm:taufunctioninf}
$\tau_{\text{BKP}}(\t_B; w_{\lambda})$-function for the BKP hierarchy associated with $L_{\lambda}$ is given by the Pfaffian of the following skew-symmetric matrix
\begin{align*}
W_{\lambda} = & \sum \limits_{j > i \ge 0}Q_{ji}(\t_B) (\tilde{E}_{-i, -j} - \tilde{E}_{-j, -i}) + \sum \limits_{i_k > |i_l|}\left[(\tilde{E}_{-i_k, i_k} - \tilde{E}_{i_k, -i_k}) + (-1)^{\text{min}\{0, i_{l}\}}r_{i_k, i_l}(\tilde{E}_{i_k, i_l} - \tilde{E}_{i_l, i_k})\right] + \\
& \qquad + \sum \limits_{j = 1}^{\ell}(\tilde{E}_{\lambda_{2j}, \lambda_{2j-1}} - \tilde{E}_{\lambda_{2j-1}, \lambda_{2j}}) + (1 - \delta_{\lambda_{2\ell, 0}})(\tilde{E}_{-0, +0} - \tilde{E}_{+0, -0}),
\end{align*}
where $(i_1, i_2, \dots, i_{n+1}, \dots)$ is the index associated with the strict partition $\lambda = (\lambda_1, \lambda_2, \dots, \lambda_{2\ell})$, and $r_{i, j}$'s are constants parameterizing the solutions.
\end{theorem}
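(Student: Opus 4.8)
The plan is to compute the BKP $\tau$-function directly from the wave function (or fermionic vacuum expectation value) attached to the isotropic subspace represented by the columns $(i_1, i_2, \dots)$ of $L_\lambda$ from Theorem \ref{thm:normalforminf}. Recall that a BKP $\tau$-function is obtained by expanding the corresponding point of the UOGM in the spin representation and reading off the coefficients against the basis of Schur-$Q$ functions $\{Q_\lambda\}_{\lambda \in \mathrm{DP}}$; concretely, if a maximal isotropic subspace is spanned by vectors $v_k = \sum_\alpha L_{\alpha,i_k} e_\alpha$, then the $\tau$-function is the Pfaffian of the ``overlap'' pairing of these vectors with the time-evolved reference flag, and this pairing matrix is exactly of the block form appearing in $W_\lambda$: one block recording the Schur-$Q$ entries $Q_{ji}(\t_B)$ coming from the time evolution $e^{\eta(\t_B,\Lambda)}$ restricted to the ``negative'' part of the space, one block recording the coordinates $r_{i_k,i_l}$ of $L_\lambda$, and the identity-type blocks $\tilde E_{-i_k,i_k}-\tilde E_{i_k,-i_k}$ and $\tilde E_{\lambda_{2j},\lambda_{2j-1}}-\tilde E_{\lambda_{2j-1},\lambda_{2j}}$ which pair the chosen column indices with their mirror partners under $\tilde Q$.

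First I would set up the pairing: using the Gram matrix $\tilde M$ of $\tilde Q$, I would pair each column $v_{i_k}$ of $L_\lambda$ against the shifted ``wave'' vectors $e^{\eta(\t_B,\Lambda)} e_{-j}$ and against the fixed basis vectors $e_{\lambda_{2j-1}}, e_{\pm 0}$, and record the result as a skew-symmetric matrix $W_\lambda$. The three summands in $L_\lambda$ — the $\tilde M$ part, the $r_{i_k,i_l}(E_{i_k,i_l}-E_{i_l,i_k})$ part, and the quadratic correction $-r_{i_k,0}r_{i_l,0}E_{i_k,i_l}$ — each contribute, but the key point (which I would verify by a short computation) is that the quadratic correction term is precisely what is needed to keep the subspace isotropic, and under the Pfaffian it either cancels or gets absorbed; this explains why the quadratic term is \emph{absent} from $W_\lambda$, whose only $r$-dependence is linear. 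The sign twist $(-1)^{\min\{0,i_l\}}$ is the bookkeeping of the signs $(-1)^k$ in $\tilde Q$ transported through the pairing. Second, I would invoke the big-cell case (Theorem \ref{thm:taufunctiongeneric}), where $\lambda$ is empty, $i_k=k$ for all $k$, and $W_\lambda$ collapses to the block matrix with $Q(\t_B)$, $\tilde J$, $-\tilde J^T$, and $R$; the general statement is then the assertion that moving from the big cell to the cell $\mathcal L_\lambda$ amounts to replacing the ``identity pairing'' $\tilde J$ by the partial pairing dictated by the index sequence $(i_1,i_2,\dots)$, i.e.\ swapping in the rank-one blocks $\tilde E_{-i_k,i_k}-\tilde E_{i_k,-i_k}$ and the ``defect'' blocks $\tilde E_{\lambda_{2j},\lambda_{2j-1}}-\tilde E_{\lambda_{2j-1},\lambda_{2j}}$ coming from the Weyl group element $w_\lambda \in \mathfrak N_B$. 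Third, I would expand the Pfaffian of $W_\lambda$ along the block of rows/columns carrying the $Q_{ji}(\t_B)$, using the Pfaffian minor-summation (Ishikawa–Wakayama) identity already cited, to recover the Schur-$Q$ expansion $\sum_{\mu \in \mathrm{DP}} c_\mu(w_\lambda) Q_\mu(\t_B)$, and check that the coefficients $c_\mu(w_\lambda)$ match the fermionic/representation-theoretic formula — which, since they are sub-Pfaffians of a matrix built from $\tilde J$-shifted $r_{i,j}$'s, automatically satisfy the BKP Plücker (Cartan) relations, so that $\mathrm{Pf}(W_\lambda)$ is genuinely a BKP $\tau$-function and lies in the cell $\mathcal L_\lambda$ by construction.

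The main obstacle I expect is the careful bookkeeping of indices and signs: keeping straight the two copies $e_{\pm 0}$, the fold $V \subset \tilde V$ cutting out $e_{-0}-e_{+0}$, the sign sequence $(-1)^k$ in $\tilde Q$, the re-indexing $i_k = \pm k$ governed by whether $k$ belongs to $\lambda$, and the $(1-\delta_{\lambda_{2\ell},0})$ correction distinguishing the two ``spinor components.'' Making the pairing matrix come out exactly in the stated form — in particular showing the quadratic term in $L_\lambda$ drops out and that the minimal-index sign twist $(-1)^{\min\{0,i_l\}}$ is forced — will require a disciplined choice of conventions matched to Section \ref{sec:SchurQ} and Section \ref{sec:BnDn1}; once the conventions are fixed, each individual identity is a routine linear-algebra check, but the coherence of all of them simultaneously is where the real care is needed. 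A secondary point is justifying the passage to the infinite-dimensional Pfaffian, which I would handle exactly as in the remark following Theorem \ref{thm:taufunctiongeneric}: work with the $(2n+2)\times(2n+2)$ truncations, observe that the Schur-$Q$ expansion stabilizes, and take the limit compatible with whichever completion (injective or projective) one wants.
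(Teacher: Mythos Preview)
Your outline lands in the right neighborhood, but the paper's argument is more direct and relies on two concrete computational moves you are hand-waving. The paper does \emph{not} set up a pairing in the spin module to get $\tau$ directly; instead it works entirely with $\tau^2$ in the exterior-power model. Concretely (see Theorem~\ref{thm:generaltaufunction} and the surrounding lemmas), one writes
\[
\tau_{\text{BKT}}^2(\t_B; w_\lambda) \;=\; \det\bigl(A^T B\bigr),
\]
where $A = \tilde{\mathcal{P}}_{2n+2}$ is the normal form of the time-evolution factor (already computed in Section~\ref{sec:Pfaffiangeneric} to carry the entries $Q_{ji}(\t_B)$) and $B = \Omega_\lambda(\tilde{L}_\lambda)$ is the column-set of the reduced Lagrangian frame. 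One then uses the elementary block identity
\[
\det(A^T B) \;=\; (-1)^{p(q+1)} \det\begin{pmatrix} A^T & 0 \\ -I_p & B \end{pmatrix}
\]
to assemble a single $(2n+2)\times(2n+2)$ matrix, observes that (after the evident reordering of rows and columns) it is skew-symmetric, and concludes that $\tau$ is its Pfaffian. Passage to $n\to\infty$ is then the remark you correctly anticipated.

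The point you are vaguest on is exactly where the paper is most explicit. Your claim that the quadratic term $-\sum r_{i_k,0}r_{i_l,0}E_{i_k,i_l}$ in $L_\lambda$ ``either cancels or gets absorbed under the Pfaffian'' is not the mechanism: it is removed \emph{before} any Pfaffian is taken, by an explicit column operation $\tilde{U}_\lambda \in \mathcal{O}(2n+2)$ (Lemma~\ref{lem:ReducedLagrangian}) that takes $L_\lambda$ to $\tilde{L}_\lambda$, a matrix linear in the $r$'s. This is why only linear $r$-dependence appears in $W_\lambda$, and also the real source of the sign twist $(-1)^{\min\{0,i_l\}}$ and of the special treatment of the $\pm 0$ rows. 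Your pairing heuristic could in principle be made rigorous, but as written it does not substitute for these two steps: the column-reduction lemma and the block-determinant trick are the actual content of the proof.
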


Interestingly, Theorem \ref{thm:taufunctioninf} also gives complete characterization of pure spinors of E. Cartan as a byproduct as we now explain. In the main body of the paper, we view type $B$ Toda as finite-dimensional BKP hierarchy and give a comprehensive study for the corresponding finite version $\tau$-functions of the full Kostant-Toda (f-KT) hierarchy associated with the spin representation $(\rho, S)$ of $\mathfrak{so}_{2n+1}$. That is, restricted to the case when dim $V = 2n+1 < \infty$, we have analogues of Theorem \ref{thm:taufunctiongeneric}, Theorem \ref{thm:normalforminf} and Theorem \ref{thm:taufunctioninf} for finite dimensional orthogonal Grassmannian manifolds and polynomial $\tau$-functions of $B$-Toda. 
The underlying vector space $S$ of spin representation is the exterior algebra of the maximal isotropic subspace generated by $\{e_{-1}, e_{-2}, \dots, e_{-n}\}$. That is each element in $S$ can be uniquely written as
\[s = \sum \limits_{n \ge i_1 > i_2 > \cdots > i_k > 0}\xi_{i_1, i_2, \dots, i_k}e_{-i_1} \wedge e_{-i_2} \wedge \cdots \wedge e_{-i_k}.\]
An element $s \in S$ is usually called a spinor, and the coefficients $\xi_{i_1, i_2, \dots, i_k}$ are called the Cartan coordinates of $s$. E. Cartan shows that for each point in the orthogonal Grassmannian manifold $\text{OG}(V)$, there is a way to associate with it an element in $S$ and spinors with such a geometric origin are called pure spinors. The locus of pure spinors in $\mathbb{P}(S)$ is cut out by a set of quadratic equations in the Cartan coordinates, called Cartan-Pl\"ucker relations (c.f. \cite{Cartan1966, Chevalley1954}). Since $\text{dim }\mathbb{P}(S) \gg \text{dim OG}(V)$ when $n$ is large, these relations can be quite complicated. Our construction of $\tau$-function on the other hand, gives a constructive description of E. Cartan's pure spinors, i.e. a parameterization of the so-called spinor varieties (c.f. \cite{Anderson-Nigro2020, Manivel2009}).
\begin{theorem}\label{thm:purespinor}
The $\tau$-function defined in Theorem \ref{thm:taufunctioninf} has a Schur-Q expansion. For strict partitions $\lambda$, the coefficients of $Q_{\lambda}$ in this expansion satisfy all the Cartan-Pl\"ucker relations, i.e. they are Cartan coordinates of a pure spinor. All pure spinors can be obtained in this way. That is for any pure spinor $s$, all its Cartan coordinates are pfaffians of certain minors of a skew-symmetric matrix determined by $s$. 
\end{theorem}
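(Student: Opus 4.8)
The plan is to obtain Theorem~\ref{thm:purespinor} as a corollary of Theorems~\ref{thm:normalforminf} and~\ref{thm:taufunctioninf}, the Pfaffian minor--summation identity of Ishikawa--Wakayama \cite{Ishikawa-Wakayama1995}, and the type-$B$ boson--fermion dictionary that identifies the spin module $S$ with $\mathbb{C}[t_1,t_3,t_5,\dots]$ by sending the fermionic basis monomials to Schur-$Q$ functions \cite{Date-Jimbo-Kashiwara-Miwa1982, You1989}. Throughout I write $\mu$ for the strict partition indexing a term of the Schur-$Q$ expansion, reserving $\lambda$ for the one that labels the Schubert cell $\mathcal{L}_\lambda$. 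The three assertions split cleanly: the Schur-$Q$ expansion together with ``each coefficient is a Pfaffian of a minor of a skew-symmetric matrix'' comes from expanding $\text{Pf}(W_\lambda)$; ``the coefficients satisfy the Cartan--Pl\"ucker relations'' comes from recognizing $\text{Pf}(W_\lambda)$ as the image of the point $L_\lambda\in\text{OG}(V)$ under the spin representation, an image that Cartan's theorem identifies with a pure spinor; and the surjectivity statement comes from Theorem~\ref{thm:normalforminf}, which exhausts all Schubert cells, hence all of $\text{OG}(V)$.

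\emph{Step 1: the Schur-$Q$ expansion and the minor formula.} I would split $W_\lambda = W_\lambda^{Q} + W_\lambda^{c}$, where $W_\lambda^{Q}$ collects the entries $Q_{ji}(\t_B)$ sitting in the $(e_{-i},e_{-j})$ block and $W_\lambda^{c}$ is the complementary constant skew-symmetric part assembled from the $r_{i_k,i_l}$'s and the $\pm1$ pairing entries. Since $Q_\mu(\t_B)$ is itself the Pfaffian of the finite skew-symmetric array $\big(Q_{\mu_a\mu_b}(\t_B)\big)$ of elementary Schur-$Q$ functions (our convention in Section~\ref{sec:SchurQ}), the minor--summation formula applied to $\text{Pf}(W_\lambda)$ --- exactly as in the big-cell identity of Theorem~\ref{thm:taufunctiongeneric} --- yields $\text{Pf}(W_\lambda) = \sum_{\mu\in\text{DP}}\text{Pf}\big((W_\lambda^{c})_{I_\mu}\big)\,Q_\mu(\t_B)$, where $I_\mu$ is the index set associated with $\mu$ as in the statement of Theorem~\ref{thm:taufunctioninf}. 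This simultaneously gives the Schur-$Q$ expansion and shows that each coefficient is the Pfaffian of a minor of $W_\lambda^{c}$; after the evident sign untwisting $W_\lambda^{c}$ is the skew-symmetric matrix $R$ (resp.\ $R_\lambda$) of Theorems~\ref{thm:taufunctiongeneric}--\ref{thm:normalforminf}. I would carry this out first for $\dim V = 2n+1<\infty$ and then pass to $n\to\infty$, noting that the coefficient of a fixed $Q_\mu$ stabilizes, so the algebraic statement is unambiguous.

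\emph{Step 2: the coefficients are Cartan coordinates of a pure spinor.} Here I would unwind the construction behind Theorem~\ref{thm:taufunctioninf}: the $\tau$-function attached to $L_\lambda$ is produced by letting the one-parameter group $e^{\eta_B(\t_B)}$ (the type-$B$ analogue of $e^{\eta(\t,\Lambda)}$ in~\eqref{eq:tauKPintro}) act on the pure spinor $s(L_\lambda)\in S$ that Cartan attaches to the maximal isotropic subspace $L_\lambda$, and then reading off coordinates in the basis $\{e_{-i_1}\wedge\cdots\wedge e_{-i_k}\}$ of $S$. The boson--fermion correspondence carries that basis to $\{Q_\mu(\t_B)\}$, so the coefficients $\text{Pf}\big((W_\lambda^{c})_{I_\mu}\big)$ found in Step 1 are exactly the Cartan coordinates $\xi_\mu$ of $s(L_\lambda)$, normalized by $\tau(\0)=1$. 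Since the image of the map $\text{OG}(V)\to\mathbb{P}(S)$ consists precisely of the pure spinors, which are by definition the locus of the Cartan--Pl\"ucker relations \cite{Cartan1966, Chevalley1954}, these coefficients satisfy all Cartan--Pl\"ucker relations.

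\emph{Step 3: surjectivity, and the main obstacle.} Let $s$ be an arbitrary pure spinor. It represents a point of $\text{OG}(V)$, which by Theorem~\ref{thm:normalforminf} lies in a unique cell $\mathcal{L}_\lambda$ and is represented there by $L_\lambda$ for a unique choice of the $r_{i_k,i_l}$'s; the $\tau$-function that Theorem~\ref{thm:taufunctioninf} builds from these data then has, by Step 2, Schur-$Q$ coefficients equal to the Cartan coordinates of $s$, and by Step 1 these are Pfaffians of the minors $I_\mu$ of the skew-symmetric matrix $W_\lambda^{c}$ determined by $s$. This is the assertion. The substantive point --- and the only one requiring real care --- is the convention match in Step 2: one must line up the sign twists $(-1)^{\min\{0,i_l\}}$ in $W_\lambda$, the chosen normalization of the type-$B$ (neutral free fermion) boson--fermion isomorphism, and Cartan's normalization of the pure-spinor map tightly enough that the expansion coefficients come out \emph{equal} to the Cartan coordinates, not merely proportional to them or merely satisfying the same quadratic relations. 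I expect this bookkeeping, done in the finite $\mathfrak{so}_{2n+1}$ picture and then stabilized as $n\to\infty$, to be the heart of the proof; the rest is formal once Theorems~\ref{thm:normalforminf} and~\ref{thm:taufunctioninf} are available.
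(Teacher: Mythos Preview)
Your proposal is correct and follows essentially the same route as the paper. The paper's proof (Section~\ref{sec:purespinor}) is terse: it invokes the type-$B$ boson--fermion correspondence $\sigma^B_0(w_{\mu}\cdot v^{\omega_1}) = Q_{\mu}(\t_B)$, notes that pure spinors are exactly the group orbit $\mathcal{G}\cdot v^{\omega_1}$, and then cites the finite-dimensional analogues of Theorems~\ref{thm:normalforminf} and~\ref{thm:taufunctioninf} together with Corollary~\ref{cor:generaltaufunction} (which already records the Schur-$Q$ expansion with Pfaffian coefficients) to conclude; your Steps~2 and~3 match this exactly, and your Step~1 is what underlies Corollary~\ref{cor:generaltaufunction}. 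The only minor difference is emphasis: you front-load the Ishikawa--Wakayama minor-summation identity to extract the Schur-$Q$ expansion directly from $\text{Pf}(W_\lambda)$, whereas the paper obtains that expansion as a byproduct of the explicit construction in Section~\ref{sec:generalcase}, and the sign-bookkeeping you flag as the ``heart of the proof'' is absorbed silently into the paper's citation of \cite{Date-Jimbo-Kashiwara-Miwa1982, You1989} for the boson--fermion dictionary.
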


As we mentioned before, BKP hierarchy enjoys some kind of universality property and a fascinating part of BKP theory is its various reductions to other integrable hierarchies. Here by reduction we mean putting some further constraints on a system which are compatible with that system. For example, Sato's theory of KP hierarchy is based on a pseudo-differential operator $L$, and BKP hierarchy itself can be viewed as a reduction of the KP hierarchy by imposing the following constraint on $L:\partial^{-1}L^*\partial = -L$, where $L^*$ is the formal adjoint of $L$. As a result of this constraint, we only have non-trivial compatible odd time flows $\t_B = (t_1, t_3, t_5, \dots)$ in the BKP hierarchy. As another example, the KdV hierarchy is obtained as $2$-reduction of the KP hierarchy in the sense that we require $L^2$ to be a second order differential operator. More generally, the $l$-reduction of the KP hierarchy is the constraint that $L^l$ is an $l$-th order differential operator, and in these cases the corresponding $\tau$-function usually depends on the time variables $(t_{nl}, n > 0)$ in a trivial way (c.f. \cite{Kodama-Xie2021KP}). In this sense it is natural to study the $l$-reduction of the BKP hierarchy for $l$ a positive odd integer since the BKP hierarchy does not have even time flows from the very beginning, and these reductions were investigated in \cite{Date-Jimbo-Kashiwara-Miwa1982}.

Since both KdV hierarchy and BKP hierarchy depend only on odd times flows, it was speculated since the early days after the discovery of BKP hierarchy that there should be a simple relation between these two integrable systems\footnote{Personal communication with Y. Kodama}. It certainly comes as a ``surprise'' when A. Alexandrov announced his result in \cite{Alexandrov2021} that solutions of KdV satisfy BKP hierarchy up to rescaling of the time parameters by $2$. More precisely, we have
\begin{theorem}[\cite{Alexandrov2021}]\label{thm:KdVinBKP}
For any KdV $\tau$-function,
\begin{equation}\label{eq:KdVinBKP}
\tau(\t_B) = \tau_{\text{KdV}}(\t_B \slash 2)
\end{equation}
is a $\tau$-function of the BKP hierarchy.
\end{theorem}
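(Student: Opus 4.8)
The plan is to deduce the statement from the Hirota bilinear identities of the two hierarchies, exploiting crucially that a KdV $\tau$-function, being a $2$-reduced KP $\tau$-function, is independent of the even time variables. Recall that $\tau_{\text{BKP}}$ is a BKP $\tau$-function exactly when
\[
\operatorname*{Res}_{k=0}\Big[\,k^{-1}\,e^{\sum\nolimits_{j\,\text{odd}}(t_j-t_j')k^{j}}\,\tau_{\text{BKP}}(\t_B-2[k^{-1}])\,\tau_{\text{BKP}}(\t_B'+2[k^{-1}])\Big]=\tau_{\text{BKP}}(\t_B)\,\tau_{\text{BKP}}(\t_B'),
\]
with $[k^{-1}]=(k^{-1},\tfrac13k^{-3},\tfrac15k^{-5},\dots)$, while a $2$-reduced KP $\tau$-function $\tau_{\text{KdV}}$ satisfies, once the immaterial even components of the shift are dropped,
\[
\operatorname*{Res}_{k=0}\Big[\,e^{\sum\nolimits_{j\,\text{odd}}(s_j-s_j')k^{j}}\,\tau_{\text{KdV}}(\mathbf s-[k^{-1}])\,\tau_{\text{KdV}}(\mathbf s'+[k^{-1}])\Big]=0 .
\]
Setting $\tau(\t_B):=\tau_{\text{KdV}}(\t_B/2)$, i.e. $\mathbf s=\t_B/2$, I would substitute into the left-hand side of the first identity and follow each ingredient through.

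Two things happen for free: the exponent $\sum\nolimits_{j\,\text{odd}}(t_j-t_j')k^{j}$ becomes $2\sum\nolimits_{j\,\text{odd}}(s_j-s_j')k^{j}$, and by the chain rule the BKP shift $\t_B\mapsto\t_B-2[k^{-1}]$ becomes the KP shift $\mathbf s\mapsto\mathbf s-[k^{-1}]$; thus the factor $2$ in front of $[k^{-1}]$ in the BKP vertex operator is precisely what the rescaling $\t_B\mapsto\t_B/2$ absorbs, and the doubled exponent is its shadow. What is \emph{not} automatic, and is the crux, is that the extra $k^{-1}$ in the BKP integrand together with its nonzero right-hand side $\tau\,\tau$ — both absent from the bare KP identity — must be supplied by the $2$-reduction. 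Equivalently, one must show that a $2$-reduced KP $\tau$-function obeys, on top of the plain KP identity, the sharper residue identity obtained by also using the $\Z/2$-symmetry $k\mapsto-k$ that accompanies $z^{2}$-invariance, and that this sharper identity is, after the rescaling, exactly the BKP identity. I expect the cleanest way to carry this out — and the one natural to the framework built here — is to pass to the neutral-fermion (spin) realization used throughout the paper: from a $z^{2}$-invariant point $U$ of Sato's Grassmannian one builds, by an explicit linear-algebra construction in the spirit of the frames of Theorems~\ref{thm:normalforminf} and~\ref{thm:taufunctioninf}, a maximal isotropic subspace $W_U$ in the UOGM; one checks that $W_U$ is a $4$-reduced point (fixed by $\Lambda^{4}$), and that the BKP Hamiltonian $\sum\nolimits_{j\,\text{odd}}t_jH^{B}_j$ acts on the spinor of $W_U$ the way $\sum\nolimits_{j\,\text{odd}}(t_j/2)H_j$ acts on the vector of $U$ — the factor $\tfrac12$ being exactly the normalization that appears when a single family of neutral fermions is embedded diagonally into the charged ones, so that the charged Hamiltonian $H_j$ restricts to twice the neutral one $H^{B}_j$. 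Reading off $\tau$-functions as matrix elements then gives $\tau(\t_B)=\tau_{\text{KdV}}(\t_B/2)$.

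The hard part is this middle step: matching, at the level of frames and pure spinors, the $z^{2}$-invariant KP datum with the orthogonal-Grassmannian datum and keeping track of the powers of $2$ in the Hamiltonians; the index $4$ in ``$4$-reduction'' is forced by the factor $2$ built into the BKP vertex operator $\exp\!\big(-2\sum\nolimits_{j\,\text{odd}}\tfrac1j k^{-j}\partial_{t_j}\big)$, the same $2$ as in the rescaling. Once it is done, the \emph{converse} of the construction — that every $4$-reduced point of the UOGM comes from a unique $2$-reduced KP point — yields the advertised complete characterization of the KdV orbit inside the BKP hierarchy. As an independent check one can use the explicit formulas of Theorems~\ref{thm:taufunctiongeneric} and~\ref{thm:purespinor}: writing $\tau_{\text{KdV}}(\t_B/2)=\sum\nolimits_\lambda c_\lambda s_\lambda(\t_B/2)$ and re-expanding over the Schur-$Q$ basis $\{Q_\mu(\t_B):\mu\ \text{strict}\}$ of the ring of symmetric functions in the odd power sums, one should obtain a sum $\sum\nolimits_\mu d_\mu Q_\mu(\t_B)$ whose coefficients $d_\mu$ inherit the Cartan--Pl\"ucker relations from the KP Pl\"ucker relations satisfied by the $c_\lambda$, i.e. a genuine BKP $\tau$-function.
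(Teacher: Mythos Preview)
Your proposal splits into two threads. The first (manipulating the Hirota bilinear identities directly, invoking the $k\mapsto-k$ symmetry of the $2$-reduction to produce the missing $k^{-1}$ and the nonzero right-hand side) is essentially Alexandrov's original argument, which the paper cites but deliberately does \emph{not} reproduce; you correctly flag that this route needs an extra identity and do not actually carry it out. The second thread (pass to the spin/neutral-fermion realization, build from a $z^2$-invariant KP point a $4$-reduced point of the UOGM, and read off the factor $\tfrac12$ from the fact that the charged Hamiltonian restricts to twice the neutral one) is the paper's approach in spirit, and your diagnosis of where the $2$ comes from is exactly right.

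Where the paper differs from your sketch is in the level at which the argument is executed. Rather than working with frames $U\mapsto W_U$ on the Grassmannian side, the paper works one floor up, directly with Lie algebras: it writes down explicit Chevalley generators realising $A_1^{(1)}\cong\widehat{\mathfrak{sl}}_2\subset\mathfrak a_\infty$ and $D_2^{(2)}=\widehat{\mathfrak{sl}}_4\cap\mathfrak b_\infty\subset\mathfrak b_\infty$, observes they share the Cartan matrix $\left(\begin{smallmatrix}2&-2\\-2&2\end{smallmatrix}\right)$ and hence are isomorphic, and then simply compares the two boson--fermion correspondences already recorded in the paper: in type $A$ one has $H_{-n}\mapsto nt_n$, while in type $B$ one has $H_{-n}\mapsto\tfrac{n}{2}t_n$. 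That single discrepancy gives, for $g$ in the $D_2^{(2)}$ group and the corresponding $\tilde g\in\widehat{\mathcal{SL}}_2$,
\[
\tau^B_{\text{KdV}}(\t_B;g)=\langle v^{\omega_1},\bar\rho(\exp H_{\text{KdV}}(\t_B))\,\bar\rho(g)v^{\omega_1}\rangle
=\langle v_0,\bar R(\exp H_{\text{KdV}}(\t_B/2))\,\bar R(\tilde g)v_0\rangle=\tau^A_{\text{KdV}}(\t_B/2;\tilde g),
\]
with no residue computations and no frame-matching needed. Your Grassmannian formulation would also work and has the merit of making the ``$4$-reduced orbit'' picture geometric (the paper states this as the characterisation $\nu_4(W)\subset W$ afterwards), but the paper's Lie-algebraic shortcut avoids the ``hard part'' you anticipated: once $A_1^{(1)}\cong D_2^{(2)}$ is exhibited generator-by-generator inside $\mathfrak a_\infty$, the factor $2$ is read off from the two normalisations of $H_{-n}$ rather than chased through a frame construction.
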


We reprove Theorem \ref{thm:KdVinBKP} by showing that KdV hierarchy is nothing but $4$-reduction of the BKP hierarchy. Several things need to be explained here. Geometrically KP hierarchy describes the group orbit of $\mathcal{A}_{\infty}$ through the highest weight vector of its level one representation, and algebraically $l$-reduction of KP hierarchy means we restrict ourselves to the sub-Lie algebra $\widehat{\mathfrak{sl}}_l \subset \mathfrak{a}_{\infty}$ and consider the corresponding subgroup orbit. Of course, in general the Kac-Moody algebra $\widehat{\mathfrak{sl}}_l$ is not a sub-Lie algebra of $\mathfrak{b}_{\infty}$, so the proper sub-Lie algebra of $\mathfrak{b}_{\infty}$ we should consider for $l$-reduction is $\widehat{\mathfrak{sl}}_l \cap \mathfrak{b}_{\infty}$. When $l = 4$, we have $A_1^{(1)} \cong D_2^{(2)} = \widehat{\mathfrak{sl}}_4 \cap \mathfrak{b}_{\infty}$, and the level one representation of $\widehat{\mathfrak{sl}}_2 \subset \mathfrak{a}_{\infty}$ goes to the spin representation of $D_2^{(2)} \subset \mathfrak{b}_{\infty}$. The rescaling of time parameters comes exactly from this correspondence. Equipped with this understanding we can give a complete characterization of the KdV orbits inside the BKP hierarchy (see Section \ref{sec:KdVBKP} for more details). A simple consequence of this analysis takes the following form:
\begin{theorem}
For any sequence of complex numbers $\mathbf{a}=(a_1, a_3, a_5, \dots)$, the following formal series is a $\tau$-function for the KdV hierarchy
\[\tau_{\text{KdV}}({t_1}, {t_3}, \dots) = \sum \limits_{\lambda \in \text{DP}}Q_{\lambda}(\mathbf{a})Q_{\lambda}(2\t_B).\]
\end{theorem}

The rest of the paper is organized as follows. In Section \ref{sec:Background} we collect some background information on the f-KT hierarchy and the KP type hierarchies. In Section \ref{sec:BninDn1} we give a concrete presentation of the Lie algebra $\mathfrak{so}_{2n+1}$ which is more suitable for our purpose. In Section \ref{sec:genericsolution} we give a parameterization for the big Schubert cell of the orthogonal Grassmannian manifold based on the presentation of $\mathfrak{so}_{2n+1}$ we introduced in Section \ref{sec:BninDn1} and with this parameterization we prove Theorem \ref{thm:taufunctiongeneric}. In Section \ref{sec:generalcase} we classify and parameterize all the Schubert cells of the orthogonal Grassmannian manifold, and complete the proof of Theorem \ref{thm:normalforminf}, Theorem \ref{thm:taufunctioninf}, Theorem \ref{thm:purespinor} and derive some other consequences of these results. In Section \ref{sec:KdVBKP}, we reprove Theorem \ref{thm:KdVinBKP}, and completely characterize the KdV orbits inside the BKP hierarchy.

\medskip

\noindent
{\bf Acknowledgements}
This paper is a continuation of our endeavor in \cite{Kodama-Xie2021f-KT} to understand structure of solutions of f-KT hierarchy, the author would like to thank Yuji Kodama for his guidance, collaboration and his interests in this work. He also thanks Xiangke Chang for bringing the reference \cite{Ishikawa-Wakayama1995} to his attention and for the useful discussions at the beginning stage of the present work. He appreciated Chang's invitation and the financial support by State Key Laboratory of Scientific and Engineering Computing during his stay in Chinese Academy of Sciences. The author would like to thank Youjin Zhang's, Xiaomeng Xu's and Aleksandr Yu. Orlov's encouragements and interests in this work. The author also would like to thank Yu Li, Yan Zhou, Alexander Alexandrov, David E. Anderson, Kanehisa Takasaki, Xiaobo Liu, John Harnad and Guo Chuan Thiang for some useful feedbacks and comments. This work is partially supported by the National Key Research and Development Program of China (No. 2021YFA1002000) and by the Boya Postdoctoral Fellowship of Peking University.


\section{The full Kostant-Toda lattice and KP type hierarchies}\label{sec:Background}

\subsection{The full Kostant-Toda lattice in general}\label{sec:f-KThierarchy}
The full Kostant-Toda (f-KT) lattice can be abstractly defined on any simple Lie algebra as follows. Let $\mathfrak{g}$ be a complex simple Lie algebra of rank $n$, $\mathfrak{h}$ a Cartan subalgebra of $\mathfrak{g}$. Choose a set of simple roots $\Pi = \{\alpha_1, \alpha_2, \dots, \alpha_{n}\}$, and denote by ${\Sigma}_{\pm}$ the sets of positive and negative roots respectively and $\Sigma = \Sigma_+ \cup \Sigma_-$. Let $\{H_i, X_i, Y_i\}$ be a Chevalley basis of $\mathfrak{g}$, then
\[
[H_i, H_j]=0,\qquad [H_i, X_j]=C_{i,j}X_j,\qquad [H_i, Y_j]=-C_{i,j}Y_j,
\]
where $C_{i,j}$ is the Cartan matrix of $\frak{g}$. We can take a Chevalley system $(X_{\alpha})_{\alpha \in \Sigma}$ of $(\mathfrak{g}, \mathfrak{h})$ such that $X_i=X_{\alpha_i}$ and $Y_i=Y_{\alpha_i}=X_{-\alpha_i}$ and $C_{i,j}=\alpha_i(H_j)$ (see, e.g. \cite{Bourbaki2005}). We sometimes denote $Y_{\alpha} := X_{-\alpha}$ for $\alpha \in \Sigma_+$.

Let $\mathfrak{n}_{\pm} = \sum \limits_{\alpha \in \Sigma_{\pm}}\mathbb{C}X_{\alpha}$ and $\mathfrak{b}_{\pm} = \mathfrak{h} + \mathfrak{n}_{\pm}$ be the maximal nilpotent subalgebras and Borel subalgebras of $\mathfrak{g}$ and $\mathcal{N}_{\pm}, \mathcal{B}_{\pm}$ the corresponding Lie groups, respectively. Then $\mathfrak{g}$ admits the following decomposition,
\begin{align*}
\mathfrak{g} = \mathfrak{n}_- \oplus \mathfrak{h} \oplus \mathfrak{n}_+ = \mathfrak{n}_- \oplus \mathfrak{b}_+.
\end{align*}

Let $e = \sum \limits_{i = 1}^{n} X_i$, then the Lax matrix $L_{\mathfrak{g}}$ is defined as
\begin{equation}\label{eq:Lax}
L_{\mathfrak{g}}= e + \sum_{i=1}^{n} a_i(\mathbf{t})H_i + \sum_{\alpha\in\Sigma_+}b_{\alpha}(\mathbf{t})Y_\alpha,
\end{equation}
where  $a_i(\mathbf{t})$ and $b_{\alpha}(\mathbf{t})$ are functions of the multi-time variables $\mathbf{t}=(t_{m_k}:k=1,2,\ldots, n)$, and $m_k$ are the Weyl exponents (c.f. \cite{Bourbaki2002}). 
For each time variable, we have the f-KT hierarchy defined by
\begin{equation}\label{eq:fKT}
\frac{\partial L_{\mathfrak{g}}}{\partial t_{m_k}}=[B_{m_k}, L_{\mathfrak{g}}],\qquad \text{with}\quad B_k=\Pi_{\mathfrak{b}_+}\nabla I_{k+1},\end{equation}
where  $\nabla$ is the gradient with respect to the Killing form $K$, i.e. for any $x\in \mathfrak{g}$, $dI_k(x)=K(\nabla I_k,x)$,
and $\Pi_{\mathfrak{b}_+}$ represents the projection from $\mathfrak{g}$ to $\mathfrak{b}_+$ with kernel $\mathfrak{n}_-$.
Here the functions $I_k=I_k(L_{\mathfrak{g}})$ are Chevalley invariants which for example in type $A$ are defined by
\[
I_{k+1}(L_{\mathfrak{g}})=\frac{1}{k+1}\text{tr}(L_{\mathfrak{g}}^{k+1}),\quad\text{which gives}\quad \nabla I_{k+1}=L_{\mathfrak{g}}^k.
\]

It is known that the f-KT hierarchy is completely integrable \cite{Ercolani-Flaschka-Singer1993, Gekhtman-Shapiro1999}, and their solutions can be expressed in terms of the so-called $\tau$-functions which are defined as follows (c.f. \cite{Xie2022}). We denote by $(\rho_i, V^{\omega_i}), 1 \le i \le n$, the $i$-th fundamental representation (finite dimensional) of $\mathfrak{g}$ with highest weight vector $v^{\omega_i}$. Let $\langle \cdot, \cdot \rangle$ be a scalar Hermitian product on $V^{\omega_i}$ so that the weight vectors form an orthonormal basis; moreover, we require that the operators $\rho_i(X_{\alpha})$ and $\rho_i(Y_{\alpha})$ are adjoint to each other. It is known that such kind of scalar Hermitian product always exists (c.f. \cite{Kostant1979}).

\begin{definition}
The $i$-th $\tau$-function of the f-KT hierarchy is defined as
\[\tau_i(\mathbf{t}) = \langle v^{\omega_i}, \exp(\Theta_{L_0}(\mathbf{t})) \cdot v^{\omega_i}\rangle,\]
where for $A \in \mathfrak{g}$, we formally define
\[\Theta_A(\mathbf{t}) := \sum \limits_{i=1}^{n}A^{m_i}t_{m_i}.\]
\end{definition}

Note that with our choice of $m_k$, $\exp(\Theta_{L_0}(\mathbf{t}))$ is an element in a Lie group $\mathcal{G}$ with Lie algebra $\mathfrak{g}$, and the action of this group element on $v^{\omega_i}$ is well defined.

\begin{proposition}[\cite{Xie2021, Xie2022}]\label{prop:diagonal}
We have the following formula for the diagonal elements of the Lax matrix in the f-KT hierarchy
\[a_i(\mathbf{t}) = \frac{\partial}{\partial t_1}\ln \tau_i(\mathbf{t}), \qquad \mathbf{t} \ll 1,\]
and all the other coefficients $b_{\alpha}(\mathbf{t})$ are uniquely determined by $a_i(\mathbf{t})$.
\end{proposition}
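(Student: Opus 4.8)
The plan is to solve the f-KT hierarchy by the Adler--Kostant--Symes factorization method and then read off the diagonal coefficients from the resulting $\tau$-functions by a highest-weight computation. Since $\Theta_{L_0}(\mathbf{t})$ is a polynomial in $L_0$, all its summands commute, so $g(\mathbf{t}):=\exp(\Theta_{L_0}(\mathbf{t}))$ commutes with $L_0$; and because the first Weyl exponent is $m_1=1$ we have $\partial_{t_1}g = g\,L_0$. For $\mathbf{t}\ll 1$ the element $g(\mathbf{t})$ lies in the big cell $\mathcal{N}_-\mathcal{B}_+$, hence admits a Gauss factorization $g(\mathbf{t})=g_-(\mathbf{t})\,g_+(\mathbf{t})$ with $g_-\in\mathcal{N}_-$, $g_+\in\mathcal{B}_+$. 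A direct differentiation shows that $L(\mathbf{t}):=g_+L_0g_+^{-1}=g_-^{-1}L_0g_-$ (the two agree because $g$ commutes with $L_0$) satisfies $\partial_{t_{m_k}}L=[B_{m_k},L]$ with $B_{m_k}=(\partial_{t_{m_k}}g_+)g_+^{-1}=\Pi_{\mathfrak{b}_+}\nabla I_{m_k+1}$ and $L(\mathbf 0)=L_0$; hence $L(\mathbf{t})$ is exactly the Lax matrix \eqref{eq:Lax}. This step is what links the representation-theoretic $\tau_i$ to the dynamical $a_i,b_\alpha$.

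Next I would identify the action of the Borel factor on the highest weight vector. Since $\mathfrak{n}_+v^{\omega_i}=0$ and the torus merely rescales $v^{\omega_i}$, every element of $\mathcal{B}_+$ preserves the highest weight line, so $\rho_i(g_+)v^{\omega_i}=c_i(\mathbf{t})\,v^{\omega_i}$ for a scalar $c_i$. As $g_-\in\mathcal{N}_-$ sends $v^{\omega_i}$ to $v^{\omega_i}$ plus strictly lower weight vectors and the weight vectors are orthonormal, $\langle v^{\omega_i},\rho_i(g_-)v^{\omega_i}\rangle=1$; therefore
\[
\tau_i(\mathbf{t})=\langle v^{\omega_i},\rho_i(g_-)\rho_i(g_+)v^{\omega_i}\rangle=c_i(\mathbf{t}),\qquad\text{i.e.}\qquad \rho_i(g_+)v^{\omega_i}=\tau_i(\mathbf{t})\,v^{\omega_i}.
\]

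The core computation then differentiates the definition of $\tau_i$ in $t_1$. Using $\partial_{t_1}g=g\,L_0$ and substituting $L_0=g_+^{-1}Lg_+$ gives $gL_0=g_-Lg_+$, so
\[
\partial_{t_1}\tau_i=\langle v^{\omega_i},\rho_i(g_-)\,\rho_i(L)\,\rho_i(g_+)v^{\omega_i}\rangle=\tau_i(\mathbf{t})\,\langle v^{\omega_i},\rho_i(g_-)\,\rho_i(L)\,v^{\omega_i}\rangle.
\]
I then evaluate the remaining bracket by weight bookkeeping on $L=e+\sum_j a_jH_j+\sum_\alpha b_\alpha Y_\alpha$: the term $\rho_i(e)v^{\omega_i}=0$; the Cartan term gives $\sum_j a_j\,\omega_i(H_j)\,v^{\omega_i}=a_i\,v^{\omega_i}$ because $\omega_i(H_j)=\delta_{ij}$; and each $\rho_i(Y_\alpha)v^{\omega_i}$ has weight strictly below $\omega_i$. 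Since $\rho_i(g_-)$ preserves or lowers weights, applying $\langle v^{\omega_i},-\rangle$ annihilates every strictly-lower-weight contribution and leaves $a_i\,\langle v^{\omega_i},\rho_i(g_-)v^{\omega_i}\rangle=a_i$. Hence $\partial_{t_1}\tau_i=a_i\,\tau_i$, that is $a_i(\mathbf{t})=\partial_{t_1}\ln\tau_i(\mathbf{t})$ for $\mathbf{t}\ll 1$.

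For the last assertion, I would show that every entry of $L$ is recoverable from the $\tau_i$'s. One route notes $\tau_i(\mathbf 0)=\langle v^{\omega_i},v^{\omega_i}\rangle=1$, so the $\tau_i$ (and the analogous lower matrix coefficients of $g$) determine $g_+$ through its generalized minors and hence $L=g_+L_0g_+^{-1}$ completely; a more hands-on route matches the coefficients of $Y_\alpha$ in the first Lax flow $\partial_{t_1}L=[\,e+\sum_j a_jH_j,\ L\,]$ and solves for the $b_\alpha$ recursively in the height of $\alpha$, each step expressing $b_\alpha$ through the $a_j$, the lower-height $b_\beta$ and their $t_1$-derivatives. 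I expect this determination of the off-diagonal data to be the main obstacle: the factorization identity and the weight-counting above are essentially forced, whereas proving that the assignment $\{a_i\}\mapsto\{b_\alpha\}$ is well defined requires either the full set of bilinear ($\tau$-function) identities for the off-diagonal entries or a careful induction controlling the recursion, and it is there that the normalization $\tau_i(\mathbf 0)=1$ and the nondegeneracy of the Chevalley pairing must be used with care.
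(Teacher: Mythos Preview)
Your proof is correct and follows precisely the approach the paper indicates: the paper does not prove Proposition~\ref{prop:diagonal} in the text but cites \cite{Xie2021,Xie2022} and notes in the subsequent remark that it ``is proved through the LU-factorization of $\exp(\Theta_{L_0}(\mathbf{t}))$,'' which is exactly your Gauss factorization $g=g_-g_+$ combined with the highest-weight computation $\rho_i(g_+)v^{\omega_i}=\tau_i\,v^{\omega_i}$ and the identity $\omega_i(H_j)=\delta_{ij}$. Your treatment of the second assertion (the recursive determination of the $b_\alpha$ from the $t_1$-flow) is also the standard one and adequate for the statement as phrased.
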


\begin{remark}
Proposition \ref{prop:diagonal} is proved through the LU-factorization of $\exp(\Theta_{L_0}(\mathbf{t}))$. This factorization can always be performed when $\mathbf{t}$ is small enough as $\exp(\Theta_{L_0}(\mathbf{t}))$ is close to the identity matrix, thus lying in the big Bruhat cell. If some of the $\tau$-functions vanish at a fixed multi-time $\mathbf{t} = \mathbf{t}_*$, then the LU-factorization fails and the f-KT flows enter into a smaller Bruhat cell, i.e. $\exp(\Theta_{L_0}(\mathbf{t})) \in \mathcal{N}_- \dot{w} \mathcal{B}_+$ for some $w \in \mathfrak{W}$ in the Weyl group of $\mathfrak{g}$ such that $w \ne \text{id}$, where $\dot{w} \in \mathcal{G}$ denotes a representative of $w \in \mathfrak{W}$. The set of times $\mathbf{t}_*$ where some of the $\tau_k$ vanish is called the \emph{Painlev\'e divisor}. 
\end{remark}

The local behavior of $\tau$-functions around a Painlev\'e divisor where $\tau_k(\mathbf{t}_*) = 0$ can be analyzed as follows (c.f. \cite{Kodama-Xie2021f-KT}). Setting $\mathbf{t} \to \mathbf{t} + \mathbf{t}_*$, then we have
\[\exp(\Theta_{L_0}(\mathbf{t}_*)) = N_*\dot{w}_*B_* \qquad \text{for some} \quad w_* \in \mathfrak{W},\]
where $N_* \in \mathcal{N}_-, B_* \in \mathcal{B}_+$ and ${w} \in \mathfrak{W}$. It is shown in \cite{Xie2022} that for any point $g\cdot\mathcal{B}_+ \in \mathcal{G} \slash \mathcal{B}_+$ in the flag variety, there exists an $L_0$ such that $N_*\dot{w}_*B_*$ is a representative of $g \cdot \mathcal{B}$ in $\mathcal{G}$.

In summary, a general $\tau$-function has the following form:
\begin{equation}\label{eq:singulartau}
\tau_i(\mathbf{t}) = \langle v^{\omega_i}, \exp(\Theta_{L_0}(\mathbf{t})) N_*\dot{w}_* B_* v^{\omega_i}\rangle, \qquad 1 \le i \le n.
\end{equation}

Now we try to simplify expression \eqref{eq:singulartau} for the $\tau$-functions. First we note the following theorem of Kostant:
\begin{proposition}[\cite{Kostant1978}]\label{thm:Kostantsection}
There exists an $n$-dimensional linear subspace $\mathfrak{s} \subset \mathfrak{b}_-$ such that elements in the affine subspace $e + \mathfrak{s}$ are regular. The map
\[\begin{array}{rcl}
 \mathcal{N}_- \times (e + \mathfrak{s}) & \to &  e + \mathfrak{b}_-\\
 (N, x) & \mapsto & \text{Ad}_{N}x
\end{array}\]
is an isomorphism of affine varieties.
\end{proposition}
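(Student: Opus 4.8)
The plan is to take for $\mathfrak{s}$ the Kostant slice attached to a principal $\mathfrak{sl}_2$-triple, to reduce the key computation to elementary $\mathfrak{sl}_2$-representation theory, and then to upgrade a pointwise statement at one distinguished point to a global isomorphism by means of a contracting $\mathbb{C}^\times$-action. Concretely, by the Jacobson--Morozov theorem the principal nilpotent $e = \sum_{i} X_i$ sits in an $\mathfrak{sl}_2$-triple $\{e, h, f\}$ with $h \in \mathfrak{h}$ and $f \in \mathfrak{n}_-$; the operator $\ad h$ is twice the principal grading operator, so $\mathfrak{g} = \bigoplus_{j} \mathfrak{g}_j$ with $e \in \mathfrak{g}_1$, $\mathfrak{n}_{\pm} = \bigoplus_{\pm j > 0}\mathfrak{g}_j$, $\mathfrak{h} = \mathfrak{g}_0$, and all $\ad h$-weights even. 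Set $\mathfrak{s} := \mathfrak{g}^f = \ker(\ad f)$. Decomposing $\mathfrak{g}$ into irreducibles under the principal $\mathfrak{sl}_2$ as $\bigoplus_{i=1}^{n} V_{2m_i}$ with $m_i$ the exponents, one sees that $\mathfrak{s}$ is spanned by the lowest weight vectors, so $\dim\mathfrak{s} = n$ and $\mathfrak{s} \subset \mathfrak{n}_- \subset \mathfrak{b}_-$.

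The first key point is a linear-algebra fact: in each irreducible $\mathfrak{sl}_2$-module $\ad e$ is injective away from the top weight space, and since every weight occurring in $\mathfrak{n}_-$ lies strictly below the top (because $m_i \ge 1$), the map $\ad(e)\colon \mathfrak{n}_- \to \mathfrak{b}_-$ is injective; a weight-by-weight count inside each $V_{2m_i}$ then gives the decomposition $\mathfrak{b}_- = \ad(e)(\mathfrak{n}_-) \oplus \mathfrak{s}$. Equivalently, the linear map $\mathfrak{n}_- \oplus \mathfrak{s} \to \mathfrak{b}_-$, $(n, s) \mapsto [n, e] + s$, is an isomorphism --- and this is precisely the differential at $(\mathrm{id}, e)$ of the morphism $\Phi\colon \mathcal{N}_- \times (e + \mathfrak{s}) \to e + \mathfrak{b}_-$, $(N, x) \mapsto \mathrm{Ad}_N x$ (note $\mathrm{Ad}_{\mathcal{N}_-}$ preserves $e + \mathfrak{b}_-$ because it preserves $\mathfrak{b}_-$ and fixes the class of $e$ modulo $\mathfrak{b}_-$).

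To globalize, introduce the one-parameter subgroup $\lambda\colon \mathbb{C}^\times \to \mathcal{G}$ with $\mathrm{Ad}_{\lambda(t)}|_{\mathfrak{g}_j} = t^{j}$, so $\mathrm{Ad}_{\lambda(t)} e = t\, e$, and let $\mathbb{C}^\times$ act by $t \cdot N = \lambda(t) N \lambda(t)^{-1}$ on $\mathcal{N}_-$ and by $t \cdot x = t^{-1}\mathrm{Ad}_{\lambda(t)} x$ on $e + \mathfrak{b}_-$, which restricts to $e + \mathfrak{s}$. On each of $\mathcal{N}_- \cong \mathbb{A}^{\dim\mathfrak{n}_-}$, $e + \mathfrak{s} \cong \mathbb{A}^n$ and $e + \mathfrak{b}_- \cong \mathbb{A}^{\dim\mathfrak{b}_-}$ this is a linear action with all weights of one fixed sign, contracting the variety to its unique fixed point ($\mathrm{id}$, resp.\ $e$), and $\Phi$ is equivariant since $\Phi(t\cdot N, t \cdot x) = t^{-1}\mathrm{Ad}_{\lambda(t)}\mathrm{Ad}_N x = t \cdot \Phi(N,x)$. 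As $d\Phi$ at the fixed point is the isomorphism of the previous paragraph, $\Phi$ is smooth there, hence dominant, so $\Phi^{*}$ is injective on coordinate rings; and since the $\mathbb{C}^\times$-weights are bounded away from $0$ with a fixed sign and $d\Phi$ identifies the graded cotangent spaces at the fixed point, a graded Nakayama argument shows $\Phi^{*}$ is also surjective. Hence $\Phi$ is an isomorphism of affine varieties. Finally, regularity of the points of $e + \mathfrak{s}$ follows by the same mechanism applied to the adjoint quotient $\chi\colon \mathfrak{g} \to \mathfrak{g}/\!\!/\mathcal{G} \cong \mathbb{A}^n$: $\chi|_{e+\mathfrak{s}}$ is $\mathbb{C}^\times$-equivariant and $d\chi_e$ is surjective (the principal nilpotent is regular), so the graded argument makes $\chi|_{e+\mathfrak{s}}$ an isomorphism, whence $d\chi_x$ is surjective --- i.e.\ $x$ is regular --- for every $x \in e + \mathfrak{s}$.

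The genuinely delicate step is the globalization: passing from ``$d\Phi$ is an isomorphism at the single point $(\mathrm{id}, e)$'' to ``$\Phi$ is an isomorphism.'' This is not a formal consequence of \'etaleness (\'etale self-maps of affine space need not be isomorphisms), and the mechanism that makes it work --- the contracting $\mathbb{C}^\times$-action coming from the principal grading together with the specific choice $\mathfrak{s} = \mathfrak{g}^f$ --- is exactly the content of Kostant's argument; everything else reduces to Jacobson--Morozov and $\mathfrak{sl}_2$-weight bookkeeping.
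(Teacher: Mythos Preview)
The paper does not give its own proof of this proposition; it is quoted verbatim as a theorem of Kostant with a citation to \cite{Kostant1978} and used as a black box. Your argument is correct and is essentially Kostant's original one: take $\mathfrak{s}=\mathfrak{g}^f$ for a principal $\mathfrak{sl}_2$-triple $\{e,h,f\}$, use $\mathfrak{sl}_2$-weight bookkeeping to identify the differential of $\Phi$ at $(\mathrm{id},e)$ with the isomorphism $\mathfrak{n}_-\oplus\mathfrak{s}\to\mathfrak{b}_-$, $(n,s)\mapsto [n,e]+s$, and then globalize via the contracting $\mathbb{C}^\times$-action coming from the principal grading. There is nothing to compare against in the present paper beyond noting that your choice of $\mathfrak{s}$ is consistent with the paper's Remark that in type $A$ one may take $e+\mathfrak{s}$ to be the companion matrices.
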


\begin{remark}
When $\mathfrak{g} = \mathfrak{sl}_{n+1}(\mathbb{C})$, the choice of $\mathfrak{s}$ may be made so that $e + \mathfrak{s}$ is the affine space of traceless companion matrices. 
\end{remark}

Let $\mathcal{F}_{\Omega}$ be the isospectral variety consisting of Lax matrices with fixed Chevalley invariants $\Omega = \{I_1, \dots, I_{n}\}$, i.e.
\[\mathcal{F}_{\Omega} := \{L \in e + \mathfrak{b}_- \ |\ L \text{ has Chevalley invariants $\Omega$}\}.\]

With any fixed choice of $\mathfrak{s}$, Proposition \ref{thm:Kostantsection} says that for any $L \in \mathcal{F}_{\Omega}$ there exists a unique $N \in \mathcal{N}_-$ and $C_{\Omega} \in e + \mathfrak{s}$ such that 
\[L = N^{-1} C_{\Omega}N,\]
where $C_{\Omega}$ records the spectral data of $L$ only.

Applying Proposition \ref{thm:Kostantsection} to $L_0$, that is assuming $L_0 = N_0^{-1}C_{\Omega}N_0$ and noting that
\[b_* \cdot v^{\omega_i} = d_iv^{\omega_i},\]
where $d_i$ is a constant, we then have the following expression for $\tau$-functions
\begin{align*}
\tau_i(\mathbf{t}) & = \langle v^{\omega_i}, \exp(\Theta_{L_0}(\mathbf{t})) N_*\dot{w}_* B_* v^{\omega_i}\rangle \\
& = \langle v^{\omega_i}, n_0^{-1}\exp(\Theta_{C_{\Omega}}(\mathbf{t})) N_0N_*\dot{w}_* B_* v^{\omega_i}\rangle \\
& = d_i \langle v^{\omega_i}, \exp(\Theta_{C_{\Omega}}(\mathbf{t})) \tilde{N}\dot{w}_* v^{\omega_i}\rangle \qquad 1 \le i \le n,
\end{align*}
where $\tilde{N} = N_0N_* \in \mathcal{N}_-$. One can ignore the overall constants $d_i$'s since the solution of the f-KT hierarchy does not depend on $d_i$. We also note that when all the spectral data of $L_0$ are trivial, the general rational solutions of the f-KT hierarchy are given by the following polynomial $\tau$-functions
\begin{equation}\label{eq:polytau}
\tau_i({\t; g}) = \langle v^{\omega_i}, \exp(\Theta_{e}(\t))g v^{\omega_i}\rangle \qquad g \in \mathcal{G}, \quad 1 \le i \le n.
\end{equation}

\subsection{The full Kostant-Toda lattice in type $A$ and type $B$}\label{sec:f-KTAB}
Now we specialize the above general theory to type $A$ and type $B$ Lie algebras. Let $\ell = n + 1$ in type $A$ and $\ell = 2n + 1$ in type $B$, respectively. With the Chevalley system chosen as in \cite{Kodama-Xie2021f-KT}, 
the Lax matrices $L_A \in \mathfrak{sl}_{n + 1}$  and $L_B \in \mathfrak{so}_{2n + 1}$ have the following Hessenberg form 
\begin{align}
& L =
\begin{pmatrix}
a_{1, 1} & 1 & 0 & \cdots & 0 & 0  \\
a_{2,1}& a_{2, 2} &1& \cdots & 0 & 0\\
a_{3, 1} & a_{3, 2} & a_{3, 3} & \ddots & \vdots & \vdots\\
\vdots &\vdots &\ddots &\ddots & 1 & \vdots \\
a_{\ell-1,1}& a_{\ell-1,2} & \cdots &\cdots&a_{\ell-1, \ell-1} &1\\
a_{\ell,1} & a_{\ell,2}&\cdots&\cdots & a_{\ell, \ell-1} & a_{\ell, \ell}
\end{pmatrix},\nonumber
\end{align}
where
\begin{align}
& \begin{array}{l}
a_{i, i} = a_{i} - a_{i-1} \text{ with } a_0 = a_{n + 1} = 0 
\end{array} \text{ if } L = L_A, \\
& \left\{\begin{array}{l}
a_{i, i} = a_i - a_{i - 1} \quad (1 \le i \le n -1) \text{ with } a_0 = 0, \quad a_{n, n} = 2a_{n} - a_{n - 1}\\
a_{i + k, k} = (-1)^{i+1}a_{2n+2-k, 2n+2-i-k},
\end{array}\right. \text{ if } L = L_B.
\end{align}

For type $A$ and type $B$ Lie algebras of rank $n$, the Weyl exponents $m_k$ are $1, 2, \dots, n$ and $1, 3, 5, \dots, 2n -1$, respectively. So the f-KT hierarchy in type $A$ is defined as
\[\frac{\partial L_A}{\partial t_k} = [B_k, L_A], \qquad B_{k} := \Pi_{\mathfrak{b}^+}\nabla (I_{k+1}) = (L_A^{k})_{\ge 0}. \]
Note that the invariants $I_{k+1} = \frac{1}{k + 1}\text{tr}(L_B^{k+1})$ vanish when $k$ is even. Then the $B$-type f-KT hierarchy is defined only for the odd times $\mathbf{t}_B = (t_1, t_3, t_5, \dots, t_{2n-1})$, and each invariant $I_{2k+2}$ defines the Lax equation,
\[\frac{\partial L_B}{\partial t_{2k + 1}} = [B_{2k + 1}, L_B], \qquad B_{2k+1} := \Pi_{\mathfrak{b}^+}\nabla (I_{2k+2}) = (L_B^{2k+1})_{\ge 0}.\]

Let $V := \mathbb{C}^{\ell}$. 
To study the fundamental representations of $\mathfrak{sl}_{\ell}$ we consider the full wedge space
\[F = \bigoplus \limits_{i = 0}^{n + 1} \bigwedge^i V,\]
formed by the linear combination of exterior product of elements in $V$. Let $E_{i,j}$ denote the matrix with $1$ at the $(i, j)$ entry and $0$ at all the other entries, then these elements form a basis for $\mathfrak{gl}_{\ell}$. The natural action of $\mathfrak{gl}_{\ell }$ on $V$ induces an action on $F$ as follows
\[r(E_{i, j})(e_{i_1} \wedge e_{i_2} \wedge \dots \wedge e_{i_k}) = (E_{i, j}e_{i_1}) \wedge e_{i_2} \wedge \dots \wedge e_{i_k} + e_{i_1} \wedge (E_{i, j}e_{i_2}) \wedge \dots + \dots + e_{i_1} \wedge e_{i_2} \wedge \dots \wedge (E_{i, j}e_{i_k}).\]
$F$ then decomposes into direct sum of irreducible modules $F^{(k)} = \bigwedge^kV, 0 \le k \le \ell$ under this action of $\mathfrak{gl}_{\ell}$. Each $F^{(k)}$ is still irreducible when considered as an $\mathfrak{sl}_{\ell}$ module, and serves as the $k$-th fundamental module of $\mathfrak{sl}_{\ell}$ for $1 \le k \le n$.

 Accordingly, the $\tau$-functions are given by
\[\tau^A_i(\t_A) := \langle e_1 \wedge e_2 \wedge \dots \wedge e_i, e^{\Theta_{L_A^0}(\t_A)}e_1 \wedge e_2 \wedge \cdots \wedge e_i \rangle, \qquad 1 \le i \le n,\]
where $(e_i)_{1 \le i \le \ell}$ is the standard basis of $V$, i.e. $e_i$ is a column vector with $1$ at the $i$-th row and $0$ elsewhere, and $\langle \cdot, \cdot \rangle$ is the standard inner product on the wedge space $\bigwedge^i V, 1 \le i \le n $.

Let $I = \{i_1 < i_2 < \dots < i_k\}, J = \{j_1 < j_2 < \dots < j_k\} \subset [\ell] = \{1, 2, \dots, \ell\}$ be two index sets, then we define $\Delta_{I, J}(g) := \langle e_{i_1} \wedge \dots \wedge e_{i_k}, g \cdot e_{j_1} \wedge \dots \wedge e_{j_k}\rangle$. We have

\begin{theorem}[\cite{Kodama-Xie2021f-KT}, \cite{Xie2021}, \cite{Kodama-Williams2015}]
For $g \in \mathcal{SL}_{\ell}$, the $k$-th polynomial $\tau$-function in type $A$ has the following form
\begin{align}\label{eq:SchurexpansionA}
\tau^A_k(\t_A; g) & = \langle e_1 \wedge e_2 \wedge \dots \wedge e_k, e^{\Theta_{e}(\t_A)}g \cdot e_1 \wedge e_2 \wedge \cdots \wedge e_k \rangle \nonumber\\
& = \sum \limits_{I \in \binom{[\ell]}{k}}\Delta_{I, [k]}(g)S_I(\t)
\end{align}
where $S_{I}(\t)$ is the Schur function associated with the index set $I = \{i_1 < i_2 < \cdots < i_k\} \subset [\ell] = \{1, 2, \dots, \ell\}$.
\end{theorem}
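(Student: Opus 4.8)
The plan is to reduce the identity to a single computation inside $\bigwedge^k V$, resting on two structural facts. First, the one-parameter family $\exp(\Theta_e(\t_A))$ acts on $V=\C^\ell$ as an upper-triangular ``shift'' whose entries are elementary Schur polynomials. Second, because $\bigwedge^k V$ is obtained by exponentiating the derivation action $r$, any group element $h$ acts on it as the plain $k$-th exterior power, $h\cdot(v_1\wedge\cdots\wedge v_k)=(hv_1)\wedge\cdots\wedge(hv_k)$; this is the identity $\exp(r(A))=\bigwedge^k\exp(A)$. Granting these, the result follows from two applications of Cauchy--Binet together with orthonormality of the monomial basis.

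First I would pin down the shift matrix. With the Chevalley basis of Section~\ref{sec:f-KTAB} we have $e=\sum_{i=1}^nX_i=\sum_{i=1}^nE_{i,i+1}$, so $e\cdot e_j=e_{j-1}$ (with $e_0:=0$) and hence $e^m e_j=e_{j-m}$. Since the type-$A$ Weyl exponents are $1,\dots,n$, we get $\Theta_e(\t_A)=\sum_{k=1}^n t_k e^k$, and introducing the elementary Schur polynomials $p_m(\t_A)$ through $\exp(\sum_{k\ge1}t_k z^k)=\sum_{m\ge0}p_m(\t_A)z^m$ yields $\exp(\Theta_e(\t_A))=\sum_{m\ge0}p_m(\t_A)\,e^m$, a finite sum as $e$ is nilpotent. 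Consequently the $(i,j)$-entry of $\exp(\Theta_e(\t_A))$ in the standard basis is $p_{j-i}(\t_A)$ (with $p_{<0}=0$).

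Next I would run Cauchy--Binet twice. Writing $e_J:=e_{j_1}\wedge\cdots\wedge e_{j_k}$ for $J=\{j_1<\cdots<j_k\}$ and expanding the highest weight vector after acting by $g$ gives $g\cdot e_{[k]}=\sum_J\Delta_{J,[k]}(g)\,e_J$, since $\Delta_{J,[k]}(g)=\langle e_J,g\cdot e_{[k]}\rangle$. Applying $\exp(\Theta_e(\t_A))$ to each $e_J$ and expanding the resulting wedge produces, again by Cauchy--Binet, $\exp(\Theta_e(\t_A))\,e_J=\sum_I\det\!\big(p_{j_b-i_a}(\t_A)\big)_{a,b=1}^k\,e_I$. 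Pairing with the covector $e_{[k]}$ and using orthonormality annihilates every term except $I=[k]$, whence
\[
\langle e_{[k]},\exp(\Theta_e(\t_A))\,e_J\rangle=\det\!\big(p_{j_b-a}(\t_A)\big)_{a,b=1}^k .
\]
Assembling the two expansions gives $\tau^A_k(\t_A;g)=\sum_J\Delta_{J,[k]}(g)\,\det(p_{j_b-a}(\t_A))_{a,b=1}^k$.

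It then remains to recognize the surviving $k\times k$ determinant as the Schur function. Under the standard dictionary sending the $k$-subset $J$ to the partition $\lambda$ with $\lambda_a=j_{k+1-a}-(k+1-a)$, a reversal of rows and columns turns $\det(p_{j_b-a})$ into the Jacobi--Trudi determinant $\det(p_{\lambda_a-a+b})$, which is precisely $S_J(\t_A)$; substituting back yields the claimed $\sum_I\Delta_{I,[k]}(g)\,S_I(\t)$. The work here is bookkeeping rather than difficulty: the points needing care are the verification that both $g$ and $\exp(\Theta_e(\t_A))$ act as genuine exterior powers (so that Cauchy--Binet applies verbatim), the nilpotent-shift identification of $e$, and the alignment of the index-set-to-partition convention so that $\det(p_{j_b-a})$ matches the intended $S_J$. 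No deeper obstruction arises; the entire content is the two Cauchy--Binet expansions collapsed by orthonormality together with the Jacobi--Trudi recognition.
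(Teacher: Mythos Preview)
Your argument is correct and is precisely the standard route to this identity: expand $g\cdot e_{[k]}$ in the Pl\"ucker basis, act by the upper-triangular shift matrix $\exp(\Theta_e(\t_A))$ whose entries are the one-row Schur polynomials $p_m$, pair with $e_{[k]}$, and recognize the surviving minor $\det(p_{j_b-a})$ as the Jacobi--Trudi determinant for $S_J$. The bookkeeping points you flag (exterior-power action, nilpotency of $e$, sign under row/column reversal) are all fine.

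Note, however, that the paper does not supply its own proof of this theorem: it is quoted from \cite{Kodama-Xie2021f-KT, Xie2021, Kodama-Williams2015} and used as background. So there is no in-paper argument to compare against; your proof is the expected one and would match what appears in those references.
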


\begin{remark}\label{rem:grouporbit}
Here we make a remark which is valid for f-KT hierarchy on all simple Lie algebras but is more easily seen in type $A$. Note that the set of coefficients $(\Delta_{I, [k]}(g))$ in the expansion \eqref{eq:SchurexpansionA} is a set of coordinates for the element $[g \cdot v^{\omega_k}]$ in the projective space $\mathbb{P}(V^{\omega_k})$. From equation \eqref{eq:SchurexpansionA} we see that each f-KT flow is tangent to the group orbit of $\mathcal{G} \cdot v^{\omega_k}$ in $\mathbb{P}(V^{\omega_k})$, and it deforms the above coordinates of $[g \cdot v^{\omega_k}]$ thus serves as a dynamics on the corresponding Grassmannian manifold. Putting all $\tau_k(\t)$'s together, f-KT flows give a dynamical system on the flag variety. It actually can be shown that the f-KT hierarchy completely characterizes these group orbits. In the type $A$ case, this means the following: $f_k(\t) = \sum_{I \in \binom{[n]}{k}}c_{I}S_{I}(\t), 1 \le k \le n$ are $\tau$-functions of the f-KT hierarchy if and only if there exists a $g \in \mathcal{SL}_{\ell}$ such that $c_{k, \lambda} = \Delta_{I, [k]}(g)$, equivalently if and only if $c_{I}$ satisfy all the relevant Pl\"ucker relations (c.f. \cite{Marsh-Rietsch2004}). 
\end{remark}

For type $B$ Lie algebras, from now on we shift the index of $L_B$ by $n$ to accommodate the usual convention in the BKP hierarchy. That is we number the rows and columns in $L_B$ from top to bottom and left to right, respectively, by $(-n, -n + 1, \dots, -1, 0, 1, \dots, n -1, n)$. With this convention, the last $n - 1$ fundamental representations of $\mathfrak{so}_{2n + 1}$ still come from the action of $\mathfrak{so}_{2n + 1}$ on $\bigwedge^i V$ induced from the natural action of $\mathfrak{so}_{2n + 1}$ on $V$, but the first one comes from the so-called spin representation $(\rho, S)$ (see Section \ref{sec:Spin}).
We formally define 
\[\Theta_{L^0_B}(\mathbf{t}_B) := \sum \limits_{k = 1}^{n}(L^0_B)^{2k-1}t_{2k-1}, \qquad \t_B = (t_1, t_3, \dots, t_{2n-1}).\]
Then the $\tau$-functions are given by
\begin{align*}
& \tau^B_i(\mathbf{t}_B) := \langle e_{-n} \wedge e_{-n + 1} \wedge \cdots \wedge e_{-i}, e^{\Theta_{L^0_B}(\mathbf{t}_B)}e_{-n} \wedge e_{-n + 1} \wedge \cdots \wedge e_{-i}\rangle, \qquad 2 \le i \le n;\\
& \tau^B_{1}(\mathbf{t}_B) := \langle v^{\omega_1}, \rho({e^{\Theta_{L^0_B}(\mathbf{t}_B)}}) \cdot v^{\omega_1} \rangle,
\end{align*}
where $v^{\omega_1} \in S$ is the highest weight vector. 

For most part of the present paper, we will focus on the polynomial $\tau^B_1(\t_B; g)$ and we will denote it by $\tau_{\text{BKT}}(\t_B; g)$ henceforth, that is
\begin{equation}\label{eq:fKTtauB}
\tau_{\text{BKT}}(\t_B; g) = \langle v^{\omega_1}, \rho(\exp(\Theta_{e}(\t_B)g)) \cdot v^{\omega_1}\rangle, \qquad g \in \mathcal{SO}_{2n+1}.
\end{equation}

We have the following alternative expression for $\tau^2_{\text{BKT}}(\t_B; g)$ relating it with $\tau$-function for type $A$ f-KT hierarchy.
\begin{theorem}[\cite{Kodama-Xie2021f-KT}]
For any $g \in \mathcal{SO}_{2\ell + 1}(\mathbb{C})$, we have
\begin{align*}
\tau_{\text{BKT}}^2(\t_B; g) & = \langle e_{-n} \wedge e_{-n + 1} \wedge \dots \wedge e_{-1}, \exp(\Theta_e(\t_B)) {g} e_{-n} \wedge e_{-n + 1} \wedge \dots \wedge e_{-1}\rangle\\
& = \langle e_{-n} \wedge e_{-n + 1} \wedge \dots \wedge e_{-1} \wedge e_0, \exp(\Theta_e(\t_B)) {g} e_{-n} \wedge e_{-n + 1} \wedge \dots \wedge e_{-1} \wedge e_0\rangle.
\end{align*}
\end{theorem}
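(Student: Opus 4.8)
The statement to prove is the alternative expression relating $\tau^2_{\text{BKT}}(\t_B; g)$ to a type $A$ $\tau$-function, namely that
\[
\tau_{\text{BKT}}^2(\t_B; g) = \langle e_{-n} \wedge \cdots \wedge e_{-1}, \exp(\Theta_e(\t_B)) g\, e_{-n} \wedge \cdots \wedge e_{-1}\rangle = \langle e_{-n} \wedge \cdots \wedge e_{-1} \wedge e_0, \exp(\Theta_e(\t_B)) g\, e_{-n} \wedge \cdots \wedge e_{-1} \wedge e_0\rangle.
\]

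The plan is as follows. The equality of the squared spin-representation matrix element with the first wedge-product expression (the one landing in $\bigwedge^n V$) is the classical statement that for $\mathfrak{so}_{2n+1}$ the square of the spin representation (or rather $S \otimes S$) decomposes as a sum of the fundamental exterior power modules, and that the relevant component picked out by the highest weight vectors is $\bigwedge^n V = V^{\omega_2}$; this is presumably already established in the paper's discussion of the spin representation $(\rho, S)$ in Section~\ref{sec:Spin}, or is invoked as a known fact, so I would treat the first equality as given/standard and focus the argument on the second equality. So the real content to verify is that appending $e_0$ to the wedge does not change the value:
\[
\langle e_{-n} \wedge \cdots \wedge e_{-1}, \, h\, e_{-n} \wedge \cdots \wedge e_{-1}\rangle = \langle e_{-n} \wedge \cdots \wedge e_{-1} \wedge e_0, \, h\, e_{-n} \wedge \cdots \wedge e_{-1} \wedge e_0\rangle
\]
for $h = \exp(\Theta_e(\t_B))\, g$ with $g \in \mathcal{SO}_{2n+1}$.

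The key observation I would use is that $h \in \mathcal{SO}_{2n+1}(\C)$: indeed $g$ is orthogonal by hypothesis, and $e = \sum_{i} X_i \in \mathfrak{so}_{2n+1}$, so $\exp(\Theta_e(\t_B)) = \exp(\sum_k e^{2k-1} t_{2k-1})$ lies in $\mathcal{SO}_{2n+1}$ as well (a product of exponentials of elements of the Lie algebra). Now let $v_- := e_{-n} \wedge \cdots \wedge e_{-1} \in \bigwedge^n V$ and write $h v_- = \sum_{|I| = n} \Delta_{I, [n]}(h)\, e_I$ where $e_I$ ranges over the standard basis of $\bigwedge^n V$ and $[n]$ here denotes the index set $\{-n, \dots, -1\}$ in the shifted numbering. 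Then the left-hand side is $\Delta_{[n],[n]}(h)$ and, using $r(h)(v_- \wedge e_0) = (h v_-) \wedge (h e_0)$, the right-hand side is $\langle v_- \wedge e_0, (hv_-)\wedge(he_0)\rangle$. Expanding, this picks up a sum over index sets $I$ of size $n$ of $\Delta_{I,[n]}(h)$ times the coefficient of $e_{\{-n,\dots,-1\}\setminus I \text{ complement stuff}}$... more precisely $\langle v_-\wedge e_0, (hv_-)\wedge(he_0)\rangle = \sum_{I} \pm \Delta_{I,[n]}(h)\,(he_0)_{J}$ where $J = \{-n,\dots,-1,0\}\setminus I$. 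The main obstacle is organizing this sum and showing all terms with $I \neq \{-n,\dots,-1\}$ cancel or vanish; this is where the orthogonality of $h$ enters crucially through the Gram relations $h^T \tilde{M} h = \tilde{M}$ among the columns of $h$, combined with the fact that $e_0$ is the distinguished self-paired basis vector of the quadratic form (the "$0$" index). I would argue that the orthogonality relations force $\langle v_- \wedge e_0, (hv_-)\wedge (he_0)\rangle$ to equal a Pfaffian-type or determinant-type expression in the entries of $h$ that, by the isotropy constraints, collapses to $\Delta_{[n],[n]}(h)$.

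An alternative and probably cleaner route, which I would actually prefer to carry out, is representation-theoretic: both sides, as functions of $g \in \mathcal{SO}_{2n+1}$, are matrix coefficients of the highest weight vector in an irreducible representation. The left side is the matrix coefficient $\langle v^{\omega_2}, \rho_2(g) v^{\omega_2}\rangle$ for $V^{\omega_2} = \bigwedge^n V$ — wait, one must check $\bigwedge^n V$ is irreducible and has highest weight $\omega_2$ for $\mathfrak{so}_{2n+1}$ (it is, for the $B_n$ series the exterior powers $\bigwedge^k V$ for $k \le n-1$ are the fundamental reps and $\bigwedge^n V \cong \bigwedge^{n+1} V \cong \cdots$ by Hodge duality, all isomorphic to $V^{\omega_n}$... here I must be careful with the shifted indexing; with the paper's convention $\tau^B_2$ corresponds to $\bigwedge^{n-1}$ of the span of $e_{-n},\dots,e_{-2}$, i.e. the second fundamental rep). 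The point is: $v_- \wedge e_0$ is a highest weight vector in $\bigwedge^{n+1}V$ of the same highest weight as $v_-$ in $\bigwedge^n V$, because $e_0$ is a zero-weight vector and the cyclic $\mathfrak{so}_{2n+1}$-submodule generated by $v_- \wedge e_0$ is irreducible of that highest weight; since irreducible representations are determined by their highest weight and the matrix coefficient of a normalized highest weight vector is intrinsic, the two matrix coefficients agree. The hard part here is the bookkeeping to confirm $v_- \wedge e_0$ is genuinely highest weight (annihilated by all positive simple root vectors $\rho(X_i)$) and has the correct weight — this reduces to a direct check that $X_i (e_{-n}\wedge\cdots\wedge e_{-1}\wedge e_0) = 0$ for each simple root generator, using the explicit Hessenberg/antidiagonal form of the $\mathfrak{so}_{2n+1}$ realization given earlier in Section~\ref{sec:f-KTAB}. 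Once that check is done, the identity follows formally, and the $\Theta_e(\t_B)$ factor comes along for free since it is also an orthogonal group element. I expect this weight computation, together with confirming the normalization $\langle v_-, v_-\rangle = \langle v_-\wedge e_0, v_-\wedge e_0\rangle = 1$, to be the only real work.
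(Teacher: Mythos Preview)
The paper does not actually prove this theorem; it is stated with a citation to \cite{Kodama-Xie2021f-KT}. The nearest thing to an argument appears later, in Section~\ref{sec:Spin}: after embedding $\mathfrak{so}_{2n+1}$ in $\mathfrak{so}_{2n+2}$, the paper observes that the two semi-spinor modules $\tilde{S}_\pm$ of $\mathfrak{so}_{2n+2}$ become equivalent when restricted to $\mathfrak{so}_{2n+1}$, and both are isomorphic to the spin module $S$. From this it deduces Proposition~\eqref{eq:tausquare}, which is the $(2n+2)$-dimensional version of the same identity (with $e_{-0}$ or $e_{+0}$ in place of $e_0$). So the paper's route is: go up to type $D$, use the known identification of $\tilde{S}_\pm$ with $S$, and read off the three equivalent wedge expressions simultaneously.

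Your proposed route is different but sound. You stay in $\mathfrak{so}_{2n+1}$ and argue directly that $e_{-n}\wedge\cdots\wedge e_{-1}$ and $e_{-n}\wedge\cdots\wedge e_{-1}\wedge e_0$ are highest weight vectors of the same weight $2\omega_1$ in $\bigwedge^n V$ and $\bigwedge^{n+1}V$ respectively, hence generate isomorphic irreducible submodules, hence yield identical matrix coefficients once normalizations are matched. This works; the one place to tighten is your hesitation about irreducibility and the weight labeling. For $B_n$ the module $\bigwedge^n V$ is irreducible of highest weight $2\omega_{\text{spin}}$ (not any other $\omega_k$), and Hodge duality gives $\bigwedge^{n+1}V \cong \bigwedge^n V$, which is exactly the isomorphism you want. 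With that cleaned up, your argument goes through and is arguably more elementary than the paper's, since it avoids the detour through $D_{n+1}$; on the other hand the paper's semi-spinor viewpoint is what powers all of Section~\ref{sec:genericsolution} onward, so for its purposes the $D_{n+1}$ embedding is not a detour but the main road.
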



\subsection{KP and BKP hierarchies and their $\tau$-functions}\label{sec:ABKP}
In Section \ref{sec:f-KTAB} we saw that geometrically f-KT hierarchy describes the group orbits $\mathcal{G} \cdot v^{\omega_i}, 1 \le i \le n$. Similar things can be considered for infinite-dimensional matrix algebras and the corresponding homogeneous varieties. The standard references for this section are \cite{Kac1990, Kac-Raina1987, Miwa-Jimbo-Date2000, Jimbo-Miwa1983}.

Let
\begin{align*}
& \mathfrak{a}_{\infty} = \bar{\mathfrak{a}}_{\infty} \oplus \mathbb{C}c\\
& \mathfrak{b}_{\infty} = \bar{\mathfrak{b}}_{\infty} \oplus \mathbb{C}c,
\end{align*}
which are central extensions of
\begin{align*}
& \bar{\mathfrak{a}}_{\infty} = \{(a_{i, j}) \ |\ a_{i, j} = 0 \text{ for } |i - j| \gg 0\}\\
& \bar{\mathfrak{b}}_{\infty} = \{(a_{i,j}) \in \bar{\mathfrak{a}}_{\infty}\ |\ a_{i, j} = (-1)^{i+j+1}a_{-j, -i}\}.
\end{align*}

We still let $E_{i, j}$ denote the matrix with $1$ at the $(i, j)$ entry and $0$ at all the other entries. Then a typical element in $\bar{\mathfrak{a}}_{\infty}$ is a finite linear combination of matrices of the form 
\[a_k = \sum \limits_{i \in \mathbb{Z}}\lambda_i E_{i, i+k}\]
where the $\lambda_i$'s are arbitrary complex numbers. In particular, we have
\[H_k = \sum \limits_{i \in \mathbb{Z}}E_{i, i+k} \in \bar{\mathfrak{a}}_{\infty}.\]
Elements in $\bar{\mathfrak{a}}_{\infty}$ acts in a usual way on the infinite dimensional complex vector space $V := \mathbb{C}^{\infty}$ of columns $(c_i)_{i \in \mathbb{Z}}$, where all but a finite number of the $c_i$'s are zero, that is
\[V = \bigoplus \limits_{j \in \mathbb{Z}} \mathbb{C}e_j,\]
and
\[a e_j = \sum \limits_{i \in \mathbb{Z}}a_{i,j}e_i, \qquad a \in \bar{\mathfrak{a}}_{\infty}.\]

Consider the full wedge space defined by
\[F = \bigwedge^{\infty} V,\]
then the Lie algebra $\mathfrak{a}_{\infty}$ acts on $F$ by the following rule:
\begin{align*}
& \bar{r}(E_{i, j})(e_{i_1} \wedge e_{i_2} \wedge \dots) = -\theta(i \le 0)\delta_{i, j}e_{i_1} \wedge e_{i_2} \wedge \dots + (E_{i, j}e_{i_1}) \wedge e_{i_2} \wedge \dots + e_{i_1} \wedge (E_{i, j}e_{i_2}) \wedge \dots + \dots,\\
& \bar{r}(c)(e_{i_1} \wedge e_{i_2} \wedge \dots) = e_{i_1} \wedge e_{i_2} \wedge \dots.
\end{align*}
where
\[\theta(i \le 0) = \left\{\begin{array}{ll}
1 \qquad & \text{if } i \le 0,\\
0 & \text{otherwise.}
\end{array}\right.\]

We then have
\begin{align*}
& [\bar{r}(E_{i, j}), \bar{r}(E_{k, l})] = 0 \qquad \text{for } j \ne k, i \ne l\\
& [\bar{r}(E_{i, j}), \bar{r}(E_{j, l})] = \bar{r}(E_{i, l}) \qquad \text{for } i \ne l\\
& [\bar{r}(E_{i, j}), \bar{r}E_{k, i}] = -\bar{r}(E_{k, j}) \qquad \text{for } j \ne k\\
& [\bar{r}(E_{i, j}), \bar{r}(E_{j, i})] = \bar{r}(E_{i, i}) - \bar{r}(E_{j, j}) + \alpha(E_{i, j}, E_{j, i})I.
\end{align*}
where
\begin{align}\label{eq:twocycle}
& \alpha (E_{i, j}, E_{j, i}) = -\alpha(E_{j, i}, E_{i, j}) = 1 \qquad \text{if } i \le 0, \  j \ge 1,\\
& \alpha(E_{i, j}, E_{k, l}) = 0 \qquad \text{in all other cases.}\nonumber
\end{align}

$F$ decomposes into direct sum of irreducible modules under this action of $\mathfrak{a}_{\infty}$
\[ F = \bigoplus \limits_{m \in \mathbb{Z}}F^{(m)}, \]
where $F^{(m)}$ is the vector space with basis consisting of expressions of the form
\[v = e_{i_m} \wedge e_{i_{m-1}} \wedge \cdots,\]
such that $i_m > i_{m-1} > \cdots$ and $i_k = k + m$ for $k \ll 0$. Note that $F^{(m)}$ is a highest wight module with highest weight vector
\[v_m = e_m \wedge e_{m-1} \wedge e_{m-2} \wedge \cdots \in F^{(m)}.\]
Denote the restriction of $\bar{r}$ on $F^{(m)}$ by $\bar{r}_m$, and let $\nu_s$ be the shift operator defined by $\nu_s(e_j) = e_{j - s}, j \in \mathbb{Z}$, then we have
\[\nu_s \bar{r}_s \nu_s^{-1} = \bar{r}_0.\]
Thus for $\mathfrak{a}_{\infty}$ all the $\bar{r}_m$'s are equivalent. In particular, the representation $(r_0, F^{(0)})$ is usually called the basic representation.

Consider the groups defined as
\begin{align*}
& \mathcal{GL}_{\infty} = \{A = (a_{i, j})_{i, j \in \mathbb{Z}}\ |\ \text{$A$ is invertible, $a_{ij} = \delta_{ij}$ for almost all $i, j$}\},\\
& \mathcal{O}_{\infty} = \{A \in \mathcal{GL}_{\infty} \ |\ \text{$A$ preserves the form defined by $\langle e_i, e_j \rangle = (-1)^i\delta_{i+j, 0}$}\}.
\end{align*}
$\mathcal{GL}_{\infty}$ has a natural action $R_0$ on $F^{(0)}$ defined as follows
\begin{equation*}
R_0(A)(e_{i_1} \wedge e_{i_2} \wedge \dots) = Ae_{i_1} \wedge Ae_{i_2} \wedge \dots.
\end{equation*}

Let
\[H_A(\t) := \sum \limits_{k > 0}t_kH_k,\]
and note that the action $r_0$ of $H_A(\t)$ on $V$ can be lifted to the action $R_0$ of $\exp(H_A(\t))$ on certain completion $\bar{V}$ of $V$, then as in the finite-dimensional case we can similarly define $\tau$-function as follows (which is the same as formula \eqref{eq:tauKPintro} restricted to full rank $\xi$ with finite nontrivial elements)
\[\tau_{\text{KP}}(\t; g) = \langle v_0, R_0(\exp(H_A(\t))g)\cdot v_0\rangle, \qquad g \in \mathcal{GL}_{\infty}.\]
Under the so-called the Boson-Fermion correspondence 
\begin{equation}\label{eq:Bonson-FermionA}
\begin{array}{rcl}
\sigma^A_0: F^{(0)} & \longrightarrow & \mathbb{C}[t_1, t_2, \dots]\\
 v_0 & \mapsto & 1\\
 H_n & \mapsto & \frac{\partial}{\partial t_n}, \qquad n \ge 1\\
 H_{-n} & \mapsto & nt_n, \qquad n \ge 1,
\end{array}
\end{equation}
points in the group orbit $\mathcal{O} = \mathcal{GL}_{\infty} \cdot v_0$ has the following characterization
\begin{theorem}[\cite{Kac1990}, \cite{Miwa-Jimbo-Date2000}]
A nonzero element $\tau(\t)$ of $\mathbb{C}[t_1, t_2, \dots]$ is contained in $\sigma^A_0(\mathcal{O})$ if and only if
\begin{equation}\label{eq:bilinearKP}
\oint\left(\exp\sum \limits_{j \ge 1}2z^j s_j\right)\left(\exp - \sum \limits_{j \ge 1}\frac{z^{-j}}{j}\frac{\partial}{\partial s_j}\right)\tau(\t+\mathbf{s})\tau(\t-\mathbf{s}) \frac{dz}{2\pi i} = 0,
\end{equation}
where $\mathbf{s} = (s_1, s_2, \dots)$ and the integration is taken along a small circle around $z = 0$.
\end{theorem}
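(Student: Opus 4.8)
The plan is to reduce the statement to the \emph{fermionic bilinear identity} and then transport it to the bosonic side through the Boson--Fermion correspondence \eqref{eq:Bonson-FermionA}. On $F=\bigwedge^{\infty}V=\bigoplus_{m}F^{(m)}$ introduce the Clifford operators $\psi_k=e_k\wedge(\,\cdot\,)$ together with the dual contractions $\psi_k^{*}$, so that $\{\psi_i,\psi_j^{*}\}=\delta_{i,j}$ and $\{\psi_i,\psi_j\}=\{\psi_i^{*},\psi_j^{*}\}=0$; here $\psi_k$ raises the charge by one and $\psi_k^{*}$ lowers it. The key lemma, which is the semi-infinite incarnation of the Pl\"ucker embedding, is the following: a nonzero $\tau\in F^{(0)}$ lies in $\mathcal{O}=\mathcal{GL}_{\infty}\cdot v_0$ if and only if
\begin{equation}\label{eq:KPfermionbilinear}
\Omega(\tau\otimes\tau):=\sum_{k\in\mathbb{Z}}\psi_k\tau\otimes\psi_k^{*}\tau=0\qquad\text{in }F^{(1)}\otimes F^{(-1)}.
\end{equation}
Granting the lemma, the theorem follows by rewriting \eqref{eq:KPfermionbilinear} in the bosonic variables $t_1,t_2,\dots$.

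I would first dispatch the easy (``only if'') half of the lemma. The operator $\Omega=\sum_k\psi_k\otimes\psi_k^{*}$ is assembled from the $\mathfrak{gl}_{\infty}$-invariant tensor $\sum_k e_k\otimes e_k^{*}$, so it commutes with the diagonal action of $\mathfrak{a}_{\infty}$, and hence with the diagonal action of $\mathcal{GL}_{\infty}$ on $F^{(0)}\otimes F^{(0)}$. A one-line check gives $\Omega(v_0\otimes v_0)=0$: for $k\le 0$ the vector $e_k$ already occurs in $v_0=e_0\wedge e_{-1}\wedge\cdots$, so $\psi_k v_0=0$, whereas for $k\ge 1$ it does not occur, so $\psi_k^{*}v_0=0$; in either case the $k$-th summand vanishes. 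Therefore $\Omega(gv_0\otimes gv_0)=(g\otimes g)\,\Omega(v_0\otimes v_0)=0$ for every $g\in\mathcal{GL}_{\infty}$, and since \eqref{eq:KPfermionbilinear} is quadratic in $\tau$ this is exactly the assertion for all $\tau\in\mathcal{O}$.

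The converse (``if'') half is where the real work lies, and I would argue as in \cite{Kac1990} and \cite{Miwa-Jimbo-Date2000}: one shows that the zero set of $\Omega$ inside $F^{(0)}$ consists precisely of the decomposable semi-infinite wedges $e_{j_0}\wedge e_{j_{-1}}\wedge\cdots$ attached to an admissible coordinate subspace of $V$. Given $\tau$ with $\Omega(\tau\otimes\tau)=0$, one fixes a ``maximal'' monomial of $\tau$, applies suitable $\psi_i$'s and $\psi_i^{*}$'s to \eqref{eq:KPfermionbilinear}, and reads off from the resulting quadratic relations (the semi-infinite Pl\"ucker relations) that every monomial of $\tau$ differs from the maximal one by a single particle--hole exchange and is pairwise compatible with it; this forces $\tau$, up to a scalar, to be such a decomposable wedge, and since $\mathcal{GL}_{\infty}$ acts transitively on the lines spanned by these wedges one concludes $\tau\in\mathcal{O}$. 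Equivalently, $\mathcal{O}\cup\{0\}$ is the affine cone over the closed $\mathcal{GL}_{\infty}$-orbit of the highest-weight line $[v_0]$, whose homogeneous ideal is generated by the quadrics \eqref{eq:KPfermionbilinear}. I expect this to be the main obstacle.

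Finally I would pass \eqref{eq:KPfermionbilinear} to the bosonic side. Under $\sigma$, extended over all charge sectors with $Q$ the charge operator, the fermionic fields $\psi(z)=\sum_k\psi_k z^{k}$ and $\psi^{*}(z)=\sum_k\psi_k^{*}z^{-k}$ become, up to the standard normalization, the vertex operators
\begin{equation}\label{eq:KPvertexops}
\begin{aligned}
\psi(z)\ &\longmapsto\ z^{Q}\exp\!\left(\sum_{j\ge 1}t_j z^{j}\right)\!\exp\!\left(-\sum_{j\ge 1}\frac{z^{-j}}{j}\frac{\partial}{\partial t_j}\right),\\
\psi^{*}(z)\ &\longmapsto\ z^{-Q}\exp\!\left(-\sum_{j\ge 1}t_j z^{j}\right)\!\exp\!\left(\sum_{j\ge 1}\frac{z^{-j}}{j}\frac{\partial}{\partial t_j}\right),
\end{aligned}
\end{equation}
and $\Omega=\sum_k\psi_k\otimes\psi_k^{*}$ is the contour-integral (``resonant'') part $\oint\frac{dz}{2\pi i}\,\psi(z)\otimes\psi^{*}(z)$ of the product field, with the overall power of $z$ coming from the $z^{\pm Q}$ factors absorbed into the definition of the contour. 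Evaluating the first tensor factor of $\tau\otimes\tau$ at times $\mathbf{t}+\mathbf{s}$ and the second at $\mathbf{t}-\mathbf{s}$, the two multiplication prefactors in \eqref{eq:KPvertexops} multiply to $\exp(\sum_{j\ge1}2z^{j}s_j)$, while the two exponentials of differential operators, applied to $\tau(\mathbf{t}+\mathbf{s})$ and $\tau(\mathbf{t}-\mathbf{s})$ respectively, amount jointly to the single shift $\mathbf{s}\mapsto\mathbf{s}-(z^{-1},\tfrac12 z^{-2},\tfrac13 z^{-3},\dots)$, i.e. to $\exp(-\sum_{j\ge1}\frac{z^{-j}}{j}\frac{\partial}{\partial s_j})$ acting on the product $\tau(\mathbf{t}+\mathbf{s})\tau(\mathbf{t}-\mathbf{s})$. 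Hence \eqref{eq:KPfermionbilinear} becomes precisely the identity \eqref{eq:bilinearKP}, and the equivalence is established. Apart from the converse half of the fermionic lemma, the only delicate points are the sign and $z^{\pm Q}$ normalizations in \eqref{eq:KPvertexops}; the rest is bookkeeping.
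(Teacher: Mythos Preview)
The paper does not prove this theorem at all: it is quoted as a known result with attribution to \cite{Kac1990} and \cite{Miwa-Jimbo-Date2000}, and immediately afterwards the text simply uses it to identify \eqref{eq:bilinearKP} as the Hirota bilinear form of the KP hierarchy. So there is nothing in the paper to compare your argument against.

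That said, your outline is the standard route taken in those references: characterize $\mathcal{O}$ by the fermionic bilinear identity $\sum_k\psi_k\tau\otimes\psi_k^*\tau=0$, then transport it through the vertex-operator realization of the Boson--Fermion correspondence. Your ``only if'' direction is clean; for the ``if'' direction you correctly flag that the substantive step is showing the quadrics cut out exactly the decomposable wedges, and you defer to the cited sources for this. The bosonization step is also handled correctly in outline, with the two vertex operators combining to give the $\exp(\sum 2z^js_j)$ prefactor and the joint shift in $\mathbf{s}$. The normalizations you mention (the $z^{\pm Q}$ factors and the precise power of $z$ absorbed into the contour) do need care, but you have identified them as the places requiring attention.
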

The system \eqref{eq:bilinearKP} is the Hirota bilinear form of KP hierarchy. Let $u(\t) = 2\partial_1^2\ln \tau(\t)$, then $u(\t)$ satisfies the classical Kadomtsev-Petviashvili (KP) equation
\begin{equation}\label{eq:KP}
3\partial_2^2u + \partial_1(-4\partial_3 u+6u\partial_1u+\partial_1^3u)=0,
\end{equation}
with the partial derivatives $\partial_k^iu := \frac{\partial^iu}{\partial t_k^i}$. 

\begin{remark}
To obtain more general solutions of the KP hierarchy, we can consider a larger infinite matrix group $\bar{\mathcal{A}}_{\infty} \supset \mathcal{GL}_{\infty}$ which projectively acts on a certain completion $\bar{F}$ of $F$. $\bar{F}$ decomposes into unions of infinite Grassmannian manifolds under the linear action of the central extension $\mathcal{A}_{\infty}$ of $\bar{\mathcal{A}}_{\infty}$. Since $\tau_{\text{KP}}(\t)$ as defined in \eqref{eq:tauKPintro} is a section of the determinant bundle on $\bar{F}$, the group $\mathcal{A}_{\infty}$ is required to be small enough so that this section makes sense before and after the action of $\mathcal{A}_{\infty}$, and to be large enough so that its Lie algebra contains $\mathfrak{a}_{\infty}$. See \cite{Sato-Sato1982}, \cite{Segal-Wilson1985} and \cite{Arbarello-Concini-Kac-Procesi1988} for several different constructions of $\bar{F}$ and $\mathcal{A}_{\infty}$. Assuming the existence of this group $\mathcal{A}_{\infty}$, we denote its representation on $\bar{F}$ by $\bar{R}$ in the following. Since all Lie algebras we discuss in this paper are sub-Lie algebras of $\mathfrak{a}_{\infty}$, the existence of their Lie groups and completion of the corresponding modules are also assumed.
\end{remark}

The $\tau$-function for BKP hierarchy is, in a similar way, associated with the infinite dimensional Lie algebra $\mathfrak{b}_{\infty}$ and its fundamental representation.
More precisely, let $\t_B = (t_1, t_3, t_5, \dots)$, and note that $H_k \in \bar{\mathfrak{b}}_{\infty}$ for $k$ odd. Let
\[H_B(\t_B) = \sum \limits_{k >0, \text{ odd}}t_kH_{k},\]
then $\tau$-function for the BKP hierarchy is given by
\[\tau_{\text{BKP}}(\t_B; g) = \langle v^{\omega_1}, \rho(\exp H_B(\t_B) g) \cdot v^{\omega_1}\rangle, \qquad g \in \mathcal{O}_{\infty},\]
where $(\rho, S)$ is the infinite-dimensional version of the spin representation with highest weight vector $v^{\omega_1}$ (c.f. \cite{Date-Jimbo-Kashiwara-Miwa1982}). 

The Boson-Fermion correspondence in type $B$ takes the following form
\begin{equation}\label{eq:Bonson-FermionB}
\begin{array}{rcl}
\sigma^B_0: S & \longrightarrow & \mathbb{C}[t_1, t_3, \dots]\\
 v^{\omega_1} & \mapsto & 1\\
 H_n & \mapsto & \frac{\partial}{\partial t_n}, \qquad n \ge 1\\
 H_{-n} & \mapsto & \frac{1}{2}nt_n, \qquad n \ge 1.
\end{array}
\end{equation}

We have the following important relationship between $\tau$-functions of the KP hierarchy and BKP hierarchy.
\begin{theorem}[\cite{Date-Jimbo-Kashiwara-Miwa1982, You1989}]\label{thm:ABKPrelation}
The $\tau$-functions for BKP and KP hierarchies are related as follows: for any $g \in \mathcal{O}_{\infty}$, we have
\[\tau_{\text{KP}}(t_1, 0, t_3, 0, \dots; {g}) = \tau^2_{\text{BKP}}(t_1, t_3, \dots; g).\]
\end{theorem}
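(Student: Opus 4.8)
The plan is to deduce the statement from its finite-dimensional counterpart established just above and then pass to the stabilized limit $n \to \infty$. First I would unwind the definitions on the KP side. Setting all even times to zero in $H_A(\t) = \sum_{k>0} t_k H_k$ simply discards the even modes, so that $H_A(t_1, 0, t_3, 0, \dots) = \sum_{k>0,\ \mathrm{odd}} t_k H_k = H_B(\t_B)$, and therefore $\tau_{\text{KP}}(t_1, 0, t_3, 0, \dots; g) = \langle v_0, R_0(\exp(H_B(\t_B))\, g)\, v_0\rangle$. The task is thus to identify this wedge-space matrix element with the square $\tau^2_{\text{BKP}}(\t_B; g)$.

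Next I would invoke the finite-dimensional theorem of \cite{Kodama-Xie2021f-KT} stated just above, which gives, for $g \in \mathcal{SO}_{2n+1}(\mathbb{C})$,
\[\tau^2_{\text{BKT}}(\t_B; g) = \langle e_{-n} \wedge e_{-n+1} \wedge \cdots \wedge e_{-1} \wedge e_0,\ \exp(\Theta_e(\t_B))\, g\, (e_{-n} \wedge \cdots \wedge e_{-1} \wedge e_0)\rangle,\]
and let $n \to \infty$. Three stabilization facts drive the limit: the reordered finite wedge $e_0 \wedge e_{-1} \wedge \cdots \wedge e_{-n}$ converges to the KP vacuum $v_0 = e_0 \wedge e_{-1} \wedge e_{-2} \wedge \cdots$; the odd powers $e^{2k-1}$ of the principal nilpotent, which act as the shift, stabilize to the modes $H_{2k-1}$, so that $\Theta_e(\t_B) \to H_B(\t_B)$; and the standard inner products match. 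The left-hand side then stabilizes to $\tau^2_{\text{BKP}}(\t_B; g)$ by the very definition of $\tau_{\text{BKP}}$ as the $n \to \infty$ limit of $\tau_{\text{BKT}}$, while the right-hand side stabilizes to $\langle v_0, R_0(\exp(H_B(\t_B))\, g)\, v_0\rangle$. Combined with the first paragraph, this yields the claimed equality.

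Conceptually, the appearance of the square reflects the branching $F^{(0)} \cong S \otimes S$ of the basic $\mathfrak{a}_{\infty}$-module restricted to the orthogonal subalgebra $\mathfrak{b}_{\infty}$ (the decomposition of one charged free fermion into two neutral ones): under this isomorphism $v_0 \mapsto v^{\omega_1} \otimes v^{\omega_1}$, and every $h \in \mathcal{O}_{\infty}$ acts as $\rho(h) \otimes \rho(h)$, so that $\langle v_0, R_0(h)\, v_0\rangle = \langle v^{\omega_1}, \rho(h)\, v^{\omega_1}\rangle^2$. Here the relevant element $h = \exp(H_B(\t_B))\, g$ does lie in $\mathcal{O}_{\infty}$, since $H_k \in \mathfrak{b}_{\infty}$ precisely for odd $k$ and $g \in \mathcal{O}_{\infty}$; in purely linear-algebraic terms this factorization is the identity $\det = (\mathrm{Pf})^2$ for the skew form cut out by the orthogonality of $g$. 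I expect the main obstacle to be making the $n \to \infty$ limit fully rigorous rather than merely formal: one must track the reordering signs relating $e_{-n}\wedge\cdots\wedge e_0$ to $v_0$, verify carefully that $e^{2k-1}$ stabilizes to $H_{2k-1}$ in the chosen gradation, and check that the normal-ordering (central-extension) corrections $-\theta(i \le 0)\delta_{ij}$ in the definition of $R_0$ do not contribute to these particular matrix elements, which holds because $H_B(\t_B)$ is built only from strictly raising modes.
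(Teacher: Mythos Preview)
The paper does not supply its own proof of this theorem: it is stated with attribution to \cite{Date-Jimbo-Kashiwara-Miwa1982, You1989} and used as input, so there is no in-paper argument to compare against. What you have written is therefore not a comparison target but an attempted reconstruction of the classical proof.

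Your proposal contains two distinct arguments. The second one---the branching $F^{(0)}\cong S\otimes S$ under $\mathfrak{b}_\infty\subset\mathfrak{a}_\infty$, with $v_0\mapsto v^{\omega_1}\otimes v^{\omega_1}$ and $R_0(h)\mapsto\rho(h)\otimes\rho(h)$ for $h\in\mathcal{O}_\infty$---is essentially the original Kyoto-school proof (one charged fermion $=$ two neutral fermions, or equivalently $\det=\mathrm{Pf}^2$ on the orthogonal locus). That paragraph, made precise, \emph{is} a complete proof and matches the cited references.

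Your first argument, bootstrapping from the finite-$n$ identity of \cite{Kodama-Xie2021f-KT} and letting $n\to\infty$, is logically sound for $g\in\mathcal{O}_\infty$ as defined here (matrices differing from the identity in finitely many entries), since any such $g$ lives in some $\mathcal{SO}_{2n+1}$ and the resulting $\tau$-functions are polynomials in finitely many $t_{2k-1}$, so the ``limit'' is really a stabilization that terminates. Two caveats: (i) you should check that the finite-$n$ result in \cite{Kodama-Xie2021f-KT} is not itself derived from the infinite-dimensional statement you are proving, to avoid circularity; (ii) the paper's embedding of $\mathfrak{so}_{2n+1}$ into $(2n+2)\times(2n+2)$ matrices (Section~\ref{sec:BninDn1}) is not the naive truncation of $\mathfrak{b}_\infty$, so the compatibility of $\Theta_e(\t_B)$ with $H_B(\t_B)$ under the inductive limit deserves one explicit line rather than the phrase ``stabilize to''. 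Neither point is fatal, but as written the first argument is a sketch while the second is the real proof.
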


As before, to obtain more general solutions for the BKP hierarchy we can also consider a larger infinite matrix group $\mathcal{B}_{\infty}$ of type $B$, such that for any $g \in \mathcal{B}_{\infty}$ there is a well-defined action on a certain completion $\bar{S}$ of $S$ and the corresponding $\tau_{\text{BKP}}(\t_B; g)$ makes sense. Such $\mathcal{B}_{\infty}$ is also required to have a well-defined Lie algebra containing $\mathfrak{b}_{\infty}$ (c.f. \cite{Dirac1974, Plymen-Robinson1994, Pressley-Segal1986} for several different analytic constructions of the Fock spaces and the spin representations of $\mathfrak{b}_{\infty}$).

Finally, from the constructions in the last two sections, we have
\begin{corollary}
All the $\tau$-functions for f-KT hierarchy in type $A$ solves KP hierarchy, and all the $\tau$-functions $\tau^B_1(\t_B)$ for f-KT hierarchy in type $B$ are solutions to the BKP hierarchy.
\end{corollary}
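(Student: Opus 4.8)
The plan is to read off the type-$A$ assertion from the Schur expansion of $\tau^A_k$ recorded in \eqref{eq:SchurexpansionA}, and then to deduce the type-$B$ assertion from it together with the alternative expression for $\tau^2_{\text{BKT}}$ recalled just above and Theorem~\ref{thm:ABKPrelation}. I argue for the polynomial $\tau$-functions $\tau^A_k(\t_A;g)$ with $g\in\mathcal{SL}_{n+1}$ and $\tau^B_1(\t_B;g)=\tau_{\text{BKT}}(\t_B;g)$ with $g\in\mathcal{SO}_{2n+1}$, i.e. the ones displayed in \eqref{eq:SchurexpansionA} and \eqref{eq:fKTtauB}; these are precisely the rational solutions of the two hierarchies.

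For type $A$, \eqref{eq:SchurexpansionA} reads $\tau^A_k(\t_A;g)=\sum_{I\in\binom{[\ell]}{k}}\Delta_{I,[k]}(g)\,S_I(\t)$ with $\ell=n+1$, and $\Delta_{I,[k]}(g)$ is by definition the $k\times k$ minor of $g$ on the rows indexed by $I$. Since the maximal minors of a fixed matrix satisfy all Pl\"ucker relations, the family $(\Delta_{I,[k]}(g))$, extended by $0$ to all $k$-element index sets, is the Pl\"ucker vector of the point of the universal Grassmannian manifold given by the $k$-plane spanned by the first $k$ columns of $g$ (padded by the Dirac sea $e_0\wedge e_{-1}\wedge e_{-2}\wedge\cdots$). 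By the Boson-Fermion correspondence \eqref{eq:Bonson-FermionA}, a Schur series whose coefficients form a Pl\"ucker vector lies in $\sigma^A_0(\mathcal O)$ and hence solves the KP bilinear identity \eqref{eq:bilinearKP}; in fact it equals $\tau_{\text{KP}}(\t;\hat g)$, where $\hat g\in\mathcal{GL}_\infty$ acts as $g$ on the index window carrying $\mathbb C^{n+1}$ and as the identity elsewhere (the remaining Pl\"ucker coordinates of $\hat g$ vanish since $\hat g$ differs from the identity only on that window; equivalently, one checks that the truncated shift $e$ and its powers act on the padded $k$-plane exactly as the Hamiltonians $H_1,H_2,\dots$, every correction term $E_{i,i+j}$ with $i\le 0$ or $i\ge n+2-j$ sending it to $0$ because a wedge with a repeated factor vanishes). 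Thus $\tau^A_k(\t_A;g)$ is a KP $\tau$-function.

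For type $B$, relabel the basis of $\mathbb C^{2n+1}$ by $\{-n,\dots,n\}$. From the presentation of Section~\ref{sec:BninDn1}, the symmetric form defining $\mathfrak{so}_{2n+1}$ is the restriction of the form $\langle e_i,e_j\rangle=(-1)^i\delta_{i+j,0}$ that defines $\mathcal O_\infty$, and this form does not pair the window $\{-n,\dots,n\}$ with its complement; hence the block embedding $g\mapsto\hat g$ of the previous paragraph now lands in $\mathcal O_\infty$. By the second expression for $\tau^2_{\text{BKT}}$ recalled above,
\[
\tau_{\text{BKT}}^{2}(\t_B;g)=\bigl\langle e_{-n}\wedge\cdots\wedge e_0,\ \exp(\Theta_e(\t_B))\,g\,(e_{-n}\wedge\cdots\wedge e_0)\bigr\rangle .
\]
The right-hand side is the $(n+1)$-st polynomial f-KT $\tau$-function of $\mathfrak{sl}_{2n+1}$ at $g$ with all even times set to zero, so by the type-$A$ discussion it equals $\tau_{\text{KP}}(t_1,0,t_3,0,\dots;\hat g)$. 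Since $\hat g\in\mathcal O_\infty$, Theorem~\ref{thm:ABKPrelation} rewrites this as $\tau_{\text{BKP}}^{2}(\t_B;\hat g)$. Therefore $\tau_{\text{BKT}}^{2}(\t_B;g)=\tau_{\text{BKP}}(\t_B;\hat g)^{2}$ as polynomials, so $\tau^B_1(\t_B;g)=\tau_{\text{BKT}}(\t_B;g)=\pm\,\tau_{\text{BKP}}(\t_B;\hat g)$; and since a constant multiple of a BKP $\tau$-function is again one (the Hirota bilinear system being homogeneous in $\tau$), both signs are BKP $\tau$-functions. This proves the corollary.

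Because every ingredient is already in hand, I anticipate no substantial obstacle --- which is why the statement is a corollary. The only care needed is notational: recognising $\Delta_{I,[k]}(g)$ as the $k\times k$ minors of $g$ so that the Schur coefficients form a Pl\"ucker vector and the pertinent $\hat g$ is the obvious block matrix; identifying the wedge expression for $\tau^2_{\text{BKT}}$ with the $(n+1)$-st polynomial f-KT $\tau$-function of $\mathfrak{sl}_{2n+1}$ after relabelling; and checking that the finite symmetric form of Section~\ref{sec:BninDn1} restricts from the form defining $\mathcal O_\infty$, so that $\hat g$ is orthogonal and Theorem~\ref{thm:ABKPrelation} applies. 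The $\pm$ sign at the last step is immaterial, being absorbed by the homogeneity of the Hirota equations.
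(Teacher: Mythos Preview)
Your argument is correct. The paper gives no explicit proof, treating the corollary as immediate from the parallel setups of Sections~\ref{sec:f-KTAB} and~\ref{sec:ABKP}: once $\mathcal{SL}_{n+1}\hookrightarrow\mathcal{GL}_\infty$ and $\mathcal{SO}_{2n+1}\hookrightarrow\mathcal{O}_\infty$ as block matrices, the finite fundamental (resp.\ spin) module sits inside $F^{(0)}$ (resp.\ $S$), the principal nilpotent $e$ acts on the orbit exactly as $H_1$ does, and the f-KT $\tau$-functions are special KP (resp.\ BKP) $\tau$-functions on the nose. Your type-$A$ argument is precisely this embedding, carefully justified. For type $B$ you take a genuinely different route: rather than embedding the finite spin representation of $\mathfrak{so}_{2n+1}$ directly into that of $\mathfrak{b}_\infty$ (which would give $\tau_{\text{BKT}}(\t_B;g)=\tau_{\text{BKP}}(\t_B;\hat g)$ without further ado), you pass through the square formula, interpret $\tau_{\text{BKT}}^2$ as a type-$A$ f-KT $\tau$-function for $\mathfrak{sl}_{2n+1}$ with the even times set to zero, and then invoke Theorem~\ref{thm:ABKPrelation} to recognise it as $\tau_{\text{BKP}}^2$. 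This detour costs a harmless $\pm$ sign at the end, but it has the compensating virtue of never touching the compatibility of finite and infinite spin modules: everything happens in wedge-power representations, where the embedding is transparent.
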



\section{$B_{n}$ as a sub-Lie algebra of $D_{n+1}$}\label{sec:BninDn1}
The main goal in the following a few sections is to find a simple expression for $\tau_{\text{BKT}}(\t_B)$, the $\tau$-function of the f-KT hierarchy in type $B$ associated with the spin representation, and for that purpose we use a slightly different realization of the Lie algebra $B_{n}$ than the standard references (c.f. \cite{Bourbaki2005, Fulton-Harris2013}). The basic idea is that the Lie algebra of type $B_{n}$ could be realized as a sub-Lie algebra of $D_{n}$ in $(2n+2) \times (2n+2)$ matrices. We should point out that this is not something new at all in the study of simple Lie algebras, for example in \cite{Cartan1966} Cartan first studied type $B$ Lie algebra $\mathfrak{so}_{2n+1}$, and the theory of type $D$ Lie algebra $\mathfrak{so}_{2n}$ is obtained by restricting the associated quadratic form on $\mathbb{C}^{2n + 1}$ to $\mathbb{C}^{2n}$; on the other hand, in \cite{Chevalley1954}, Chevalley first studied type $D$ Lie algebra $\mathfrak{so}_{2n+2}$, and the theory of type $B$ Lie algebra $\mathfrak{so}_{2n+1}$ is deduced from that of type $D$. Moreover, it is well-known that type $B$ Lie algebras could be constructed from the simply laced type $D$ Lie algebras by a folding procedure (c.f. \cite{Kac1990}, \cite{Slodowy1980}).


\subsection{$D_{n+1}$ as $(2n + 2) \times (2n + 2)$ matrices}\label{sec:in2n+2}
The setup is similar to the one presented in \cite{Kodama-Xie2021f-KT, Kodama-Okada2023}. Let $\tilde{V}$ be a $(2n + 2)$-dimensional vector space, and $\tilde{\Psi}$ a non-degenerate symmetric bilinear form of maximal index $n+1$ on $\tilde{V}$. Denote by $\tilde{Q}$ the quadratic form associated with $\tilde\Psi$, i.e., $\tilde{Q}(x) = \frac{1}{2}\tilde\Psi(x, x)$ for $x \in \tilde{V}$. Then $\tilde{V} = \tilde{F} \oplus \tilde{F}'$, where $\tilde{F}$ and $\tilde{F}'$ are two maximal totally isotropic subspaces with respect to $\tilde{Q}$. Note that $\tilde{F}$ and $\tilde{F}'$ are in duality via $\tilde\Psi$. Let $(e_{-i})_{0 \le i \le n}$ be a basis of $\tilde{F}$ and $(e_{+i})_{0 \le i \le n}$ the dual basis of $\tilde{F}'$ such that $\tilde \Psi(e_i, e_{-j}) = (-1)^i\delta_{i, j}$ \footnote{Note that here we view $e_{+0}$ and $e_{-0}$ as different vectors, and the $\Psi$ we take is slightly different from \cite{Bourbaki2005}.}. Then
\begin{equation}\label{eq:BasisD}
\tilde{\Xi} = (e_{-n}, e_{-n+1}, \dots, e_{-1}, e_{-0}, e_{+0}, e_{+1}, \dots, e_{+n})
\end{equation}
is a basis of $\tilde{V}$; we have
\[\tilde{Q}(\sum x_{\alpha}e_{\alpha}) =\sum \limits_{i = 0}^n (-1)^ix_ix_{-i}\]
and $\tilde\Psi$ with respect to this basis takes the form of the following $(2n+2) \times (2n+2)$ square matrix (i.e. $\tilde{M}$ is the Gram matrix of $\tilde{\Psi}$)
\begin{equation}\label{eq:tildeM}
\tilde{M}=\begin{pmatrix}
0 &\cdots & 0 & 0  & 0 & 0 &  \cdots & (-1)^n   \\
\vdots &\udots & &  &\vdots & & \udots&\vdots\\
0& & 0 & 0 & 0 &-1 &  &0\\
0 & & 0 & 0 & 1 & 0 & & 0\\
0&\cdots & 0& 1 & 0 & 0 & \cdots & 0\\
0&  &-1&0&0&0&&0\\
\vdots &\udots& &\vdots & & &\udots &\vdots\\
(-1)^n &\cdots&0 &0&0& 0 &\cdots &0
\end{pmatrix} = \begin{pmatrix}[ccc|c|ccc]
 &    & &    &  & & \\
 & 0 & & 0 &  & J &\\
 &  &  &   & & & \\
\hline 
& \raisebox{-0.05cm}{$ 0 $} & & \sigma & &\raisebox{-0.05cm}{$ 0 $} & \\
\hline
 &    & &    &  & & \\
 & J^T & & 0 &  & 0  &\\
 &  &  &   & & & \\
\end{pmatrix},
\end{equation}
where 
\[\sigma = \begin{pmatrix}
 & 1 \\ 1 & 
\end{pmatrix} \qquad \text{and} \qquad J = \begin{pmatrix}
& & & (-1)^n\\
& & \udots & \\
& 1 & & \\
-1 & & & 
\end{pmatrix}.\]
We denote by $\mathfrak{so}_{2n+2}(\mathbb{C})$ the orthogonal Lie algebra associated with $\tilde\Psi$ and Cl$(\tilde{V}, \tilde{Q})$ the Clifford algebra of $\tilde{V}$ relative to $\tilde{Q}$.

We take the following Chevalley generators for $\mathfrak{so}_{2n+2}(\mathbb{C})$:
\[\begin{array}{rcl}
{H}_{\tilde\alpha_i} & = & \tilde{E}_{-i, -i} - \tilde{E}_{-(i-1), -(i-1)} + \tilde{E}_{+(i-1), +(i-1)} - \tilde{E}_{+i, +i}, \qquad 1 \le i \le n,\\
{H}_{\tilde\alpha_0} & = & \tilde{E}_{-1, -1} + \tilde{E}_{-0,-0} - \tilde{E}_{+0, +0} - \tilde{E}_{+1, +1},
\end{array}\]
where $\tilde\alpha_n = \varepsilon_n - \varepsilon_{n-1}, \dots, \tilde\alpha_2 = \varepsilon_2 - \varepsilon_1, \tilde\alpha_1 = \varepsilon_1 - \varepsilon_{0}, \tilde\alpha_0 = \varepsilon_1 + \varepsilon_{0}$ and $\varepsilon_i \in \mathfrak{h}_D^*$ is defined by $\varepsilon_i(\tilde{E}_{-j, -j} - \tilde{E}_{-k, -k}) = \delta_{i,j} - \delta_{i,k}$ for $0 \le i, j, k \le n$.
We take the simple root vectors,
\[\begin{array}{rcl}
{X}_{\tilde\alpha_i} & = & \tilde{E}_{-i, -(i-1)} + \tilde{E}_{+(i -1), +i} \qquad 1 \le i \le n,\\
{X}_{\tilde\alpha_0} & = & \tilde{E}_{-1, +0} + \tilde{E}_{-0, +1}
\end{array}\]
and the negative root generators,
\[\begin{array}{rcl}
{Y}_{\tilde\alpha_i} & = & {X}_{-\tilde\alpha_i} = \tilde{E}_{-(i-1), -i} + \tilde{E}_{+i, +(i-1)} \qquad 1 \le i \le n,\\
{Y}_{\tilde\alpha_0} & = & {X}_{-\tilde\alpha_0} = \tilde{E}_{+0, -1} + \tilde{E}_{+1, -0}.
\end{array}\]
The other negative root vectors are then generated by taking the commutators of simple roots
\[[{Y}_{\tilde\alpha}, {Y}_{\tilde\beta}] = {N}_{\tilde\alpha, \tilde\beta}{Y}_{\tilde\alpha + \tilde\beta} \qquad \text{if} \quad \tilde\alpha, \tilde\beta, \tilde\alpha + \tilde\beta \in \tilde\Sigma_+,\]
where the set of positive roots is given by

\[\tilde{\Sigma}_+ = \left\{\begin{array}{cl}
\tilde\alpha_0 + \tilde\alpha_2 + \cdots + \tilde\alpha_i & (1 \le i \le n),\\
\tilde{\alpha}_0 + \tilde\alpha_1 + 2\tilde\alpha_2 + \cdots + 2\tilde\alpha_{i} + \tilde\alpha_{i+1} + \cdots + \tilde\alpha_j  \qquad & (1 \le i < j \le n),\\
\tilde\alpha_{i+1} + \cdots + \tilde\alpha_{j} & (0 \le i < j \le n). 
\end{array}
\right\}\]
We have $|\tilde\Sigma_+| = n(n+1)$.

The Weyl group $\mathfrak{W}_{D_{n+1}}$ associated with the algebra of type $D_{n+1}$ can be given as follows
\[s_i^D = \left\{\begin{array}{ll}
{s}_{-1}{s}_{-0}{s}_{-1}{s}_{-0}{s}_{+0}{s}_{-0}, \qquad & i = 0,\\
{s}_{-i}{s}_{+(i-1)},  & 1 \le i \le n,
\end{array}
\right.\]
where $s_i$ is the simple reflection $(i, i+1)$ in the order given by \eqref{eq:BasisD}. We have
\[\mathfrak{W}_{D_{n+1}} = \left\langle s_i^D \bigg| (s_i^Ds_j^D)^2 = \text{id} \ (|i - j| \ge 2, \{i, j\} \ne \{0, 2\}), (s_i^Ds_{i-1}^D)^3 = \text{id} \ (2 \le i \le n), (s_0^Ds_1^D)^2 = \text{id}, (s^D_0s^D_2)^3 = \text{id} \right\rangle,\]
and $\mathfrak{W}_{D_{n+1}} \cong \mathfrak{S}_{n+1} \ltimes \mathfrak{N}_D$ with $\mathfrak{N}_D \cong (\mathbb{Z}_2)^n$. Thus $|\mathfrak{W}_{D_{n+1}}| = 2^n(n+1)!$.

In $\mathfrak{h}^*_D \cong \mathbb{R}^{n+1}$, the $\mathfrak{S}_{n+1}$ part of $\mathfrak{W}_{D_{n+1}}$ is generated by the orthogonal reflections $s_{\varepsilon_i - \varepsilon_j} (i \ne j)$ which interchanges $\varepsilon_i$ and $\varepsilon_j$ and leaves invariant the $\varepsilon_k$ with $k \ne i, j$, while the normal part $\mathfrak{N}_D$ is generated by $s_{ij} = s_{\varepsilon_i - \varepsilon_j}s_{\varepsilon_i + \varepsilon_j}$ which transforms $\varepsilon_i$ to $-\varepsilon_i$, $\varepsilon_j$ to $-\varepsilon_j$ and leaves invariant the $\varepsilon_k$ with $k \ne i, j$. The generator $s_0^D$ we chose above has the effect of interchanging $\varepsilon_1$ and $\varepsilon_{0}$ and then transforming both into their negatives.

Note that we have a one-to-one correspondence between even length strict partitions and elements in the normal subgroup $\mathfrak{N}_D$ by sending $\lambda = (\lambda_1, \lambda_2, \dots, \lambda_{2k}) (n \ge \lambda_j \ge \lambda_{j+1} \ge 0)$ to the unique element $\tilde{w} \in \mathfrak{N}_D$ such that $\tilde{w}(\varepsilon_{\lambda_j}) = -\varepsilon_{\lambda_j}$ for $1 \le j \le 2k$ and $\tilde{w}$ does not change the sign of $\varepsilon_l$ for $l \in \{0, 1, \dots, n\}\backslash \{\lambda_1, \lambda_2, \dots, \lambda_{2k}\}$. We denote this Weyl group element as $\tilde{w}_{\lambda}$ in the following.

\subsection{$B_n$ as sub-Lie algebra of $D_{n+1}$}\label{sec:BnDn1}
Now $B_n$ can be realized inside $(2n + 2) \times (2n + 2)$ matrix algebra in the following way (where the index is the same as we used for $\mathfrak{so}(2n+2)$).
We take the following Chevalley generators:
\[\begin{array}{rcl}
H_{\alpha_i} & = & {H}_{\tilde\alpha_i} = \tilde{E}_{-i, -i} - \tilde{E}_{-(i-1), -(i-1)} + \tilde{E}_{+(i-1), +(i-1)} - \tilde{E}_{+i, +i}, \qquad 1 < i \le n,\\
H_{\alpha_1} & = &{H}_{\tilde\alpha_1} + {H}_{\tilde\alpha_0} = 2(\tilde{E}_{-1, -1} - \tilde{E}_{1, 1}),
\end{array}\]
where $\alpha_n = \varepsilon_n - \varepsilon_{n-1}, \dots, \alpha_2 = \varepsilon_2 - \varepsilon_1, \alpha_1 = \varepsilon_1$.

We take the simple root vectors,
\[\begin{array}{rcl}
X_{\alpha_i} & = & = {X}_{\tilde\alpha_i} = \tilde{E}_{-i, -(i - 1)} + \tilde{E}_{+(i - 1), +i} \qquad 1 < i \le n,\\
X_{\alpha_1} & = & = \frac{1}{\sqrt{2}}({X}_{\tilde\alpha_1} + {X}_{\tilde\alpha_0}) = \frac{1}{\sqrt{2}}(\tilde{E}_{-1, +0} + \tilde{E}_{-0, +1} + \tilde{E}_{-1, -0} + \tilde{E}_{+0, +1}),
\end{array}\]
and the negative root generators,
\[\begin{array}{rcl}
Y_{\alpha_i} & = & X_{-\alpha_i} = \tilde{E}_{-(i-1), -i} + \tilde{E}_{+i, +(i-1)} \qquad 1 < i \le n,\\
Y_{\alpha_1} & = & X_{-\alpha_1} =\sqrt{2} (\tilde{E}_{+0, -1} + \tilde{E}_{+1, -0} + \tilde{E}_{-0, -1} + \tilde{E}_{+1, +0}).
\end{array}\]
The other negative root vectors are then generated by taking the commutators as before
\[[Y_{\alpha}, Y_{\beta}] = N_{\alpha, \beta}Y_{\alpha + \beta} \qquad \text{if} \quad \alpha, \beta, \alpha + \beta \in \Sigma_+,\]
where the set of positive roots is given by
\[\Sigma_+ = \left\{\begin{array}{cl}
\alpha_1 + \cdots + \alpha_i & (1 \le i \le n),\\
2\alpha_1 + 2\alpha_2 + \cdots + 2\alpha_{i} + \alpha_{i+1} + \cdots + \alpha_j  \qquad & (1 \le i < j \le n),\\
\alpha_i + \cdots + \alpha_{j-1} & (1 \le i < j \le n). 
\end{array}
\right\}\]
We have $|\Sigma_+| = n^2$.

The Weyl group $\mathfrak{W}_{B_n}$ associated with the algebra of type $B_n$ can be given as follows
\[s_i^B = \left\{\begin{array}{ll}
{s}_{-i}{s}_{i-1}, \qquad & 1 < i \le n,\\
{s}_{-1}{s}_0{s}_{-1}, & i = 1,
\end{array}
\right.\]
where ${s}_i$ is the simple reflection $(i, i+1)$. We have
\[\mathfrak{W}_{B_n} = \left\langle s_i^B \bigg| (s_i^Bs_j^B)^2 = e \ (|i - j| \ge 2), (s_i^Bs_{i-1}^B)^3 = e \ (3 \le i \le n), (s_2^Bs_1^B)^4 = e \right\rangle,\]
and $\mathfrak{W}_{B_n} \cong \mathfrak{S}_n \ltimes \mathfrak{N}_B$ with $\mathfrak{N}_B \cong (\mathbb{Z}_2)^n$. Thus $|\mathfrak{W}_{B_n}| = 2^nn!$. 

Note that in $\mathfrak{h}^*_B \cong \mathbb{R}^n$, the $\mathfrak{S}_n$ part of $\mathfrak{W}_{B_n}$ is generated by orthogonal reflections $s_{\varepsilon_i - \varepsilon_j} (i \ne j)$ which interchanges $\varepsilon_i$ and $\varepsilon_j$ and leaves $\varepsilon_k$ invariant when $k \ne i, j$, while the normal part $\mathfrak{N}_B$ is generated by reflections $s_{\varepsilon_i}$ which transforms $\varepsilon_i$ to $-\varepsilon_i$ and leaves invariant the $\varepsilon_k$ for $k \ne i$. So we can associate a strict partition $\lambda = (\lambda_1, \lambda_2, \dots, \lambda_k)$ (i.e. $n \ge \lambda_j > \lambda_{j+1} \ge 1$) with the unique element $w \in \mathfrak{N}_B$ such that $w(\varepsilon_{\lambda_j}) = -\varepsilon_{\lambda_j}$ for $1 \le j \le k$ and $w$ stabilizes all the other short roots. We denote this Weyl group element as $w_{\lambda}$ in the following.

It can be checked that with the basis chosen as above, we indeed obtain $\mathfrak{so}_{2n+1}$ as a sub-Lie algebra of $\mathfrak{so}_{2n+2}$. Let $V \subset \tilde{V}$ be the subspace of $\tilde{V}$ spanned by
\[\Xi = (e_{-n}, e_{-n+1}, \dots, e_{-1}, e_0, e_1, \dots, e_n)\]
where $e_0 = \frac{1}{\sqrt{2}}(e_{-0} + e_{+0})$. Let $\Psi, Q$ be the restriction of $\tilde{\Psi}, \tilde{Q}$ on $V$, respectively, then $\mathfrak{so}_{2n+1} \subset \mathfrak{so}_{2n+2}$ is nothing but the orthogonal Lie algebra associated with $(V, Q)$.

\begin{remark}\label{rem:Weyliso}
We have a canonical isomorphism between $\mathfrak{N}_B$ and $\mathfrak{N}_D$: for strict partition $\lambda = (\lambda_1, \dots, \lambda_k)$, i.e. $\lambda_1 > \cdots \lambda_k > 0$, we have
\[\begin{array}{rcl}
\iota:  \mathfrak{N}_B & \longrightarrow & \mathfrak{N}_D\\
w_{\lambda} & \mapsto & \tilde{w}_{\lambda'}
\end{array}\]
where $\lambda' = \lambda$ if $k$ is even and $\lambda' = (\lambda_1, \dots, \lambda_k, 0)$ if $k$ is odd. With such identification $\mathfrak{W}_B$ can be viewed as a subalgebra of $\mathfrak{W}_D$ generated by $s_i^B=s_i^D \ (2 \le i \le n)$ and $s_1^B = s_0^Ds_1^D$.
\end{remark}

\begin{remark}
In the above realization of $\mathfrak{so}_{2n+1}$ in $(2n+2) \times (2n+2)$ matrices, elements in $\mathfrak{so}_{2n+1}$ are in the centralizer of the following $(2n+2) \times (2n + 2)$ matrix,
\begin{equation}\label{eq:Z}
\mathcal{Z} = \begin{pmatrix}[ccc|c|ccc]
 &    & &    &  & & \\
 & 0 & & 0 &  &0 &\\
 &  &  &   & & & \\
\hline 
& \raisebox{-0.05cm}{$ 0 $} & & z & &\raisebox{-0.05cm}{$ 0 $} & \\
\hline
 &    & &    &  & & \\
 & 0 & & 0 &  & 0  &\\
 &  &  &   & & & \\
\end{pmatrix},
\end{equation}
where $z$ is the $2 \times 2$ matrix in the form
\[z = \begin{pmatrix} 1 & -1 \\ -1 & 1 \end{pmatrix}.\]
Counting the dimensions, we find that this condition uniquely characterizes all the elements of $\mathfrak{so}_{2n+1}$ inside $\mathfrak{so}_{2n+2}$. That is, $X \in \mathfrak{so}_{2n+2}$ belongs to $\mathfrak{so}_{2n+1}$ if and only if
\[[X, \mathcal{Z}] = 0.\]
We can take an element in $\mathfrak{so}_{2n+2}$ with all the above properties by considering
\[\mathcal{Z}_0 = \mathcal{Z} - \frac{1}{n+1}I_{2n+2}.\]
In our study of BKP hierarchy, we may consider the following sequence of embeddings
\[\begin{array}{cccccccccc}
\mathfrak{so}_3 & \subset &  \mathfrak{so}_5 & \subset & \mathfrak{so}_7 & \subset &  \cdots &  \subset \mathfrak{so}_{2n+1} & \subset & \cdots\\
\cap & & \cap & & \cap & & & \cap & & \\
\mathfrak{so}_4 & \subset &  \mathfrak{so}_6 & \subset & \mathfrak{so}_8 & \subset & \cdots & \subset \mathfrak{so}_{2n+2} & \subset & \cdots
\end{array}\]
where $\mathfrak{so}_{2k+2}$'s are in their standard realization in $(2k+2) \times (2k+2)$ matrices. Since eventually we take the limit $n \to \infty$, we may embed the $(2n+2) \times (2n+2)$ matrices representing elements in $\mathfrak{so}_{2n+1}$ into the central part of $(2n+4) \times (2n+4)$ matrices, and replace $\mathcal{Z}$ by
\[\tilde{\mathcal{Z}} = \begin{pmatrix}
-1 & & \\
& \mathcal{Z} & \\
& & -1
\end{pmatrix}\in \mathfrak{so}_{2n+4}\]
before taking the relevant limit to make everything consistent.
\end{remark}


\subsection{The spin representation}\label{sec:Spin}

Let $\tilde{S}$ be the exterior algebra of the maximal isotropic subspace generated by $\{e_{-0}, e_{-1}, e_{-2}, \dots, e_{-n}\}$. It is well-known that the spin representation of $\mathfrak{so}_{2n+2}$ on $\tilde{S}$ is reducible and it decomposes into two irreducible $\mathfrak{so}_{2n+2}$-modules -- $\tilde{S}_+$ and $\tilde{S}_-$ which as vector space are linearly spanned by $\bigwedge^p \tilde{F}$ for $p$ even and odd, respectively (c.f. \cite{Chevalley1954}). These two representations are called semi-spinor representations of $\mathfrak{so}_{2n+2}$.

For example the spin representation of $\mathfrak{so}_{2n+2}$ on $\tilde{S}_-$ now takes the following form: First, the underlying vector space $\tilde{S}_-$ is given by
\[\tilde{S}_- = \{s \wedge e_{-0}: s \in S \text{ is even degree}\} \bigcup \{s : s \in S \text{ is odd degree}\},\]
where $S$ is the exterior algebra generated by $\{e_{-1}, e_{-2}, \dots, e_{-n}\}$. That is each element in $\tilde{S}_-$ can be uniquely written as
\[s = \sum \limits_{n \ge i_1 > i_2 > \cdots > i_k \ge 0}\xi_{i_1, i_2, \cdots, i_k}e_{-i_1} \wedge e_{-i_2} \wedge \cdots \wedge e_{-i_k}\]
where $k$ is an odd integer.
The action of Cl$(\tilde{V}, \tilde{Q})$ on $\tilde{S}_-$ is determined by
\begin{align*}
& \rho_-(x) \cdot (a_1 \wedge \cdots \wedge a_k) = x \wedge a_1 \wedge \cdots \wedge a_k, \qquad \text{for } x \in \tilde{F};\\
& \rho_-(y) \cdot (a_1 \wedge \cdots \wedge a_k) = \sum \limits_{i = 1}^k (-1)^{i-1}\tilde{\Psi}(a_i, y)a_1 \wedge \cdots \wedge a_{i-1} \wedge a_{i+1} \wedge \cdots \wedge a_k, \quad \text{for } y \in \tilde{F'}.
\end{align*}
At last the Lie algebra $\mathfrak{so}_{2n+2}(\mathbb{C})$ acts on $\tilde{S}_-$ through
\begin{align}\label{eq:spinD}
\begin{array}{ccc}
\varphi: \mathfrak{so}_{2n+2}(\mathbb{C}) & \longrightarrow & \text{Cl}(\tilde{V}, \tilde{Q})\\
(-1)^j\tilde{E}_{i, -j} - (-1)^i\tilde{E}_{j,-i} & \mapsto & \frac{1}{2}(e_ie_j-e_je_i).
\end{array}
\end{align}
We see that $e_{-n} \wedge e_{-n+1} \wedge \dots \wedge e_{-0}$ is the highest weight vector if $n$ is even, and $e_{-n} \wedge e_{-n+1} \wedge \dots \wedge e_{-1}$ is the highest weight vector if $n$ is odd. 

As a su-Lie balgebra of $\mathfrak{so}_{2n+2}$, $\mathfrak{so}_{2n+1}$ acts on $\tilde{S}_-$ through
\[ e_0 \cdot e_{-0} = \frac{1}{\sqrt{2}} \qquad \text{with} \quad e_0 = \frac{e_{+0} + e_{-0}}{\sqrt{2}}.\]
The spin representation of $\mathfrak{so}_{2n+1}$ on $\tilde{S}_+$ can be defined similarly with $e_{-n} \wedge e_{-n+1} \wedge \dots \wedge e_{-0}$ or $e_{-n} \wedge e_{-n+1} \wedge \dots \wedge e_{-1}$ as the highest weight vector when $n$ is odd or even, respectively.

Now we note that the two semi-spinor representations of $\mathfrak{so}_{2n+2}$ are actually equivalent representations when restricted to the sub-Lie algebra $\mathfrak{so}_{2n+1}$ of $\mathfrak{so}_{2n+2}$, and they are both isomorphic to the spin representation of $\mathfrak{so}_{2n+1}$ (c.f. \cite{Kodama-Xie2021f-KT}).  
The implications of this assertion to $\tau_{\text{BKT}}(\t_B; g)$ can be summarized in the following proposition.
\begin{proposition}
The square of the $\tau$-function defined through the spin representation in \eqref{eq:spinD} admits the following three expressions
\begin{align}\label{eq:tausquare}
\tau^2(\mathbf{t}_B; w_*,) & = \langle e_{-n} \wedge e_{-n+1} \wedge \cdots \wedge e_{-1} \wedge e_{-0}, \exp(\Theta_{e}(\mathbf{t}_B)) \tilde{N}\dot{w}_* \cdot e_{-n} \wedge e_{-n+1} \wedge \cdots \wedge e_{-1} \wedge e_{-0} \rangle \\
& = \langle e_{-n} \wedge e_{-n+1} \wedge \cdots \wedge e_{-1} \wedge e_{+0}, \exp(\Theta_{e}(\mathbf{t}_B)) \tilde{N}\dot{w}_* \cdot e_{-n} \wedge e_{-n+1} \wedge \cdots \wedge e_{-1} \wedge e_{+0} \rangle \nonumber\\
& = \langle e_{-n} \wedge e_{-n+1} \wedge \cdots \wedge e_{-1}, \exp(\Theta_{e}(\mathbf{t}_B)) \tilde{N}\dot{w}_* \cdot e_{-n} \wedge e_{-n+1} \wedge \cdots \wedge e_{-1}\rangle, \nonumber
\end{align}
where again $\langle \cdot, \cdot \rangle$ is the standard inner product on the wedge space $\bigwedge^{n+1}\mathbb{C}^{2n+2}$ or $\bigwedge^{n}\mathbb{C}^{2n+2}$. 
\end{proposition}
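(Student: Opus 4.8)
The plan is to reduce everything to the two expressions already recorded in the Kodama--Xie theorem quoted above, namely
\[
\tau^2(\t_B; w_*) \;=\; \langle \omega,\, A\,\omega\rangle \;=\; \langle \omega\wedge e_0,\, A\,(\omega\wedge e_0)\rangle ,
\]
where $\omega := e_{-n}\wedge e_{-n+1}\wedge\cdots\wedge e_{-1}$, $e_0 = \tfrac1{\sqrt2}(e_{-0}+e_{+0})$, and $A := \exp(\Theta_e(\t_B))\,\tilde N\dot w_* \in \mathcal{SO}_{2n+1}\subset\mathcal{SO}_{2n+2}$, acting on the exterior powers of $\tilde V$. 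Thus the third formula is immediate and only the first two --- in which $e_0$ is replaced by $e_{-0}$, resp.\ $e_{+0}$ --- require an argument. First I would introduce the vector $u := e_{-0}-e_{+0}$, which spans the $\tilde\Psi$-orthogonal complement of $V$ in $\tilde V$ and is anisotropic ($\langle u,u\rangle = 2$, $\tilde\Psi(u,u) = -2$); note $e_{\pm0} = \tfrac1{\sqrt2}e_0 \pm \tfrac12 u$, and write $V = V_-\oplus\mathbb{C}e_0$ with $V_- := \mathrm{span}(e_{\pm i}:1\le i\le n)$.

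The one structural input is the behaviour of $A$ on this decomposition. Since $A\in\mathcal{SO}_{2n+1}$, by the characterization recalled in the Remark it centralizes $\mathcal{Z}$, hence preserves both summands of the orthogonal splitting $\tilde V = V\oplus\mathbb{C}u$; and because $\mathcal{SO}_{2n+1}$ is connected while $u$ is anisotropic, $A$ must in fact fix $u$. Consequently $A$ preserves the induced orthogonal decomposition
\[
\textstyle\bigwedge^{n+1}\tilde V \;=\; \bigwedge^{n+1}V \;\oplus\; \big(\bigwedge^{n}V\big)\wedge u ,
\]
acting on the second summand by $\xi\wedge u\mapsto (A\xi)\wedge u$.

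Now I would simply expand. Put $a := \omega\wedge e_0\in\bigwedge^{n+1}V$ and $b := \omega\wedge u\in(\bigwedge^{n}V)\wedge u$, so that $\omega\wedge e_{\pm0} = \tfrac1{\sqrt2}a\pm\tfrac12 b$. Because $Aa\in\bigwedge^{n+1}V$ and $Ab = (A\omega)\wedge u\in(\bigwedge^{n}V)\wedge u$ lie in orthogonal summands, the cross terms $\langle a,Ab\rangle$ and $\langle b,Aa\rangle$ vanish, whence
\[
\langle\omega\wedge e_{\pm0},\,A(\omega\wedge e_{\pm0})\rangle \;=\; \tfrac12\langle a,Aa\rangle + \tfrac14\langle b,Ab\rangle .
\]
Here $\langle a,Aa\rangle = \langle\omega\wedge e_0,A(\omega\wedge e_0)\rangle = \tau^2$ by the quoted theorem, while, since $\omega$ involves none of $e_0,e_{\pm0}$, $u\perp V_-$ and $\langle u,u\rangle = 2$, a short computation (writing $A\omega = \chi_0 + \chi_1\wedge e_0$ with $\chi_0\in\bigwedge^n V_-$, and noting $e_0\wedge u = -\sqrt2\,e_{-0}\wedge e_{+0}$ is orthogonal to $(\bigwedge^{n}V_-)\wedge u$) gives $\langle b,Ab\rangle = \langle\omega\wedge u,(A\omega)\wedge u\rangle = 2\langle\omega,A\omega\rangle = 2\tau^2$. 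Therefore $\langle\omega\wedge e_{\pm0},A(\omega\wedge e_{\pm0})\rangle = \tfrac12\tau^2 + \tfrac12\tau^2 = \tau^2$, so all three expressions agree. (The equality of the $e_{-0}$- and $e_{+0}$-expressions can also be seen instantly from the orthogonal involution $\sigma$ that swaps $e_{\pm0}$ and fixes the rest of $\tilde\Xi$: it fixes $\omega$, and since it acts as $+1$ on $V$ and $-1$ on $\mathbb{C}u$ it commutes with $A$.)

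The only genuine obstacle is the structural step --- recognizing that membership of $A$ in $\mathcal{SO}_{2n+1}$ forces $Au = u$ and hence preserves the two exterior summands; everything afterwards is bookkeeping of the normalization constants $\tfrac1{\sqrt2}$, $\tfrac12$ and the factor $\langle u,u\rangle = 2$, which must conspire exactly. I would also take care to confirm that the inner product appearing in the Kodama--Xie theorem is the standard Euclidean/Hermitian form on $\bigwedge^\bullet\mathbb{C}^{2n+2}$ in the basis $\tilde\Xi$ (so that $\langle e_0,e_0\rangle = 1$ with $e_0 = \tfrac1{\sqrt2}(e_{-0}+e_{+0})$), which is the convention fixed earlier; with a different normalization of $e_0$ these constants would shift and the three formulas would only agree up to scalars.
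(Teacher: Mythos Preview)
Your argument is correct. The key structural step---that every $A\in\mathcal{SO}_{2n+1}$ fixes the anisotropic vector $u=e_{-0}-e_{+0}$ and therefore preserves the orthogonal splitting $\bigwedge^{n+1}\tilde V=\bigwedge^{n+1}V\oplus(\bigwedge^{n}V)\wedge u$---is justified exactly as you say (it follows from $[X,\mathcal Z]=0$ together with skew-adjointness of $X$ with respect to $\tilde\Psi$, forcing $Xu=0$), and the subsequent bookkeeping with $a=\omega\wedge e_0$, $b=\omega\wedge u$, the vanishing of cross terms, and the identity $\langle\omega\wedge u,(A\omega)\wedge u\rangle=2\langle\omega,A\omega\rangle$ all check out.

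The paper, by contrast, does not give a direct computation. It deduces the proposition from the representation-theoretic fact (attributed to \cite{Kodama-Xie2021f-KT}) that the two semi-spinor modules $\tilde S_\pm$ of $\mathfrak{so}_{2n+2}$ become equivalent upon restriction to $\mathfrak{so}_{2n+1}$, and that both are isomorphic to the spin module $S$. Under that equivalence the highest weight vectors of $\tilde S_+$ and $\tilde S_-$ correspond, which is what produces the $e_{-0}$ versus $e_{+0}$ (and, via $S$ itself, the bare $\omega$) expressions for $\tau^2$. So the paper's route is conceptual and relies on an external reference, whereas yours is an elementary self-contained calculation on $\bigwedge^\bullet\tilde V$ that only uses the already-quoted Kodama--Xie formula for $\langle\omega,A\omega\rangle=\langle\omega\wedge e_0,A(\omega\wedge e_0)\rangle$. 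Your argument has the advantage of making the role of the distinguished vector $u$ and the exact normalization $e_0=\tfrac1{\sqrt2}(e_{-0}+e_{+0})$ completely transparent; the paper's argument has the advantage of explaining \emph{why} such identities should exist at the level of $\mathfrak{so}_{2n+1}$-modules rather than verifying them by hand.
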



\subsection{The orthogonal Grassmannian manifold}\label{sec:Orthogonal}
Let $\mathfrak{so}_{2n+1}$ be the orthogonal Lie algebra associated with the quadratic space $(V, \Psi)$, where $V$ is a $2n+1$ dimensional vector space, and $\Psi$ a non-degenerate symmetric bilinear form of maximal index $n$ on $V$. A subspace $L \subset V$ is called Lagrangian if it is maximal totally isotropic with respect to $\Psi$. 
For example the space $L_0$  spanned by $e_i (i < 0)$ is a Lagrangian subspace of $V$. We let $\mathcal{P}_0 = \{T \in \mathcal{SO}(V) | TL_0 = L_0\} \subset \mathcal{SO}(V)$ be the isotropic subgroup of $L_0$, then the orthogonal Grassmannian manifold is defined as
\[\text{OG}(V)= \mathcal{SO}(V) \slash \mathcal{P}_0 = \{\text{the totality of Lagrangian subspaces of $V$}\}.\]
Since dim $\mathcal{SO}(V) = n(2n+1)$ and a maximal parabolic subgroup of $\mathcal{SO}(V)$ which fix $L_0$ has dimension $n^2+n+\frac{n(n-1)}{2}$, we have dim $\text{OG}(V) = \frac{n(n+1)}{2}$. The orthogonal Grassmannian manifold $\text{OG}(V)$ has the following Schubert cell decomposition
\[\mathcal{SO}(V) = \bigsqcup \limits_{w_{\lambda} \in \mathfrak{N}_B}\mathcal{N}_-\dot{w}_{\lambda}\mathcal{P}_0.\]
We will denote by $\mathcal{L}_{\lambda}$  the Schubert cell associated with $w_{\lambda} \in \mathfrak{N}_B$ in the following.

Let $(\rho, S)$ be the spin representation of $\mathfrak{so}_{2n+1}$ with highest weight vector $v^{\omega_1}$ as defined in Section \ref{sec:Spin}. 
Each element in $S$ can be uniquely written as
\begin{equation}\label{eq:spinor}
s = \sum \limits_{n \ge i_1 > i_2 > \cdots > i_k > 0} \xi_{i_1, i_2, \cdots, i_k}e_{-i_1} \wedge e_{-i_2} \wedge \cdots \wedge e_{-i_k}.
\end{equation}
Elements in $S$ are usually called spinors in the literature. E. Cartan showed that OG$(V)$ can  be embedded into $\mathbb{P}(S)$ through the so-called Cartan map as follows (c.f.  \cite{Cartan1966, Balogh-Harnad-Hurtubise2021}). Let $L = \mathbb{C}\langle w_1, w_2, \dots, w_n\rangle \in \text{OG}(V)$ be a point in the orthogonal Grassmannian manifold with basis $\{w_1, w_2, \dots, w_n\}$, then it can be checked that $(\rho(w_1) \cdot \dots \cdot (\rho(w_{n-1}) \cdot (\rho(w_n) \cdot S))) \subset S$ is a $1$ dimension dimensional subspace of $S$, and we define the image of $w$ in $\mathbb{P}(S)$ to be the unique point representing this one dimension subspace $\mathbb{C}\langle \rho(w_1) \cdot \dots \cdot (\rho(w_{n-1}) \cdot (\rho(w_n) \cdot S)) \rangle$. For example, taking $L = L_0$, then the one dimensional subspace of $S$ is generated by $e_{-1} \wedge \dots \wedge e_{-n} \in S$. The image of OG$(V)$ under the Cartan map is the orbit of $SO(V)$ through $e_{-1} \wedge \dots \wedge e_{-n} \in S$ in $\mathbb{P}(S)$ through the spin representation, and elements sitting in the image of the Cartan map are called pure spinors. Let $s \in S$ be a pure spinor, then the Lagrangian subspace $L_s$ associated with $s$ is the linear span of $w \in V$ such that
\[\rho(w) \cdot s = 0, \qquad \text{for } w \in L_s.\]
Thus, a pure spinor is an element in $S$ which represents a point in the orthogonal Grassmannian manifold. 

Since dim $\mathbb{P}(S) = 2^n - 1 \ge \text{dim OG}(V) = \frac{n(n+1)}{2}$, the image of OG$(V)$ in $\mathbb{P}(S)$ has a huge codimension when $n$ is large, and they are characterized by a system of complicated quadratic equations called Cartan-Pl\"ucker relations (c.f. \cite{Cartan1966, Chevalley1954}). We will give a constructive description for the Cartan coordinates of pure spinors in Section \ref{sec:purespinor}.


\section{Generic solutions of $B$-Toda and BKP hierarchies}\label{sec:genericsolution}
In this section, we give a quick derivation of the Pfaffian expression for $\tau$-functions of the $B$-Toda and the BKP hierarchies in the generic case, that is when $w_* = \text{id}$.

\subsection{Normal form for elements in the orthogonal Grassmannian manifold: the generic case}\label{sec:normalforme}
Recall that our goal is to simplify the expression for the square of the $\tau$-function defined by
\[\tau_{\text{BKT}}^2(\mathbf{t}_B; w_*)  = \langle e_{-n} \wedge e_{-n+1} \wedge \cdots \wedge e_{1} \wedge e_{-0}, \exp(\Theta_{e}(\mathbf{t}_B)) \tilde{N}\dot{w}_* \cdot e_{-n} \wedge e_{-n+1} \wedge \cdots \wedge e_{-1} \wedge e_{-0} \rangle,\]
and for that purpose we would like to find a convenient normal form for the unipotent matrix $\tilde{N}$. 

We deal with the case when $w_* = \text{id}$ in this section. That is we will calculate the principal $(n+1) \times (n+1)$ minor of the $(2n + 2) \times (2n + 2)$ matrix $\exp(\Theta_{e}(\mathbf{t}_B)) \tilde{N}$. 

First we note that the one-dimensional subspace spanned by the highest weight vector $e_{-n} \wedge e_{-n+1} \wedge \cdots \wedge e_{-1} \wedge e_{-0} $ is invariant under the Siegel parabolic subgroup $\mathcal{P}_0 \in \mathcal{SO}_{2n+1}(\mathbb{C})$, i.e. $(2n + 2) \times (2n + 2)$ matrices in the following form:
\[P_0 = \begin{pmatrix}[ccc|c|ccc]
 &    & &    &  & & \\
 & A_0 & & w_0 &  & B_0 &\\
 &  &  &   & & & \\
\hline 
& \raisebox{-0.05cm}{$ 0 $} & & I_2 & &\raisebox{-0.05cm}{$ z_0^T $} & \\
\hline
 &    & &    &  & & \\
 & 0 & & 0 &  & D_0  &\\
 &  &  &   & & & \\
\end{pmatrix}\]
where $A_0, B_0, D_0$ are $n \times n$ matrices and $w_0, z_0$ are $n \times 2$ matrices.

Then our goal is to choose a convenient representative in the class $\tilde{N}\mathcal{P}_0$. 

\begin{proposition}\label{prop:normalforme}
There is a unique $N_{\text{id}} \in \tilde{N}\mathcal{P}_0$ in the following form:
\begin{equation}\label{eq:normalformn}
N_{\text{id}} = \begin{pmatrix}[ccc|c|ccc]
 &    & &    &  & & \\
 & J & & 0 &  & 0 &\\
 &  &  &   & & & \\
\hline 
& \raisebox{-0.05cm}{$ -r^T $} & & I_2 & &\raisebox{-0.05cm}{$ 0 $} & \\
\hline
 &    & &    &  & & \\
 & R - \frac{1}{2}rr^T & & r &  & J  &\\
 &  &  &   & & & \\
\end{pmatrix},
\end{equation}
where $J$ is an $n \times n$ matrix defined in Equation \eqref{eq:tildeM}, $R = (r_{i,j})_{1 \le i, j \le n}$ is an $n \times n$ skew-symmetric matrix and the transpose of $r$ is a $2 \times n$ matrix in the following form:
\[r^T = \begin{pmatrix}
r_{1,0} & r_{2,0} & \cdots & r_{n,0}\\
r_{1,0} & r_{2,0} & \cdots & r_{n,0}
\end{pmatrix}.\]
\end{proposition}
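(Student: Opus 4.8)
The plan is to produce the claimed normal form by successively ``moving to the right'' the arbitrary data in $\tilde N$ using the freedom of right-multiplication by $\mathcal P_0$, exactly as one does when computing the canonical form of a point in a Grassmannian sitting in the big Schubert cell. First I would write a general unipotent lower-triangular $\tilde N \in \mathcal N_-$ in block form with respect to the splitting $V = L_0 \oplus \langle e_{-0},e_{+0}\rangle \oplus L_0'$ (rows/columns grouped as the three blocks of size $n$, $2$, $n$), so that $\tilde N$ has identity diagonal blocks, zero above-diagonal blocks, and arbitrary strictly-lower entries. The Siegel parabolic $\mathcal P_0$ in the stated block form acts on the right; its action on the coset $\tilde N \mathcal P_0$ lets us independently normalize (i) the top-left $n\times n$ block, (ii) the middle-left $2\times n$ block, and (iii) the bottom-left $n\times n$ block, since $A_0,B_0,D_0,w_0,z_0$ together furnish enough parameters. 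Concretely, I would choose the $\GL_n$-part $A_0$ (resp.\ $D_0$) of $P_0$ to turn the top-left block into $J$ (resp.\ the bottom-right block into $J$), which is possible because $\tilde N$ is unipotent so these blocks are invertible; this also pins down the block structure uniquely up to the remaining unipotent freedom.

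The key step is then to exploit the \emph{orthogonality} constraint $\tilde N^T \tilde M \tilde N = \tilde M$ (equivalently $N_{\mathrm{id}} \in \mathcal{SO}_{2n+1}$, i.e.\ it also commutes with $\mathcal Z$ from \eqref{eq:Z}). Writing out this quadratic relation block by block gives: the bottom-left block must be skew-symmetric up to the correction coming from the middle block, which is precisely the shape $R - \tfrac12 rr^T$ with $R$ skew; the middle-left $2\times n$ block must have its two rows equal, which is exactly the stated form of $r^T$ (this is the effect of intersecting $\mathfrak{so}_{2n+2}$ with the centralizer of $\mathcal Z$, equivalently of the $e_{-0}\mapsto e_{+0}$ symmetry that distinguishes $\mathfrak{so}_{2n+1}\subset\mathfrak{so}_{2n+2}$); and the remaining off-diagonal blocks are forced to vanish. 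The relation $\tilde Q(e_0\text{-column})=0$ is what produces the coefficient $-\tfrac12$ in $R-\tfrac12 rr^T$ and the $+r$ in the lower-middle position. I would verify the constraints are not only necessary but sufficient: any matrix of the form \eqref{eq:normalformn} with $R$ skew-symmetric and $r$ of the stated form does satisfy $N_{\mathrm{id}}^T\tilde M N_{\mathrm{id}}=\tilde M$, which is a short direct computation using $J^TJ = \tilde M$-block identities.

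For \textbf{uniqueness}, I would argue that the stabilizer of the highest weight line inside $\mathcal N_-$ is trivial: if $N, N' \in \tilde N\mathcal P_0$ are both of the form \eqref{eq:normalformn}, then $N^{-1}N' \in \mathcal P_0 \cap \mathcal N_-'$ where $\mathcal N_-'$ is the unipotent radical opposite to $\mathcal P_0$; the only element in this intersection is the identity, since $\mathcal P_0$ is upper-block-triangular and a nontrivial element of $\mathcal N_-'$ has nonzero strictly-lower blocks. Hence the $R$ and $r$ are determined by the coset, giving the inhomogeneous coordinates. The main obstacle I anticipate is purely bookkeeping: keeping the signs straight in the quadratic form $\tilde M$ (with its alternating $(-1)^i$ and the $J$ having a $(-1)^n$ corner) while expanding $\tilde N^T\tilde M\tilde N = \tilde M$ in blocks, and correctly tracking how the two coalescing basis vectors $e_{-0},e_{+0}$ versus the single $e_0 = \tfrac{1}{\sqrt2}(e_{-0}+e_{+0})$ interact — this is where the ``equal rows'' condition on $r^T$ and the precise $-\tfrac12 rr^T$ term must emerge, and it is easy to be off by a factor of $2$ or a sign. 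Everything else is a routine Gaussian-elimination-style normalization that the parabolic action makes available.
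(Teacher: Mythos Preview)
Your proposal is correct and follows essentially the same route as the paper: write $\tilde N$ and a generic $P_0\in\mathcal P_0$ in blocks, use the $\GL_n$-freedom $A_0$ in $P_0$ to force the top-left block of $\tilde N P_0$ to equal $J$ (the paper takes $A_0=\tilde A^{-1}J$ explicitly), and then read off the remaining blocks from the orthogonality constraint $N_{\mathrm{id}}^T\tilde M N_{\mathrm{id}}=\tilde M$ together with the $\mathcal{SO}_{2n+1}$ condition $[N_{\mathrm{id}},\mathcal Z]=0$, which yields the skew-symmetry of $R$, the equal-rows form of $r^T$, and the $-\tfrac12 rr^T$ correction. One small caution: you write ``choose $A_0$ (resp.\ $D_0$)'' as if these were independent, but the relation $A_0^T J D_0 = J$ coming from $P_0\in\mathcal{SO}_{2n+2}$ ties them together; fortunately the companion relation $\tilde A^T J\tilde D = J$ for $\tilde N$ ensures that once the top-left block is $J$ the bottom-right block is automatically $J$ as well, so no extra freedom is needed there. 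Your uniqueness argument via $\mathcal P_0\cap(\text{opposite unipotent radical})=\{1\}$ is a clean conceptual replacement for the paper's more constructive ``the choice of $A_0$ determines $N_{\mathrm{id}}$'' line.
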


\begin{proof}
First we note that as an element in $\mathcal{SO}_{2n+2}(\mathbb{C})$, $P_0$ is constrained by
\begin{align*}
& P_0^T\tilde{M}P_0 = \tilde{M}.
\end{align*}
Writing this down explicitly, we have
\begin{align*}
& A_0^T J D_0 = J,\\
& D_0^TJ^Tw_0 + z_0\sigma = 0,\\
& D_0^TJ^TB_0 + z_0\sigma z_0^T + B_0^T J D_0 = 0,
\end{align*}
where $\sigma$ is the $2 \times 2$ matrix defined in equation \eqref{eq:tildeM}.

Thus $A_0 \in \mathcal{GL}_n(\mathbb{C})$ and the $2 \times n$ matrix $w_0$ could be arbitrarily chosen, $B_0$ is an $n \times n$ matrix such that
\[B_0^TJD_0 + \frac{1}{2}z_0\sigma z_0^T\]
is a skew-symmetric and $D_0, z_0$ could then be determined by the other equalities above subsequently. Of course as an element in $\mathcal{SO}_{2n+1}(\mathbb{C})$ we need to put some further constraints on $w_0$, but we do not need them for now.

On the other hand, a general element in $\mathcal{N}_-$ has the following form
\[\tilde{N} = \begin{pmatrix}[ccc|c|ccc]
 &    & &    &  & & \\
 & \tilde{A} & & 0 &  & 0 &\\
 &  &  &   & & & \\
\hline 
& \raisebox{-0.05cm}{$ \tilde{r}^T $} & & I_2 & &\raisebox{-0.05cm}{$ 0 $} & \\
\hline
 &    & &    &  & & \\
 & \tilde{C} & & \tilde{s} &  & \tilde{D}  &\\
 &  &  &   & & & \\
\end{pmatrix} \qquad \text{with} \begin{array}{l}
\tilde{C} = (\tilde{A}^TJ)^{-1}(\tilde{R} - \frac{1}{2}\tilde{r}\tilde{r}^T) \text{ where $\tilde{R}$ is skew-symmetric} ,\\
\tilde{A}^TJ\tilde{D} = J \text{ and } \tilde{A}^TJ\tilde{s} + \tilde{r} = 0.
\end{array}
\]
The first column of $\tilde{N}P_0$ in block form then reads as:
\begin{align*}
\begin{pmatrix}[ccc]
 &    &  \\
 & \tilde{A}A_0 & \\
 &  &   \\
\hline 
& \raisebox{-0.05cm}{$\tilde{r}^TA_0$} &  \\
\hline
 &    &  \\
 & \tilde{C}A_0 & \\
 &  &   \\
\end{pmatrix}.
\end{align*}
Now we specialize our choice of $A_0$ to simplify the above matrix by taking $A_0 = \tilde{A}^{-1}J$, and let
\begin{align*}
& r = A_0^T\tilde{r}, \qquad \text{or equivalently} \quad r^T = \tilde{r}^T\tilde{A}^{-1}J,\\
& R = \tilde{C}A_0 + \frac{1}{2}rr^T = \tilde{C}\tilde{A}^{-1}J + \frac{1}{2}rr^T.
\end{align*}
This partially constraints the entries of $P_0$. We choose an arbitrary one satisfying these constraints and denote it by $P_{\tilde{N}}$ . Then the condition $N_{\text{id}} = \tilde{N}P_{\tilde{N}} \in \mathcal{SO}_{2n+2}(\mathbb{C})$, i.e. $N_{\text{id}}^T \tilde{M}N_{\text{id}} = \tilde{M}$, implies that it has the form \eqref{eq:normalformn} with $R$ an $n \times n$ skew-symmetrix matrix, and the condition $N_{\text{id}} \in \mathcal{SO}_{2n+1}$ further implies that $r$ has the form in the statement of the Proposition. Counting dimensions, we see that other than these constraints, $R$ and $r$ are arbitrary. 
\end{proof}

\begin{remark}\label{rem:normalforme}
Note that
\begin{equation*} 
\begin{pmatrix}[ccccc]
0 & 0 & \cdots & (-1)^n & 0 \\
\vdots & \vdots & \udots & \vdots & 0 \\
0 & 1 & \cdots & \cdots & 0 \\
-1 & 0 & \cdots & \vdots & 0 \\
-r_{10} & -r_{20} & \cdots & -r_{n0} & 1\\
-r_{10} & -r_{20} & \cdots & -r_{n0} & 0\\
-r_{10}^2 & -r_{21} - r_{10}r_{20} & \cdots & -r_{n1} - r_{10}r_{n0} & r_{10}\\
r_{21} - r_{10}r_{20} & - r_{20}^2 & \cdots & -r_{n2} - r_{20}r_{n0} & r_{20}\\
\vdots & \vdots & \ddots & \vdots & \vdots\\
r_{n1} - r_{10}r_{n0} & r_{n2} - r_{20}r_{n0} & \cdots & -r_{n0}^2 & r_{n0} 
\end{pmatrix} = \begin{pmatrix}[ccccc]
0 & 0 & 0 & \cdots & (-1)^n  \\
\vdots & \vdots & \vdots & \udots & \vdots  \\
0 & 0 & 1 & \cdots & \cdots  \\
0 & -1 & 0 & \cdots & \vdots  \\
1 & 0 & 0 & \cdots & 0 \\
0 & -r_{10} & -r_{20} & \cdots & -r_{n0} \\
r_{10} & 0 & -r_{21}  & \cdots & -r_{n1} \\
r_{20} & r_{21}  & 0 & \cdots & -r_{n2}\\
\vdots & \vdots & \ddots & \vdots & \vdots\\
r_{n0} & r_{n1}  & r_{n2} & \cdots & 0 
\end{pmatrix} \begin{pmatrix}[ccccc]
-r_{10} & -r_{20} & \cdots & -r_{n0} & 1\\
1 & 0 & \cdots & \vdots & \vdots\\
0 & 1 & \cdots & \vdots & \vdots \\
\vdots & \vdots & \ddots & \vdots & \vdots\\
0 &  0 & \cdots & 1 & 0 
\end{pmatrix}.
\end{equation*}
We denote the two matrices on the right hand side by $G_{\tilde{N}}$ and $A_{\tilde{N}}$ respectively in the following, then we have 
\begin{align*}
& \tilde{N} \cdot e_{-n} \wedge e_{-n+1} \wedge \dots \wedge e_{-1} \wedge e_{-0} \\
&  = \sum \limits_{I \in \binom{[-n, n]}{n+1}}\Delta_{I, [-n_0]}(\tilde{N}) e_{i_1} \wedge e_{i_2} \wedge \dots \wedge e_{i_{n+1}}\\
&  = d_{\tilde{N}}\sum \limits_{I \in \binom{[-n, n]}{n+1}}\Delta_{I, [-n_0]}({N}_{\text{id}}) e_{i_1} \wedge e_{i_2} \wedge \dots \wedge e_{i_{n+1}}\\
&  = -d_{\tilde{N}} \sum \limits_{I \in \binom{[-n, n]}{n+1}}\Delta_I({G_{\tilde{N}}}) e_{i_1} \wedge e_{i_2} \wedge \dots \wedge e_{i_{n+1}},
\end{align*}
where $d_{\tilde{N}} = \det(J\tilde{A})$, $\Delta_I(G_{\tilde{N}})$ is the minor of $G_{\tilde{N}}$ with rows indexed by $I$, and $I = {i_1 \le i_2 \dots \le {i_{n+1}}} \subset [-n, n] := \{-n, -n+1, \dots, -1, -0, +0, 1, \dots, n\}$,  $[-n_0] = \{-n, -n+1, \dots, -1, -0\}$ are two index sets.
\end{remark}

\subsection{Schur-$Q$ function}\label{sec:SchurQ}
It is known that Schur-$Q$ functions are basis of the polynomial solutions of the BKP hierarchy (see, e.g. \cite{You1989}). Let's give a brief introduction to these symmetric functions\footnote{The Schur-Q function as we defined here is a little bit different from what normally used in the literature for combinatorics (see, e.g. \cite{Jozefiak-Pragacz1991, Macdonald1995}). $Q_{\lambda}(\t_B)$ is actually the function $g_{\lambda}(\t_B)$ introduced in the BKP setting in \cite{You1989}. Denoting the functions defined in \cite{Jozefiak-Pragacz1991, Macdonald1995} by $\tilde{Q}_{\lambda}$, then $Q_{\lambda}(\ttt_B) = 2^{-l(\lambda) \slash 2}\tilde{Q}_{\lambda}(\ttt_B \slash 2)$, where $l(\lambda)$ is the length of $\lambda$.}, and see how they arise naturally in our context.

Let $q_i(\ttt_B)$ be polynomials in $\t_B = (t_1, t_3, t_5, \cdots)$ defined by the following generating series
\[e^{\theta(\ttt_B, z)} = \sum \limits_{i = 0}^{\infty}q_i(\ttt_B)z^i \qquad \text{with} \quad \theta(\ttt_B, z) = \sum \limits_{k = \text{odd}}t_kz^k.\]
For integers $\lambda_i, \lambda_j \ge 0$ we define the elementary Schur-$Q$ functions to be
\[Q_{\lambda_i, \lambda_j}(\ttt_B) = \frac{1}{2} q_{\lambda_i}(\ttt_B)q_{\lambda_j}(\ttt_B) +  \sum \limits_{k = 1}^{\lambda_j}(-1)^k q_{\lambda_i + k}q_{\lambda_j - k}.\]
Note that $Q_{\lambda_i, \lambda_j} = -Q_{\lambda_j, \lambda_i}$ and $Q_{\lambda_1,0} = \frac{1}{\sqrt{2}}q_{\lambda_1} = -Q_{0, \lambda_1}$. For each strict partition $\lambda = (\lambda_1, \dots, \lambda_m)$, 
we define $Q_{\lambda}$ as follows
\begin{equation}
Q_{\lambda} = \left\{\begin{array}{lll}
\text{Pf}(Q_{\lambda_i,\lambda_j}), \qquad & 1 \le i, j \le m \qquad & \text{for $m$ even},\\
\text{Pf}(Q_{\lambda'_i,\lambda'_j}), \qquad & 1 \le i, j \le m + 1 \qquad & \text{for $m$ odd},
\end{array}
\right.
\end{equation}
where $\lambda' = (\lambda_1, \dots, \lambda_m, 0)$ and Pf means the Pfaffian of the indicated skew-symmetric matrix.


\subsection{$\tau$-functions of the BKP hierarchy in terms of Pfaffian: the generic case}\label{sec:Pfaffiangeneric}

Now we apply Proposition \ref{prop:normalforme} and Remark \ref{rem:normalforme} to the following $(2n+2) \times (2n+2)$ matrix $\mathcal{P}_{2n+2} \in \mathcal{SO}_{2n+1}$:
\[
\mathcal{P}_{2n+2} = \exp(\Theta_{f}(\mathbf{t}_B)) = \begin{pmatrix}[ccc|c|ccc]
 &    & &    &  & & \\
 & \tilde{A} & & 0 &  & 0 &\\
 &  &  &   & & & \\
\hline 
& \raisebox{-0.05cm}{$\tilde{r}^T$} & & I_2 & &\raisebox{-0.05cm}{$ 0 $} & \\
\hline
 &    & &    &  & & \\
 & \tilde{C} & & \tilde{s} &  & \tilde{D}  &\\
 &  &  &   & & & \\
\end{pmatrix},
\]
where $f = e^T = (\sum \limits_{i = 1}^n X_{\alpha_i})^T$ with
\[\tilde{A} = \tilde{D} = \begin{pmatrix}
1 & & & \\
q_1 & 1 & & \\
\vdots & \vdots & \ddots & \\
q_{n-1} & q_{n-2} & \cdots & 1
\end{pmatrix}, \quad \tilde{r} = \begin{pmatrix} \frac{1}{\sqrt{2}}q_n  & \frac{1}{\sqrt{2}} q_n \\ \frac{1}{\sqrt{2}}q_{n-1} & \frac{1}{\sqrt{2}}q_{n-1} \\ \vdots & \vdots \\ \frac{1}{\sqrt{2}}q_1 & \frac{1}{\sqrt{2}}q_1 \end{pmatrix},  \quad \tilde{s} = \begin{pmatrix}  \frac{1}{\sqrt{2}}q_1 & \frac{1}{\sqrt{2}} q_1 \\ \frac{1}{\sqrt{2}}q_2 & \frac{1}{\sqrt{2}}q_2 \\ \vdots \\ \frac{1}{\sqrt{2}}q_n & \frac{1}{\sqrt{2}}q_n \end{pmatrix}, \quad \tilde{C} = \begin{pmatrix}
q_{n+1} & \cdots & q_2\\
\vdots & & \vdots\\
q_{2n} & \cdots & q_{n+1}
\end{pmatrix},
\]
and $q_i = q_i(\ttt_B)$.

Then
\[\tilde{A}^{-1}J = \begin{pmatrix}[rrrccc]
& & & & & (-1)^n\\
& & & & (-1)^{n-1} & (-1)^{n-1}q_1\\
& & & \udots & \vdots & \vdots \\
& & -1 & \cdots & -q_{n-4} & -q_{n-3}\\
& 1 & q_1 & \cdots & q_{n-3} & q_{n-2} \\
-1 & -q_1 & -q_2 & \cdots & -q_{n-2} & -q_{n-1}
\end{pmatrix}\]
and using the identities
\[q_k^2 + 2 \sum \limits_{i = 1}^k(-1)^iq_{k-i}q_{k+i} = 0 \qquad k \ge 1,\]
we obtain
\[r^T = \tilde{r}^T\tilde{A}^{-1}J = \begin{pmatrix} -\frac{1}{\sqrt{2}}q_1 &  -\frac{1}{\sqrt{2}}q_2 & \cdots &  -\frac{1}{\sqrt{2}}q_{n}\\
-\frac{1}{\sqrt{2}}q_1 &  -\frac{1}{\sqrt{2}}q_2 & \cdots &  -\frac{1}{\sqrt{2}}q_{n}
\end{pmatrix}\]
and
\[Q = \tilde{C}\tilde{A}^{-1}J + \frac{1}{2}rr^T= \begin{pmatrix}
0 & -Q_{2,1} & -Q_{3,1} & \cdots & -Q_{n, 1}\\
Q_{2, 1} & 0 & -Q_{3, 2} & \cdots & -Q_{n, 2}\\
Q_{3, 1} & Q_{3, 2} & 0 & \cdots & \vdots\\
\vdots & \vdots & \vdots & \ddots & \vdots\\
Q_{n, 1} & Q_{n, 2} & Q_{n, 3} & \cdots & 0 
\end{pmatrix},\]
where $Q_{\lambda_i, \lambda_j} = Q_{\lambda_i, \lambda_j}(\ttt_B)$ is exactly the elementary Schur-$Q$ function defined in Section \ref{sec:SchurQ}.

Now, by Remark \ref{rem:normalforme}, we have
\begin{align*}
\bar{\mathcal{P}}_{2n+2} & = \exp(\Theta_{f}(\mathbf{t}_B)) \cdot e_{-n} \wedge e_{-n+1} \wedge \cdots \wedge e_{-1} \wedge e_{-0}\\
& = -d_{\mathcal{P}_{2n+2}} \sum \limits_{I \in \binom{[-n, n]}{n+1}}\Delta_I({G_{\mathcal{P}_{2n+2}}}) e_{i_1} \wedge e_{i_2} \wedge \dots \wedge e_{i_{n+1}},
\end{align*}
where
\[G_{\mathcal{P}_{2n+2}} = \begin{pmatrix}[ccccc]
0 & 0 & 0 & \cdots & (-1)^n  \\
\vdots & \vdots & \vdots & \udots & \vdots  \\
0 & 0 & 1 & \cdots & \cdots  \\
0 & -1 & 0 & \cdots & \vdots  \\
1 & 0 & 0 & \cdots & 0 \\
0 & Q_{0, 1} & Q_{0,2} & \cdots & Q_{0,n} \\
Q_{1,0} & 0 & Q_{1, 2}  & \cdots & Q_{1,n} \\
Q_{2,0} & Q_{2, 1}  & 0 & \cdots & Q_{2,n}\\
\vdots & \vdots & \vdots & \ddots & \vdots\\
Q_{n,0} & Q_{n, 1}  & Q_{n, 2} & \cdots & 0 
\end{pmatrix}.\]

By \eqref{eq:tausquare}, when $w_* = \text{id}$ we have that
\begin{align*}
\tau_{\text{BKT}}^2(\t_B) & = \langle \exp(\Theta_{f}(\mathbf{t}_B)) \cdot e_{-n} \wedge e_{-n+1} \wedge \cdots \wedge e_{-1} \wedge e_{-0}, \tilde{N} \cdot e_{-n} \wedge e_{-n+1} \wedge \cdots \wedge e_{-1} \wedge e_{-0} \rangle\\
& = \det(G_{\mathcal{P}_{2n+2}}^TG_{\tilde{N}})\\
& = \det
\begin{pmatrix}[ccc|ccc]
 &    &    &  & & \\
 & Q_{n+1} &   &  & \tilde{J}_{n+1} &\\
 &  &     & & & \\
\hline 
 &    & &      & & \\
 & -\tilde{J}^T_{n+1} & &  & R_{n+1}  &\\
 &  &     & & & \\
\end{pmatrix},
\end{align*}
where $\tilde{J}_{n + 1}$ is the following $(n+1) \times (n+1)$ matrix
\[\tilde{J}_{n + 1} = \begin{pmatrix}
& & & 1\\
& & \udots &\\
& 1 & & \\
1 & & &
\end{pmatrix}, 
\]
and
\[Q_{n+1} = \begin{pmatrix}
0 & \cdots & -Q_{n,2} & -Q_{n,1} & -Q_{n,0}\\
\vdots & \ddots & \vdots & \vdots & \vdots\\
Q_{n, 2} & \cdots & 0 & -Q_{2,1} & -Q_{2,0}\\
Q_{n, 1} & \cdots & Q_{2,1} & 0 & -Q_{1,0}\\
Q_{n, 0} & \cdots &  Q_{2,0} & Q_{1,0} & 0
\end{pmatrix}, 
\qquad R_{n+1} = \begin{pmatrix}
0 & -r_{1, 0} & -r_{2,0} & \cdots & -r_{n,0}\\
r_{1,0} & 0 & -r_{2,1} & \cdots & -r_{n,1}\\
r_{2,0} & r_{2,1} & 0 & \cdots & -r_{n,2}\\
\vdots & \vdots & \vdots & \ddots & \vdots\\
r_{n,0} & r_{n,1} & r_{n,2} & \cdots & 0
\end{pmatrix}.
\]
Here we have used the following well-known equality from linear algebra
\[\det(I + AB) = \det(I + BA) = \det
\begin{pmatrix}[ccc|ccc]
 &    &    &  & & \\
 & A &   &  & \tilde{J}_{n+1} &\\
 &  &     & & & \\
\hline 
 &    & &      & & \\
 & -\tilde{J}^T_{n+1} & &  & B  &\\
 &  &     & & & \\
\end{pmatrix}.\]
Since $\tau_{\text{BKT}}^2(\t_B)$ is the determinant of a $(2n+2) \times (2n+2)$ skew-symmetric matrix, we can take $\tau_{\text{BKT}}(\t_B)$ as the Pfaffian of the same matrix, that is
\begin{align}\label{eq:Skew2n2}
\tau_{\text{BKT}}(\t_B) & = \text{Pf}\begin{pmatrix}[ccc|ccc]
 &    &    &  & & \\
 & Q_{n+1} &   &  & \tilde{J}_{n+1} &\\
 &  &     & & & \\
\hline 
 &    & &      & & \\
 & -\tilde{J}^T_{n+1} & &  & R_{n+1}  &\\
 &  &     & & & \\
\end{pmatrix}\\
& 
=  \sum \limits_{\lambda \in \text{DP}}\text{Pf}(R_{\lambda})Q_{\lambda}(\t_B). \nonumber
\end{align}

Let $n \to \infty$, we obtain Theorem \ref{thm:taufunctiongeneric}.

\begin{remark}
We note that in cases when combinatorics is more important, that is when we care more about the structure of the coefficients in the Schur-$Q$ expansion rather than the convergence of $\tau$-functions, then unlike in the KP case, we do not need to put any finiteness constraints on the skew-symmetric matrix $R$ in \eqref{eq:Pfaffiantau} since any coefficient in the Schur-$Q$ expansion of $\tau_{\text{BKP}}(\t_B)$ is the Pfaffian of a certain finite dimensional skew-symmetric matrix which is automatically well-defined.
\end{remark}


\section{Parameterization of the orthogonal Grassmannian manifolds and general solutions of the $B$-Toda and the BKP hierarchies}\label{sec:generalcase}

\subsection{Schubert strata of the orthogonal Grassmannian manifolds: a review of the generic case}
First we note that $\tau_{\text{BKT}}^2(\t_B)$ is invariant under the action of the $\mathfrak{S}_B \subset \mathfrak{S}_D$ part of the Weyl groups $\mathfrak{W}_B \subset \mathfrak{W}_D$ (see Section \ref{sec:BninDn1}). To construct the general solutions of the $B$-Toda and BKP hierarchies, we need to classify all the Lagrangian subspaces of $(\tilde{V}, \tilde{Q})$ for $\mathfrak{so}_{2n+1}$. That is, for each element $w_{\lambda} \in \mathfrak{N}_B$ in the Weyl group of $\mathfrak{so}_{2n+1}$ we would like to find a normal form for elements in the Schubert cell $\mathcal{L}_{\lambda}$ of the orthogonal Grassmannian as we did in Section \ref{sec:genericsolution} for $w_{\emptyset} = \text{id}$.

In this section we show that the same procedure in Section \ref{sec:normalforme} could be applied to the case when $w \ne \text{id}$. Here instead of $\mathcal{P}_0$ which keeps the one-dimensional subspace spanned by $e_{-n} \wedge e_{-n+1} \wedge \cdots \wedge e_{-1} \wedge e_{-0}$ invariant, we consider parabolic subgroups of $\mathcal{SO}_{2n+1}(\mathbb{C})$ which fix the vector $\dot{w}_*\cdot e_{-n} \wedge e_{-n+1} \wedge \cdots \wedge e_{-1} \wedge e_{-0}$ up to scalars, and they are given by $\mathcal{P}_{w_*} = \dot{w}_* \mathcal{P}_0 (\dot{w}_*)^{-1}$. Then in analogous to Proposition \ref{prop:normalforme}, we would like to find a canonical form $N_{w_*} \in \tilde{N}P_{w_*}$, which simplifies the following expression
\begin{equation}\label{eq:BKTtau}
\tau_{\text{BKT}}^2(\t_B; w_*) = \langle \exp(\Theta_f(\t_B))e_{-n} \wedge e_{-n+1} \wedge \cdots \wedge e_{-1} \wedge e_{-0}, \tilde{N}\dot{w}_* \cdot e_{-n} \wedge e_{-n+1} \wedge \cdots \wedge e_{-1} \wedge e_{-0}\rangle.
\end{equation}

Since we have already obtained a satisfactory expression for $\langle\exp(\Theta_f(\t_B))e_{-n} \wedge e_{-n+1} \wedge \cdots \wedge e_{-1} \wedge e_{-0}$ in Section \ref{sec:Pfaffiangeneric}, we will focus on the expression $\tilde{N}\dot{w}_* \cdot e_{-n} \wedge e_{-n+1} \wedge \cdots \wedge e_{-1} \wedge e_{-0}\rangle$ in the following.

First let us summarize the procedure we used to find the orthogonal Grassmannian manifold corresponding to $w_* = \text{id}$ in a way suitable for generalization to other nontrivial Weyl group elements. The canonical element $N \in \tilde{N}\mathcal{P}_0$ we choose in Proposition \ref{prop:normalforme} is actually not lower unipotent, but it has the advantage that the lower left corner of ${N}_{\text{id}}$ is almost skew-symmetric, and its skew-symmetrization, the matrix $R$ serves as the inhomogeneous coordinate for the big Schubert cell in the orthogonal Grassmannian manifold. Equipped with this observation, the relation between $N_{\text{id}}$ and a normalized lower unipotent element $N_0$ for $\tilde{N}\mathcal{P}_0$ is almost immediate to obtain. We have
\begin{align*}
& L_{\text{id}} = N_{\text{id}}T = N_0\tilde{M} = \begin{pmatrix}[ccc|c|ccc]
 &    & &    &  & & \\
 & 0 & & 0 &  & J &\\
 &  &  &   & & & \\
\hline 
& \raisebox{-0.05cm}{$ 0 $} & & \sigma & &\raisebox{-0.05cm}{$ -r^T $} & \\
\hline
 &    & &    &  & & \\
 & J^T & & r &  & R - \frac{1}{2}rr^T  &\\
 &  &  &   & & & \\
\end{pmatrix},
\end{align*}
where $\tilde{M}$ is the Gram matrix for $\mathfrak{so}_{2n+2}(\mathbb{C})$ (see Section \ref{sec:in2n+2}), and
\[T = \begin{pmatrix}[ccc|c|ccc]
 &    & &    &  & & \\
 & 0 & & 0 &  & I_n &\\
 &  &  &   & & & \\
\hline 
& \raisebox{-0.05cm}{$ 0 $} & & \sigma & &\raisebox{-0.05cm}{$ 0 $} & \\
\hline
 &    & &    &  & & \\
 & (J^T)^2 & & 0 &  & 0  &\\
 &  &  &   & & & \\
\end{pmatrix} \qquad \text{and} \qquad 
N_0 = \begin{pmatrix}[ccc|c|ccc]
 &    & &    &  & & \\
 & I_n & & 0 &  & 0 &\\
 &  &  &   & & & \\
\hline 
& \raisebox{-0.05cm}{$ \tilde{r}^T\tilde{A}^{-1} $} & & I_2 & &\raisebox{-0.05cm}{$ 0 $} & \\
\hline
 &    & &    &  & & \\
 & (\tilde{A}^{-1}J)^{T}(\tilde{R} - \frac{1}{2}\tilde{r}\tilde{r}^T)\tilde{A}^{-1} & & -(\tilde{A}^{-1}J)^{T}\tilde{r} &  & I_n  &\\
 &  &  &   & & & \\
\end{pmatrix}.
\]

We note that the last $(n+1)$-columns of $N_{\text{id}}T$ consists of a matrix representative for the Lagrangian subspace of $(\tilde{V}, \tilde{Q})$. And with proper column operation we obtain the matrix $R_{n+1}$ appearing in the lower right corner of $\tau_{\text{BKT}}(\t_B)$ in equation \eqref{eq:Skew2n2}. More precisely, the column operation can be achieved by the following $(2n+2) \times (2n+2)$ matrix $\tilde{U}_{\text{id}}$ which is an element in $\mathcal{O}(2n+2)$, but not an element in $\mathcal{O}(2n+1)$:
\[\tilde{U}_{\text{id}} = \begin{pmatrix}
1 &  &  &   & (-1)^{n-1}r_{n, 0} & & & & & \\
& \ddots & &  & \vdots & & & & & \\
& & 1 &  & -r_{2, 0} & & & & & \\
& & & 1 & r_{1, 0} & & & & & \\
& & & & 1 & & & & & \\
& & & & & 1 & r_{1, 0} & r_{2, 0} & \cdots & r_{n, 0}\\
& & & & & & 1 & & &\\
& & & & & & & 1 & &\\
& & & & & & & & \ddots &\\
& & & & & & & & & 1  
\end{pmatrix}.\]
Then the nontrivial part of the $(2n+2) \times (2n+2)$ matrix $N_{\text{id}}T\tilde{U}_{\text{id}}$ is the normal form appearing in the lower right corner of $\tau_{\text{BKT}}^2(\t_B; \text{id})$.

\begin{remark}
At first sight, the skew-symmetric matrix $R_{n+1}$ comes from a sequence of artificial manipulations on $\tilde{N}$. We would like to point out another origin for $R_{n+1}$ from where it is more natural to accept it as inhomogeneous coordinate for the associated orthogonal Grassmannian manifold. 
Writing $N_0 = \exp(n_0)$ for $n_0 \in \mathfrak{n}_-$, then we have
\[n_0 = \sum \limits_{i > j \ge 0}r_{i, j}[(-1)^jE_{i, -j} - (-1)^iE_{j, -i}], \qquad \text{with}\quad \left\{\begin{array}{l}
E_{i, j} = \tilde{E}_{i, j} \qquad \text{for } i, j \not\in \{0\}\\
E_{i,0} = \tilde{E}_{i, +0} + \tilde{E}_{i, -0}\quad \text{and} \quad E_{0, -i} = \tilde{E}_{+0, -i} + \tilde{E}_{-0, -i}.
\end{array}\right.\]
Under the spin representation, $n_0$ is represented by the Clifford operator
\[\sum \limits_{i > j \ge 0}r_{i, j}e_{i}e_{j}.\]
From this we see that $R_{n+1} = (r_{i, j})_{0 \le i, j \le n}$ is exactly the inhomogeneous coordinate system for the orthogonal Grassmannian manifold.
\end{remark}

\subsection{Schubert strata of the orthogonal Grassmannian manifolds: the general case}

Now we are ready to reveal the Schubert cell structure of the orthogonal Grassmannian manifold of $\mathfrak{so}_{2n+1}(\mathbb{C})$ and construct general $\tau$-functions for the $B$-Toda and the BKP hierarchies.

Recall that in Remark \ref{rem:Weyliso}, we have identified the normal part of the Weyl groups $\mathfrak{N}_B \cong \mathfrak{N}_D$ for $\mathfrak{so}_{2n+1}$ and $\mathfrak{so}_{2n+2}$, respectively. That is for any strict partition $\lambda = (\lambda_1, \dots, \lambda_k)$ we have $\iota(w_{\lambda}) = \tilde{w}_{\lambda'}$, where $\lambda' = \lambda = (\lambda_1, \lambda_2, \dots, \lambda_{2\ell})$ if $k = 2\ell$ is even and $\lambda' = (\lambda_1, \lambda_2, \dots, \lambda_{2\ell}) = (\lambda_1, \dots, \lambda_k, 0)$ if $k = 2\ell -1$ is odd. In the following, we denote by $w_{\lambda}$ its image $\tilde{w}_{\lambda'}$ under $\iota$ and understand $\lambda$ as the augmented partition $\lambda'$ when the length of $\lambda$ is odd. Then in the identification $\mathfrak{h}^*_D \cong \mathbb{R}^{n+1}$, we have ${w}_{\lambda}(\varepsilon_{\lambda_i}) = -\varepsilon_{\lambda_i} (1 \le i \le 2\ell \le n+1)$ and $w_{\lambda}$ leaves the signs of the other $\varepsilon_j$'s fixed. More concretely, when such a Weyl group element acts from the right (left) on a $(2n+2) \times (2n+2)$ matrix, it transposes the columns (rows) $\lambda_i$ and $-{\lambda}_i$ and keeps the other columns fixed. We denote by $\Omega_{\lambda}(G) = (i_1, i_2, \dots, i_{n+1})$ the $(2n+2) \times (n+1)$ matrix consisting of column $(i_1, i_2, \dots, i_{n+1})$ of the $(2n+2) \times (2n+2)$ matrix $G$, where $i_k = n+1-k$ if $w_{\lambda}(\varepsilon_{i_k}) = \varepsilon_{i_k}$, and $i_k = -(n+1-k)$ otherwise. Denote by $\bar{\Omega}_{\lambda}(G)$ the matrix consisting of all the other columns of $G$.

To proceed, we have the following key lemma.
\begin{lemma}\label{lem:normalformg}
For any $\tilde{N} \in \mathcal{N}_-$, there exists a unique element $P_{\lambda} \in \mathcal{N}_- \cap \mathcal{P}_{w_{\lambda}} = \mathcal{N}_- \cap \dot{w}_{\lambda}\mathcal{P}_0(\dot{w}_{\lambda})^{-1} \in \mathcal{SO}_{2n+1}(\mathbb{C})$ such that other than row $(-0, +{0})$ the $(2n+2) \times (n+1)$ matrix $\bar{\Omega}_{\lambda}(\tilde{N}P_{\lambda})$ is 
in the reduced row echelon form (RREF). 
\end{lemma}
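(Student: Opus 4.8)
The plan is to read the lemma as a uniqueness-of-echelon-form statement, transported through the orbit description of the Schubert cell. Since for any $P\in\mathcal{P}_{w_{\lambda}}=\dot{w}_{\lambda}\mathcal{P}_0\dot{w}_{\lambda}^{-1}$ one has $\tilde{N}P\,\dot{w}_{\lambda}\mathcal{P}_0=\tilde{N}\dot{w}_{\lambda}\mathcal{P}_0$ in $\mathrm{OG}(V)$, the claim is equivalent to: among all $N'\in\mathcal{N}_-$ with $N'\dot{w}_{\lambda}\mathcal{P}_0=\tilde{N}\dot{w}_{\lambda}\mathcal{P}_0$ there is exactly one whose matrix $\bar{\Omega}_{\lambda}(N')$ is, off the rows $-0,+0$, in the reduced echelon shape of the lemma. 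Given such an $N'$ the element $P_{\lambda}:=\tilde{N}^{-1}N'$ automatically lies in $\mathcal{N}_-\cap\mathcal{P}_{w_{\lambda}}\subset\mathcal{SO}_{2n+1}$: indeed $\tilde{N}^{-1},N'\in\mathcal{N}_-$ forces $P_{\lambda}\in\mathcal{N}_-$, and $N'^{-1}\tilde{N}$ representing the same $\mathcal{P}_{w_{\lambda}}$-coset forces $P_{\lambda}\in\mathcal{P}_{w_{\lambda}}$. So it suffices to produce and normalize the $\mathcal{N}_-$-representatives of the point $\tilde{N}\dot{w}_{\lambda}\mathcal{P}_0$ of $\mathcal{L}_{\lambda}$.

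Next I would make $\mathcal{N}_-\cap\mathcal{P}_{w_{\lambda}}$ explicit as a product of negative root subgroups, namely the $\exp(cY_{\alpha})$, $\alpha\in\Sigma_+$, for which $\dot{w}_{\lambda}^{-1}Y_{\alpha}\dot{w}_{\lambda}$ still stabilizes the line $\mathbb{C}v^{\omega_1}$, i.e. still lies in the Siegel parabolic. Translated to the $(2n+2)$-grid of Section~\ref{sec:BnDn1}, these are precisely the root subgroups whose effect under right multiplication on $\tilde{N}$ is (a) to add $\mathbb{C}$-multiples of the pivot columns $\Omega_{\lambda}(\tilde{N})$ to the columns $\bar{\Omega}_{\lambda}(\tilde{N})$, and (b) to add multiples of $\bar{\Omega}_{\lambda}$-columns among themselves, in the triangular pattern dictated by the order $-n<\cdots<-0<+0<\cdots<n$. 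Because $\tilde{N}$ is lower-unipotent, right multiplication by a lower-unipotent matrix leaves each column's topmost nonzero entry (its would-be pivot, equal to $1$) unchanged, so no column rescaling is available and none is needed. The constraint that $\tilde{N}P_{\lambda}$ stay orthogonal, equivalently $[\tilde{N}P_{\lambda},\mathcal{Z}]=0$ as in Section~\ref{sec:BnDn1}, forces the $(-0,+0)$ rows of every admissible representative to carry equal coordinates (the repeated-row pattern already visible in the block $-r^T$ of $L_{\text{id}}=N_{\text{id}}T$), which is incompatible with a generic echelon profile; this is exactly why those two rows are exempted, and this same constraint cuts the parameter group down to the affine space whose coordinates become the $r_{i_k,i_l}$ of Theorem~\ref{thm:normalforminf}.

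With this dictionary in hand the remainder is the usual existence and uniqueness of reduced echelon form. Processing the $\bar{\Omega}_{\lambda}$-columns in the order induced by $-n<\cdots<-0<+0<\cdots<n$ and, for each, subtracting the unique combination of the already-processed $\bar{\Omega}_{\lambda}$-columns (and, where needed, of the $\Omega_{\lambda}$-columns) that clears the entries in the pivot rows of the columns already treated, produces in finitely many triangular steps a matrix in the stated reduced form off the $(-0,+0)$ rows; one checks, as in the sample computation above, that every one of these elementary column operations, together with its orthogonal partner, genuinely lies in $\mathcal{N}_-\cap\mathcal{P}_{w_{\lambda}}$ and does not disturb the exempted rows. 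The triangularity of the allowed operations then gives uniqueness: any $Q\in\mathcal{N}_-\cap\mathcal{P}_{w_{\lambda}}$ fixing such a reduced form must, on the $\bar{\Omega}_{\lambda}\times\bar{\Omega}_{\lambda}$ block (which the reduced shape pins to the identity) and then columnwise outward, be the identity. A side point to record is that $\dot{w}_{\lambda}$ can be taken inside $\mathcal{SO}_{2n+1}$ --- since $w_{\lambda}\in\mathfrak{N}_B\subset\mathfrak{W}_{B_n}$ and the Weyl group of $\mathfrak{so}_{2n+1}$ is represented there --- so $\mathcal{P}_{w_{\lambda}}\subset\mathcal{SO}_{2n+1}$ and the whole argument stays inside the orthogonal group.

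The step I expect to be the main obstacle is the second one: pinning down $\mathcal{N}_-\cap\mathcal{P}_{w_{\lambda}}$ \emph{exactly}, i.e. verifying it contains all the column operations needed to reach the reduced form and no superfluous ones, while tracking the orthogonality constraint and the anomalous pair $(-0,+0)$. Once this bookkeeping is settled, existence and uniqueness of the echelon form are routine, and membership of $\tilde{N}P_{\lambda}$ in $\mathcal{N}_-$ is automatic from closure of lower-unipotent matrices under multiplication.
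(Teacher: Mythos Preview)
Your approach is correct and follows essentially the same line as the paper's: both arguments hinge on (i) explicitly identifying $\mathcal{N}_-\cap\mathcal{P}_{w_\lambda}$ and (ii) observing that the resulting normalization problem is triangular, hence uniquely solvable. The paper, however, carries this out more concretely than your root-subgroup/column-operation narrative: it writes down the matrix shape of $P_\lambda$ directly (lower unipotent with $p_{i_k,-i_l}=0$, plus the forced vanishing in the $\pm 0$ rows and columns from the $\mathcal{SO}_{2n+1}$ constraint), counts $\frac{n(n-1)}{2}+|\lambda|$ free parameters sitting below both diagonals, observes that the RREF condition on $\bar\Omega_\lambda(\tilde N P_\lambda)$ imposes exactly that many linear equations on those parameters, and notes that the coefficient matrix of this linear system is lower unipotent because $\tilde N$ is. This bypasses your acknowledged ``main obstacle'' --- the root-by-root verification that the available column operations are exactly the right ones --- by a direct dimension and triangularity count in coordinates. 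Your conceptual packaging via root subgroups is cleaner to state but, as you suspected, the bookkeeping for the orthogonal pairing and the $(-0,+0)$ anomaly is where the work actually lies; the paper's matrix-entry approach simply absorbs that into the phrase ``the other half will follow automatically from the constraint $\tilde N P_\lambda\in\mathcal{SO}_{2n+1}$.''
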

\begin{proof}
An element $P_{\lambda} = (p_{a, b}) \in \mathcal{N}_- \cap \mathcal{P}_{w_{\lambda}}$ has the following matrix form: it is lower unipotent with $p_{i_k, -i_l} = 0$, where $(i_1, i_2, \dots, i_{n+1})$ is the index set associated with $\lambda$ as above. Note that this condition together with $P_{\lambda} \in \mathcal{SO}_{2n+1}$ implies that $p_{+0, -i_l} = p_{-{0}, -i_l} = 0$ and $p_{i_k, +0} = p_{i_k, -{0}} = 0$. For example when $\lambda = \emptyset$ we have $(i_1, i_2, \dots, i_{n+1}) = (n, n-1, \dots, 1, +0)$, then $p_{i_k, -i_l} = 0$ implies that $P_{\emptyset}$ has the form
\[P_{\emptyset} = \begin{pmatrix}[ccc|c|ccc]
 &    & &    &  & & \\
 & A_{\emptyset} & & 0 &  & 0 &\\
 &  &  &   & & & \\
\hline 
& \raisebox{-0.05cm}{$ 0 $} & & I_2 & &\raisebox{-0.05cm}{$ 0 $} & \\
\hline
 &    & &    &  & & \\
 & 0 & & 0 &  & D_{\emptyset}  &\\
 &  &  &   & & & \\
\end{pmatrix},
\]
where $A_{\emptyset}$ and $D_{\emptyset}$ are both $n \times n$ lower unipotent matrices.

Note that there are $\frac{n(n-1)}{2} + |\lambda|$ independent free variables $p_{a, b}$ in $P_{\lambda}$ sitting below both the diagonal and the anti-diagonal, and the other entries in $P_{\lambda}$ are determined from these free variables by the constraint $P_{\lambda} \in \mathcal{SO}_{2n+1}$. To set $\bar{\Omega}_{\lambda}(\tilde{N}P_{\lambda})$ in the RREF other than row $(-0, +{0})$ is the same as to use $P_{\lambda}$ to eliminate all the entries in $\tilde{N}$ at positions where $p_{a, b}$'s are nontrivial, and we only need to do this for elements below both the diagonal and the anti-diagonal since again the other half will follow automatically from the constraint $\tilde{N}P_{\lambda} \in \mathcal{SO}_{2n+1}$. Multiplying $\tilde{N}P_{\lambda}$ out, then we are solving a $(\frac{n(n-1)}{2} + |\lambda|) \times (\frac{n(n-1)}{2} + |\lambda|)$ linear system for $p_{a, b}$. Since $\tilde{N}$ is lower unipotent, we can arrange the equations in this linear system in such a way that the coefficient matrix is lower unipotent, and it follows that for this system we always have a unique solution. We are done.
\end{proof}

As an immediate corollary of the proof, we have
\begin{corollary}
Let $\exp n_{\lambda} = \tilde{N}P_{\lambda}$, then we can write
\[n_{\lambda} = \sum \limits_{i_k > |i_l|}r_{i_k, i_l}[(-1)^{i_l}E_{i_k, -i_l} - (-1)^{i_k}E_{i_l, -i_k}],\]
where $(i_1, i_2, \dots, i_{n+1})$ is the indix associated with the strict partition $\lambda$ as before and again the following convention is used
\[\left\{\begin{array}{l}
E_{i, j} = \tilde{E}_{i, j} \qquad \text{for } i, j \not\in \{+0, -{0}\}\\
E_{i,0} = \tilde{E}_{i, +0} + \tilde{E}_{i, -0}\quad \text{and} \quad E_{0, -i}= \tilde{E}_{+0, -i} + \tilde{E}_{-0, -i}.
\end{array}\right.\]
\end{corollary}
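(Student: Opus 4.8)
This statement is the $\lambda$-analogue of the identity $N_0=\exp(n_0)$ with $n_0=\sum_{i>j\ge0}r_{i,j}[(-1)^jE_{i,-j}-(-1)^iE_{j,-i}]$ recorded for the big cell in the discussion preceding Lemma~\ref{lem:normalformg}, and the plan is to obtain it as a direct transcription of the proof of that lemma. Let $\mathfrak u_\lambda\subset\mathfrak n_-$ be the linear span of the combinations $(-1)^{i_l}E_{i_k,-i_l}-(-1)^{i_k}E_{i_l,-i_k}$ for $i_k>|i_l|$, where $(i_1,\dots,i_{n+1})$ is the distinguished index sequence of $\lambda$ and the $\pm0$ row/column is bundled into $E_{i,0},E_{0,-i}$ as in the statement. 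These are exactly the skew combinations permitted by $X^T\tilde M+\tilde MX=0$, and $\mathfrak u_\lambda$ is the image under $\dot w_\lambda$ of the nilradical of the parabolic opposite to $\mathcal P_0$, intersected with $\mathfrak{so}_{2n+1}$; in particular it is a Lie subalgebra, of dimension $\tfrac{n(n+1)}{2}-|\lambda|=\dim\mathcal L_\lambda$, with one generator per index pair $i_k>|i_l|$. Write $U_\lambda=\exp(\mathfrak u_\lambda)\subseteq\mathcal N_-$ for the corresponding unipotent subgroup.

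The substantive step is to show that the normal form produced by Lemma~\ref{lem:normalformg} lies in $U_\lambda$. The RREF normalization of $\bar\Omega_\lambda(\tilde N P_\lambda)$ kills precisely the entries of $\tilde N P_\lambda$ at the free positions of $P_\lambda$; the remaining entries are then forced by $\tilde N P_\lambda\in\mathcal{SO}_{2n+1}$ in terms of those at the positions spanned by $\mathfrak u_\lambda$, and unwinding this shows that $\tilde N P_\lambda$ is the unique representative of its $(\mathcal N_-\cap\mathcal P_{w_\lambda})$-coset lying in $U_\lambda$ --- exactly as was computed by hand for $\lambda=\emptyset$ in the remark preceding the lemma (where $N_0\in U_\emptyset$), the general case being the $\dot w_\lambda$-conjugate of that computation.

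It then follows quickly: since $\mathcal N_-$ is unipotent, $\exp\colon\mathfrak u_\lambda\to U_\lambda$ is a bijection with polynomial inverse $\log$, so $\exp n_\lambda=\tilde N P_\lambda$ determines a unique $n_\lambda\in\mathfrak u_\lambda$, and expanding it in the spanning set gives the asserted formula $n_\lambda=\sum_{i_k>|i_l|}r_{i_k,i_l}[(-1)^{i_l}E_{i_k,-i_l}-(-1)^{i_k}E_{i_l,-i_k}]$, the coefficients $r_{i_k,i_l}$ being by definition the inhomogeneous coordinates of $\mathcal L_\lambda$. (The exponential genuinely carries a quadratic correction here, since $\mathfrak u_\lambda$ is only two-step nilpotent --- this is the origin of the $-\sum r_{i_k,0}r_{i_l,0}E_{i_k,i_l}$ term in the normal form $L_\lambda$ of Theorem~\ref{thm:normalforminf} --- but it does not alter the shape of $n_\lambda$.)

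Apart from the coset-representative claim of the second paragraph, the only delicate point is the sign bookkeeping: verifying that the RREF entry named $r_{i_k,i_l}$ lands on $E_{i_k,-i_l}$ with coefficient $(-1)^{i_l}$ while orthogonality puts $-(-1)^{i_k}$ on $E_{i_l,-i_k}$, and that the two rows/columns labelled $+0$ and $-0$ recombine cleanly into $E_{i,0}$ and $E_{0,-i}$, the $\sqrt2$'s from the realization of $\mathfrak{so}_{2n+1}\subset\mathfrak{so}_{2n+2}$ in Section~\ref{sec:BnDn1} cancelling against the factor $2$ in products such as $E_{i,0}E_{0,-j}=2\tilde E_{i,-j}$. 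Everything else is a direct transcription of the proof of Lemma~\ref{lem:normalformg}.
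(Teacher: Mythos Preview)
Your proposal is essentially correct and is a reasonable unpacking of what the paper records only as ``an immediate corollary of the proof'' of Lemma~\ref{lem:normalformg}; the paper gives no separate argument at all, so your group-theoretic framing via $\mathfrak u_\lambda$ and $U_\lambda=\exp(\mathfrak u_\lambda)$ is exactly the sort of detail the paper suppresses. The key content --- that the RREF normalization leaves nontrivial entries only at the positions indexed by $i_k>|i_l|$, that orthogonality then forces the skew partner, and that the two-step nilpotency of $\mathfrak u_\lambda$ makes $\exp$ a polynomial bijection onto $U_\lambda$ --- matches the paper's implicit reasoning (made explicit in the Remark immediately following, where $\tilde NP_\lambda=I+n_\lambda+\tfrac12 n_\lambda^2$ is written out).

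One small imprecision worth fixing: your description of $\mathfrak u_\lambda$ as ``the image under $\dot w_\lambda$ of the nilradical of the parabolic opposite to $\mathcal P_0$, intersected with $\mathfrak{so}_{2n+1}$'' is not quite right. The conjugate $\dot w_\lambda\,\mathfrak n_0^-\,\dot w_\lambda^{-1}$ already lies in $\mathfrak{so}_{2n+1}$, but it contains positive-root vectors (e.g.\ those coming from $-\varepsilon_i-\varepsilon_j$ with both $i,j\in\lambda$, which map to $\varepsilon_i+\varepsilon_j$). What you need is the intersection with $\mathfrak n_-$, not with $\mathfrak{so}_{2n+1}$: one has $\mathfrak u_\lambda=\mathfrak n_-\cap\dot w_\lambda\,\mathfrak n_0^-\,\dot w_\lambda^{-1}$, which is the standard Schubert-cell parametrization and is indeed a two-step nilpotent subalgebra of dimension $\tfrac{n(n+1)}{2}-|\lambda|$. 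With that correction, everything you wrote goes through.
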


\begin{remark}
\begin{enumerate}
\item The index $i_k$ such that $i_k > |i_l|$ for some $i_l$ belongs to the set 
\[\lambda^{\perp} := \{1, 2, \dots, n\} \backslash \{\lambda_1, \lambda_2, \dots, \lambda_{2\ell}\}\]
\item 
Note that there are $(\frac{n(n+1)}{2} - |\lambda|)$ many non-trivial $r_{i, j}$'s and they consist of the inhomogeneous coordinate system for the Schubert cell $\mathcal{L}_{\lambda}$ determined by $w_{\lambda}$.
\item 
We also have that
\begin{align*}
\tilde{N}P_{\lambda} 
& = I + \sum \limits_{i_k > |i_l|}r_{i_k, i_l}[(-1)^{i_l}E_{i_k, -i_l} - (-1)^{i_k}E_{i_l, -i_k}] + \frac{1}{2}\left\{\sum \limits_{i_k > |i_l|}r_{i_k, i_l}[(-1)^{i_l}E_{i_k, -i_l} - (-1)^{i_k}E_{i_l, -i_k}]\right\}^2\\
& = I + \sum \limits_{i_k > |i_l|}r_{i_k, i_l}[(-1)^{i_l}E_{i_k, -i_l} - (-1)^{i_k}E_{i_l, -i_k}] + \frac{1}{2}\left\{\sum \limits_{i_k > 0}r_{i_k, 0}[E_{i_k, 0} - (-1)^{i_k}E_{0, -i_k}]\right\}^2\\
& = I + \sum \limits_{i_k > |i_l|}r_{i_k, i_l}[(-1)^{i_l}E_{i_k, -i_l} - (-1)^{i_k}E_{i_l, -i_k}] + \sum \limits_{i_k, i_l > 0}(-1)^{i_l + 1}r_{i_k, 0}r_{i_l, 0}E_{i_k, -i_l}.
\end{align*}
Note that
\[\tilde{M} = \sum \limits_{k = 0}^n (-1)^k[\tilde{E}_{+k, -k} + \tilde{E}_{-k, +k}]\]
then the matrix $\tilde{N}P_{\lambda}\tilde{M}$ has the following appealing form
\[\tilde{N}P_{\lambda}\tilde{M} = \tilde{M} + \sum \limits_{i_k > |i_l|}r_{i_k, i_l}[E_{i_k, i_l} - E_{i_l, i_k}] - \sum \limits_{i_k, i_l > 0}r_{i_k, 0}r_{i_l, 0}E_{i_k, i_l}.\]
\end{enumerate}
\end{remark}

In summary, we have the following canonical form for points in the orthogonal Grassmannian manifold of $\mathfrak{so}_{2n+1}$ realized in $(\tilde{V}, \tilde{Q})$.
\begin{theorem}\label{thm:normalformL}
Let $(i_1, i_2, \dots, i_{n+1})$ be the index set associated with the strict partition $\lambda$, then an element belong to the Schubert cell $\mathcal{L}_{\lambda}$ of the orthogonal Grassmannian manifold corresponding to the Weyl group element $w_{\lambda} \in \mathfrak{N}_B$ could take column $(i_1, i_2, \dots, i_{n+1})$ of the following matrix as a representative
\[L_{\lambda} := \tilde{N}P_{\lambda}\tilde{M} = \tilde{M} + \sum \limits_{i_k > |i_l|}r_{i_k, i_l}[E_{i_k, i_l} - E_{i_l, i_k}] - \sum \limits_{i_k, i_l > 0}r_{i_k, 0}r_{i_l, 0}E_{i_k, i_l},\]
where $r_{i_k, i_l} \in \mathbb{C} \ (i_k > |i_l|)$ are arbitrary constants which provide a system of inhomogeneous coordinates for $\mathcal{L}_{\lambda}$. 
\end{theorem}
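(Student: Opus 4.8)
The plan is to obtain Theorem~\ref{thm:normalformL} by packaging Lemma~\ref{lem:normalformg}, its Corollary, and part~(3) of the Remark that follows them; the only genuinely new work is the index bookkeeping that identifies which $n+1$ columns of $L_\lambda$ span the Lagrangian subspace. First I would recall the elementary description of the Schubert cell: a point of $\mathcal{L}_\lambda$ is a Lagrangian subspace of the form $L=\tilde{N}\,\dot{w}_\lambda L_0$ with $\tilde{N}\in\mathcal{N}_-$, and two such data give the same point iff $\tilde{N}_2^{-1}\tilde{N}_1$ lies in $\mathcal{N}_-\cap\dot{w}_\lambda\mathcal{P}_0\dot{w}_\lambda^{-1}=\mathcal{N}_-\cap\mathcal{P}_{w_\lambda}$. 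In particular, right multiplication of $\tilde{N}$ by any element of $\mathcal{N}_-\cap\mathcal{P}_{w_\lambda}$ leaves $L$ unchanged, since such an element stabilizes $\dot{w}_\lambda L_0$.

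Given $\tilde{N}$, I would then use Lemma~\ref{lem:normalformg} and its Corollary to select the unique $P_\lambda\in\mathcal{N}_-\cap\mathcal{P}_{w_\lambda}$ with $\tilde{N}P_\lambda=\exp(n_\lambda)$, where $n_\lambda=\sum_{i_k>|i_l|}r_{i_k,i_l}\big[(-1)^{i_l}E_{i_k,-i_l}-(-1)^{i_k}E_{i_l,-i_k}\big]$. By the previous paragraph $L$ is unchanged, so $L$ is spanned by the columns of $\exp(n_\lambda)$ indexed by the index set of $\dot{w}_\lambda L_0$, namely $\{-j:0\le j\le n,\ j\notin\lambda\}\cup\{+j:j\in\lambda\}$. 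Next I invoke part~(3) of the Remark, which gives $L_\lambda=\tilde{N}P_\lambda\tilde{M}=\tilde{M}+\sum_{i_k>|i_l|}r_{i_k,i_l}[E_{i_k,i_l}-E_{i_l,i_k}]-\sum_{i_k,i_l>0}r_{i_k,0}r_{i_l,0}E_{i_k,i_l}$, and observe that since $\tilde{M}=\sum_{k=0}^{n}(-1)^k(\tilde{E}_{+k,-k}+\tilde{E}_{-k,+k})$ is, up to signs, the permutation exchanging $e_{+k}\leftrightarrow e_{-k}$, right multiplication by $\tilde{M}$ carries column $m$ of $\exp(n_\lambda)$ to $(-1)^{|m|}$ times column $\bar{m}$ of $L_\lambda$, where $\bar{m}$ is the $\pm$-partner of $m$. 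Hence the columns $(i_1,\dots,i_{n+1})$ of $L_\lambda$ are, up to nonzero scalars, the columns of $\exp(n_\lambda)$ indexed by $\{\,\overline{i_k}=-i_k\,\}=\{-j:j\notin\lambda\}\cup\{+j:j\in\lambda\}$, i.e.\ a basis of $L$. This is exactly the asserted representative, and it is an $(n+1)$-tuple of independent vectors because it spans the $(n+1)$-dimensional space $L$.

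It remains to check that the $r_{i_k,i_l}$ are genuine inhomogeneous coordinates on $\mathcal{L}_\lambda$: surjectivity onto the cell is built in, since every $\tilde{N}\in\mathcal{N}_-$ yields such an $L_\lambda$; injectivity follows from the uniqueness of $P_\lambda$ in Lemma~\ref{lem:normalformg}, which makes $n_\lambda$ — hence all the $r$'s — uniquely determined by the point; and the count $\#\{(i_k,i_l):i_k>|i_l|\}=\tfrac{n(n+1)}{2}-|\lambda|$ from part~(2) of the Remark equals $\dim\mathcal{L}_\lambda$. The main difficulty is not an estimate but the combinatorics of the penultimate step: one must keep straight how $\dot{w}_\lambda$ permutes the basis, how the two zero-indices $\pm0$ behave, and how multiplication by the Gram matrix $\tilde{M}$ relabels columns, so that the column index set $(i_1,\dots,i_{n+1})$ appearing in the statement is precisely the one produced by the construction; once this is verified, the theorem is immediate from Lemma~\ref{lem:normalformg} and the Remark.
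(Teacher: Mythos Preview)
Your proposal is correct and follows exactly the paper's approach: the paper presents Theorem~\ref{thm:normalformL} with the phrase ``In summary, we have the following canonical form\ldots'' and gives no separate proof, treating it as an immediate repackaging of Lemma~\ref{lem:normalformg}, its Corollary, and part~(3) of the subsequent Remark. Your write-up supplies precisely the column-index bookkeeping (how $\dot{w}_\lambda$ permutes the basis, how right multiplication by $\tilde{M}$ swaps columns $\pm m$ up to sign, and the dimension count) that the paper leaves implicit.
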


\subsection{$\tau$-functions for the $B$-Toda and the BKP hierarchy: the general case}

With proper column operation on $L_{\lambda}$, we have
\begin{lemma}\label{lem:ReducedLagrangian}
Viewed as a subspace of $(\tilde{V}, \tilde{Q})$, representative of elements in the Schubert cell $\mathcal{L}_{\lambda}$ can be further taken as ${\Omega}_{\lambda}(\tilde{L}_{\lambda})$, with
\[\tilde{L}_{\lambda} := \tilde{N}P_{\lambda}\tilde{M}\tilde{U}_{\lambda} = \tilde{M} + \sum \limits_{i_k > |i_l|}r_{i_k, i_l}[\tilde{E}_{i_k, i_l} - \tilde{E}_{i_l, i_k}],\]
where
\[\tilde{U}_{\lambda} = I + \sum \limits_{i_k > 0}r_{i_k, 0}(\tilde{E}_{\delta, i_k} - (-1)^{i_k}\tilde{E}_{-i_k, -\delta}),\]
where $\delta = +0$ if the length of the non-augmented $\lambda$ is even, and $\delta = -0$ if otherwise.
\end{lemma}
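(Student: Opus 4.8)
The plan is to split the lemma into its two components: the matrix identity $\tilde N P_\lambda\tilde M\tilde U_\lambda=\tilde M+\sum_{i_k>|i_l|}r_{i_k,i_l}(\tilde E_{i_k,i_l}-\tilde E_{i_l,i_k})$, and the geometric claim that the columns selected by $\Omega_\lambda$ in this matrix still span the point of $\mathcal L_\lambda$ represented by $\Omega_\lambda(L_\lambda)$ in Theorem~\ref{thm:normalformL}.

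I would first record that $\tilde U_\lambda=I+u_\lambda$ with $u_\lambda=\sum_{i_k>0}r_{i_k,0}(\tilde E_{\delta,i_k}-(-1)^{i_k}\tilde E_{-i_k,-\delta})$, that $u_\lambda\in\mathfrak{so}_{2n+2}(\mathbb C)$ (one checks that $\tilde M\big(\tilde E_{\delta,i_k}-(-1)^{i_k}\tilde E_{-i_k,-\delta}\big)$ is skew, carrying opposite entries in positions $(-\delta,i_k)$ and $(i_k,-\delta)$), and that $u_\lambda^2=0$ because the monomials $\tilde E_{\delta,i_k}$ and $\tilde E_{-i_k,-\delta}$ multiply to zero in every combination; hence $\tilde U_\lambda\in\mathcal{SO}_{2n+2}(\mathbb C)$. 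Right multiplication by $\tilde U_\lambda$ then realizes the column operations ``add $r_{i_k,0}\cdot(\text{column }\delta)$ to column $i_k$'' for every $i_k\in\lambda^{\perp}$, together with ``add $-(-1)^{i_k}r_{i_k,0}\cdot(\text{column }{-i_k})$ to column $-\delta$''. Since $\delta$ and each such $i_k$ belong to the index set $(i_1,\dots,i_{n+1})$ whereas $-\delta$ does not, the first family only recombines the columns of $\Omega_\lambda$ among themselves while the last one alters a column outside $\Omega_\lambda$; therefore $\Omega_\lambda(\tilde L_\lambda)$ and $\Omega_\lambda(L_\lambda)$ span the same subspace, which proves the geometric claim. (That $\tilde U_\lambda$ is orthogonal is not logically needed here, but it explains why $\tilde L_\lambda$ again has Gram-plus-skew shape.)

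For the matrix identity I would compute $L_\lambda\tilde U_\lambda=L_\lambda+L_\lambda u_\lambda$ column by column, using the expression $L_\lambda=\tilde M+\sum_{i_k>|i_l|}r_{i_k,i_l}(E_{i_k,i_l}-E_{i_l,i_k})-\sum_{i_k,i_l>0}r_{i_k,0}r_{i_l,0}E_{i_k,i_l}$ of Theorem~\ref{thm:normalformL} and distinguishing three cases: a column $i_k$ with $i_k\in\lambda^{\perp}$, the column $-\delta$, and all the remaining columns, which $\tilde U_\lambda$ fixes. For a remaining column one reads off at once that the column of $L_\lambda$ already coincides with the corresponding column of $\tilde M+\sum_{i_k>|i_l|}r_{i_k,i_l}(\tilde E_{i_k,i_l}-\tilde E_{i_l,i_k})$ --- the only ones needing a short check here are the column $\delta$ and the columns $-j$ with $j\in\lambda$. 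In column $i_k$, the added multiple of column $\delta$ of $L_\lambda$, which equals $e_{\delta'}+\sum_{i_l\in\lambda^{\perp}}r_{i_l,0}e_{i_l}$ with $\delta'$ the other zero index, simultaneously cancels the quadratic tail $-r_{i_k,0}\sum_{i_l}r_{i_l,0}e_{i_l}$ coming from the last sum in $L_\lambda$ and cancels the extra $e_{\delta'}$-term produced by $E_{i_k,0}=\tilde E_{i_k,+0}+\tilde E_{i_k,-0}$, leaving precisely the contribution of $r_{i_k,0}(\tilde E_{i_k,\delta}-\tilde E_{\delta,i_k})$; this is the one place where the signs $(-1)^{i_k}$ and the $\tfrac12$ in the definitions are used essentially. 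In column $-\delta$, the added combination equals $-\sum_{i_k\in\lambda^{\perp}}r_{i_k,0}e_{i_k}$ (here one uses that column $-i_k$ of $L_\lambda$ reduces to the plain Gram column $(-1)^{i_k}e_{i_k}$, since $-i_k$ is not an index when $i_k\in\lambda^{\perp}$), and this kills the $e_{i_k}$-tail so that the column becomes the bare Gram column. Combining the three cases yields $L_\lambda\tilde U_\lambda=\tilde M+\sum_{i_k>|i_l|}r_{i_k,i_l}(\tilde E_{i_k,i_l}-\tilde E_{i_l,i_k})$.

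The main obstacle is not conceptual but the bookkeeping forced by the doubled index $\pm0$: one must track which of $+0,-0$ is the selected column $\delta$ (this flips with the parity of the length of $\lambda$), must remember that $E_{i,0}$ occupies both the $+0$- and the $-0$-column, and must verify that the quadratic term $-\sum r_{i_k,0}r_{i_l,0}E_{i_k,i_l}$ is absorbed on the nose. I would carry out the computation in detail only for $\delta=+0$ and deduce the case $\delta=-0$ by interchanging the roles of $+0$ and $-0$ throughout.
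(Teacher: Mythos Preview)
The paper does not give a proof of this lemma at all: it is stated immediately after the sentence ``With proper column operation on $L_{\lambda}$, we have'' and left to the reader. Your proposal is exactly the intended verification --- you identify $\tilde U_\lambda$ as the column operation alluded to, check that the columns it mixes into $\Omega_\lambda$ are already columns of $\Omega_\lambda$ (so the span is preserved), and then carry out the column-by-column cancellation of the quadratic tail and of the doubled $\pm0$ row to obtain the claimed formula. There is nothing to compare against; your argument simply supplies the details the paper suppresses, and it is correct.
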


\begin{example}
Taking $\tilde{N} = \exp(\Theta_f(\t_B))$, then the canonical form of $\tilde{N}$ which lies in the big cell of the orthogonal Grassmannian manifold is
\begin{align*}
\tilde{\mathcal{P}}_{2n+2} & = {\Omega}_{\emptyset}(\exp(\Theta_{f}(\mathbf{t}_B))P_{\emptyset}\tilde{M}\tilde{U}_{\text{id}})\\
& = \begin{pmatrix}[ccccc]
0 & 0 & 0 & \cdots & (-1)^n  \\
\vdots & \vdots & \vdots & \udots & \vdots  \\
0 & 0 & 1 & \cdots & \cdots  \\
0 & -1 & 0 & \cdots & \vdots  \\
1 & 0 & 0 & \cdots & 0 \\
0 & -Q_{1,0} & -Q_{2,0} & \cdots & -Q_{n,0} \\
Q_{1,0} & 0 & -Q_{2, 1}  & \cdots & -Q_{n, 1} \\
Q_{2,0} & Q_{2,1}  & 0 & \cdots & -Q_{n, 2}\\
\vdots & \vdots & \vdots & \ddots & \vdots\\
Q_{n,0} & Q_{n, 1}  & Q_{n, 2} & \cdots & 0 
\end{pmatrix}.
\end{align*}
\end{example}

\begin{theorem}\label{thm:generaltaufunction}
$\tau_{\text{BKT}}(\t_B; w_{\lambda})$-function for the $B$-Toda is the Pfaffian of the following skew-symmetric matrix
\begin{align*}
W_{\lambda} = & \sum \limits_{j > i \ge 0}Q_{ji}(\t_B) (\tilde{E}_{-i, -j} - \tilde{E}_{-j, -i}) + \sum \limits_{i_k > |i_l|}\left[(\tilde{E}_{-i_k, i_k} - \tilde{E}_{i_k, -i_k}) + (-1)^{\text{min}\{0, i_{l}\}}r_{i_k, i_l}(\tilde{E}_{i_k, i_l} - \tilde{E}_{i_l, i_k})\right] + \\
& \qquad + \sum \limits_{j = 1}^{\ell}(\tilde{E}_{\lambda_{2j}, \lambda_{2j-1}} - \tilde{E}_{\lambda_{2j-1}, \lambda_{2j}}) + (1 - \delta_{\lambda_{2\ell, 0}})(\tilde{E}_{-0, +0} - \tilde{E}_{+0, -0}),
\end{align*}
where $(i_1, i_2, \dots, i_{n+1})$ is the index associated with the augmented strict partition $\lambda = (\lambda_1, \lambda_2, \dots, \lambda_{2\ell})$ with $\lambda_{2\ell} = 0$ if the length of the non-augmented $\lambda$ is odd.
\end{theorem}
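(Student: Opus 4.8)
The plan is to run the computation of Section~\ref{sec:Pfaffiangeneric} essentially verbatim, with the generic normal form of Proposition~\ref{prop:normalforme} replaced by the $w_\lambda$-adapted normal forms of Theorem~\ref{thm:normalformL} and Lemma~\ref{lem:ReducedLagrangian}. Starting from \eqref{eq:BKTtau}, the quantity $\tau_{\text{BKT}}^2(\t_B; w_\lambda)$ is the standard pairing in $\bigwedge^{n+1}\tilde V$ of the two decomposable vectors $\exp(\Theta_f(\t_B))\cdot e_{-n}\wedge\cdots\wedge e_{-0}$ and $\tilde N\dot w_\lambda\cdot e_{-n}\wedge\cdots\wedge e_{-0}$. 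For the first vector I would substitute, up to a nonzero scalar, the Schur-$Q$ normal form $\tilde{\mathcal P}_{2n+2}$ computed in the Example following Lemma~\ref{lem:ReducedLagrangian}, whose entries are the elementary Schur-$Q$ functions and the anti-diagonal block; for the second I would substitute the columns $\Omega_\lambda(\tilde L_\lambda)$ with $\tilde L_\lambda=\tilde M+\sum_{i_k>|i_l|}r_{i_k,i_l}(\tilde E_{i_k,i_l}-\tilde E_{i_l,i_k})$, the column index set $(i_1,\dots,i_{n+1})$ carrying exactly the sign pattern dictated by $w_\lambda$. Since the wedge monomials form an orthonormal basis, the pairing collapses to the Cauchy--Binet sum $\sum_I\Delta_I(\tilde{\mathcal P}_{2n+2})\,\Delta_I(\tilde L_\lambda)$ over $(n+1)$-subsets $I\subset\{-n,\dots,-0,+0,\dots,n\}$.

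Next I would recognize this bilinear sum of maximal minors as the determinant of a single $(2n+2)\times(2n+2)$ skew-symmetric matrix $W_\lambda$ (equivalently, via a Pfaffian minor-summation of the type in~\cite{Ishikawa-Wakayama1995}), exactly as the identity $\det(I+AB)=\det(I+BA)$ was used to pass from $\det(G_{\mathcal P_{2n+2}}^TG_{\tilde N})$ to \eqref{eq:Skew2n2} in the generic case. Reading off the blocks: the Schur-$Q$ half of $\tilde{\mathcal P}_{2n+2}$ contributes $\sum_{j>i\ge0}Q_{ji}(\t_B)(\tilde E_{-i,-j}-\tilde E_{-j,-i})$; the $r$-part of $\tilde L_\lambda$ contributes $\sum_{i_k>|i_l|}(-1)^{\min\{0,i_l\}}r_{i_k,i_l}(\tilde E_{i_k,i_l}-\tilde E_{i_l,i_k})$, where the sign $(-1)^{\min\{0,i_l\}}$ records whether $w_\lambda$ has sign-flipped the column $i_l$ and is forced by matching the $(-1)^k$ entries of $\tilde M$ against the coordinates appearing in $n_\lambda$; and the coupling between the two halves, which for $\lambda=\emptyset$ was the constant anti-diagonal $\tilde J_{n+1}$, becomes the $w_\lambda$-twisted spine $\sum_{i_k>|i_l|}(\tilde E_{-i_k,i_k}-\tilde E_{i_k,-i_k})+\sum_{j=1}^{\ell}(\tilde E_{\lambda_{2j},\lambda_{2j-1}}-\tilde E_{\lambda_{2j-1},\lambda_{2j}})+(1-\delta_{\lambda_{2\ell},0})(\tilde E_{-0,+0}-\tilde E_{+0,-0})$, the last term accounting for the parity of the length of the non-augmented $\lambda$ through the index $\delta$ of Lemma~\ref{lem:ReducedLagrangian}. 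Assembling the three blocks gives precisely the matrix $W_\lambda$ of the statement, and since $\tau_{\text{BKT}}^2(\t_B;w_\lambda)=\det W_\lambda$ with $W_\lambda$ skew-symmetric, one may set $\tau_{\text{BKT}}(\t_B;w_\lambda)=\operatorname{Pf}(W_\lambda)$ up to an overall sign, fixed by agreement with \eqref{eq:Skew2n2} and with the $\mathfrak S_B$-invariance noted at the start of Section~\ref{sec:generalcase}.

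The conceptual input is light; the real labor is the sign bookkeeping in passing from the Cauchy--Binet sum to the single skew matrix, namely checking that the signed index set $(i_1,\dots,i_{n+1})$ attached to $w_\lambda$, propagated through $\tilde M=\sum_k(-1)^k(\tilde E_{+k,-k}+\tilde E_{-k,+k})$ and through the minor-summation identity, reproduces the spine terms $\tilde E_{-i_k,i_k}$, $\tilde E_{\lambda_{2j},\lambda_{2j-1}}$ and $\tilde E_{-0,+0}$ with the exact signs written above, and that the two maximal isotropic halves are oriented so that $\operatorname{Pf}(W_\lambda)$ rather than $-\operatorname{Pf}(W_\lambda)$ is the $\tau$-function. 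I would discharge this by first verifying the formula on the building blocks---the generic cell $\lambda=\emptyset$, which is already \eqref{eq:Skew2n2}, and the elementary cells $\lambda=(m)$, i.e.\ $\lambda'=(m,0)$---and then observing that a general $w_\lambda\in\mathfrak N_B$ differs from these by conjugation by permutations in $\mathfrak S_B$ and by disjoint elementary sign flips, under each of which both $W_\lambda$ and $\tau_{\text{BKT}}^2$ transform by the same row/column operations. Continuity in $\t_B$ and the limit $n\to\infty$ then yield the infinite-dimensional statement as well.
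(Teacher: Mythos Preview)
Your proposal is correct and follows essentially the same approach as the paper's own proof: write $\tau_{\text{BKT}}^2(\t_B;w_\lambda)=\det(A^TB)$ with $A=\tilde{\mathcal P}_{2n+2}$ and $B=\Omega_\lambda(\tilde L_\lambda)$, then rewrite this as the determinant of a single skew-symmetric $(2n+2)\times(2n+2)$ matrix via a block-determinant identity. The paper's proof is even terser than yours---it just records the identity $\det(A^TB)=(-1)^{p(q+1)}\det\begin{pmatrix}A^T & 0\\ -I_p & B\end{pmatrix}$ and says the result follows ``by manipulating'' it---so your explicit description of the three blocks and the role of the $\tilde M$-signs is a genuine expansion of what the paper writes, not a different argument.

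One small caution: your closing reduction strategy, verifying the sign on $\lambda=\emptyset$ and $\lambda=(m)$ and then propagating by $\mathfrak S_B$-conjugation and ``disjoint elementary sign flips,'' is more delicate than you suggest. Distinct elements $w_\lambda\in\mathfrak N_B$ are not all $\mathfrak S_B$-conjugate, and composing sign flips changes which columns sit in $\Omega_\lambda$, so the claim that $W_\lambda$ and $\tau^2$ transform by ``the same row/column operations'' needs to be checked directly rather than inferred. In practice this is exactly the ``manipulating'' the paper leaves implicit: once you have the $(3n+3)\times(3n+3)$ block matrix, the $(n+1)$ rows of $\Omega_\lambda(\tilde L_\lambda)$ coming from the $\tilde M$-part are signed unit vectors and can be Laplace-expanded away uniformly in $\lambda$, which simultaneously fixes all the signs in $W_\lambda$ without any case analysis. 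That single expansion is cleaner than the inductive verification you sketch.
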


\begin{proof}
Let $A = \tilde{\mathcal{P}}_{2n+2}$ and $B = {\Omega}_{\lambda}(\tilde{L}_{\lambda})$, then
\[\tau_{\text{BKT}}^2(\t_B; w_{\lambda}) = \det(A^TB).
\]
The desired result can be obtained by manipulating the following equality
\[
\det(A^TB) = (-1)^{p(q+1)}\det\begin{pmatrix}
A^T & 0 \\
-I_p & B 
\end{pmatrix},
\]
where $A, B$ are matrices of size $p \times q$. 
\end{proof}

\begin{remark}
Theorem \ref{thm:taufunctiongeneric}, Theorem \ref{thm:normalforminf} and Theorem \ref{thm:taufunctioninf} can be easily obtained from the relevant results for $B$-Toda by taking the limit $n \to \infty$.
\end{remark}

From Theorem \ref{thm:normalformL} and Theorem \ref{thm:generaltaufunction}, we immediately deduce that
\begin{corollary}\label{cor:generaltaufunction}
The $\tau$-function of the BKP hierarchy in the Schubert cell $\mathcal{L}_{\lambda}$ associated with the Weyl group element $w_{\lambda}$ always has the following Schur-Q expansion
\[\tau(\t_B; w_{\lambda}) = Q_{\lambda} + \sum \limits_{\lambda \varsubsetneqq \mu \in DP}R_{\mu}Q_{\mu},\]
where $Q_{\nu}$ is the Schur-$Q$ function associated with the strict partition $\nu$ and $R_{\nu}$ is the Pfaffian of certain minors of the skew-symmetric matrix in Theorem \ref{thm:generaltaufunction}. That is,  $Q_{\lambda}$ is always the leading term in the Schur-$Q$ expansion of $\tau(\t_B; w_{\lambda})$.
\end{corollary}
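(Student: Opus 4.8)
The plan is to read the Schur-$Q$ expansion directly off the Pfaffian identity $\tau_{\text{BKT}}(\t_B;w_\lambda)=\text{Pf}(W_\lambda)$ of Theorem \ref{thm:generaltaufunction}, by applying the Pfaffian minor-summation formula of \cite{Ishikawa-Wakayama1995} — the very device that produced the second equality in \eqref{eq:Skew2n2} in the generic case $\lambda=\emptyset$. First I would organize the rows and columns of $W_\lambda$ into the negative indices $(-0,-1,\dots,-n)$ and the non-negative indices $(+0,1,\dots,n)$. With respect to this splitting $W_\lambda$ is block skew-symmetric: the negative--negative block is exactly the elementary Schur-$Q$ block $\bigl(Q_{ji}(\t_B)\bigr)_{j>i\ge 0}$ of \eqref{eq:Skew2n2}; the negative--non-negative coupling block is the fixed $0/{\pm}1$ matrix $\sum_{i_k>|i_l|}(\tilde{E}_{-i_k,i_k}-\tilde{E}_{i_k,-i_k})$, which links each present non-negative index to a negative partner in $\tilde J$-fashion; and the non-negative--non-negative block is the skew-symmetric coordinate matrix $\sum_{i_k>|i_l|}(-1)^{\min\{0,i_l\}}r_{i_k,i_l}(\tilde{E}_{i_k,i_l}-\tilde{E}_{i_l,i_k})$ to which the fixed entries $\sum_{j=1}^{\ell}(\tilde{E}_{\lambda_{2j},\lambda_{2j-1}}-\tilde{E}_{\lambda_{2j-1},\lambda_{2j}})+(1-\delta_{\lambda_{2\ell},0})(\tilde{E}_{-0,+0}-\tilde{E}_{+0,-0})$ forced by $w_\lambda$ are added.

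Next I would expand $\text{Pf}(W_\lambda)$ along the coupling block: every perfect matching contributing to the Pfaffian pairs some non-negative indices with their negative partners through that block, leaving a complementary set of negative indices to be matched inside the $Q$-block and a complementary set of non-negative indices matched inside the coordinate-plus-fixed block. Summing over the choices gives $\text{Pf}(W_\lambda)=\sum_{\mu}\pm\,\text{Pf}\bigl((Q\text{-block})_\mu\bigr)\,\text{Pf}\bigl((\text{coord.+fixed})_\mu\bigr)$, the sum over strict partitions $\mu$ read off from the un-coupled indices. By the computation already performed in Section \ref{sec:Pfaffiangeneric} (this is the Cartan-map content of the generic frame $\tilde{\mathcal{P}}_{2n+2}$), the first factor equals, up to sign, the Schur-$Q$ function $Q_\mu$. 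So the whole assertion reduces to understanding the coefficient $\text{Pf}\bigl((\text{coord.+fixed})_\mu\bigr)$: it should vanish unless $\lambda\subseteq\mu$, equal $1$ when $\mu=\lambda$, and equal $\pm\text{Pf}(R_\mu)$ — a genuine Pfaffian minor of the $r$-coordinate matrix — when $\mu\supsetneq\lambda$.

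I expect this last step to be the main obstacle, and it is precisely where the fixed $\pm1$ entries matter. In the coordinate-plus-fixed block the rows and columns indexed by $\lambda_1,\dots,\lambda_{2\ell}$ carry \emph{no} free entries $r_{\cdot,\cdot}$ beyond the forced pairings $\tilde{E}_{\lambda_{2j},\lambda_{2j-1}}-\tilde{E}_{\lambda_{2j-1},\lambda_{2j}}$; hence a nonzero Pfaffian of a principal submatrix must contain all of $\{\lambda_1,\dots,\lambda_{2\ell}\}$ and pair them off through exactly those entries, while the anti-diagonal coupling block then forces the negative partners of the $\lambda_j$ to be excluded on the $Q$-side — which, under the dictionary between index sets and strict partitions, is the condition $\lambda\subseteq\mu$, the $\delta_{\lambda_{2\ell},0}$ term accounting for the $(-0,+0)$ pair when the non-augmented $\lambda$ has odd length. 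When $\mu=\lambda$ the surviving submatrix is the direct sum of the standard $2\times2$ skew blocks on the pairs $(\lambda_{2j-1},\lambda_{2j})$, whose Pfaffian is $1$, so the coefficient of $Q_\lambda$ is $1$; when $\mu\supsetneq\lambda$ the submatrix reorders, after the forced pairings, into that fixed block together with a minor of the skew-symmetric $r$-matrix, whose Pfaffian is the $R_\mu$ of the statement up to a sign. I would pin down all the signs once and for all by matching $\tau(\0)=1$ and the already-established generic formula \eqref{eq:Skew2n2} at $\lambda=\emptyset$, arriving at $\tau(\t_B;w_\lambda)=Q_\lambda+\sum_{\lambda\subsetneq\mu\in\text{DP}}\text{Pf}(R_\mu)Q_\mu$. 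The sign-tracking in the minor summation and the index combinatorics behind the ``support $=\{\mu:\lambda\subseteq\mu\}$'' claim are the only delicate points; the remainder is a transcription of Section \ref{sec:Pfaffiangeneric} and Theorem \ref{thm:generaltaufunction}.
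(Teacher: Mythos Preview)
Your Pfaffian--minor--summation plan is the right instinct, but the execution has two concrete gaps, and the paper itself in fact deduces the corollary by a different (and much shorter) route.

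\medskip
\noindent\textbf{The block decomposition is not as you describe.} You place all the coordinate entries $\sum_{i_k>|i_l|}(-1)^{\min\{0,i_l\}}r_{i_k,i_l}(\tilde E_{i_k,i_l}-\tilde E_{i_l,i_k})$ in the non-negative $\times$ non-negative block. But whenever $i_l<0$ (i.e.\ $|i_l|\in\lambda$) the entry $\tilde E_{i_k,i_l}$ sits in the positive $\times$ negative coupling block. So the coupling block is \emph{not} a pure $0/{\pm1}$ matrix, and your Laplace expansion ``along the coupling block'' does not factor into an $r$-free coupling part times a $Q$-minor times an $r$-minor. Look at the paper's own example $n=2$, $\lambda=(1)$: in the displayed $6\times6$ matrix the entries $r_{2,-1}$ and $-r_{2,0}$ live in rows $-1,-0$ and column $+2$, i.e.\ in the coupling block.

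\medskip
\noindent\textbf{The support argument proves the wrong containment.} Your key step is: the positive rows indexed by $\lambda_1,\dots,\lambda_{2\ell}$ carry only the forced pairings, hence any nonzero Pfaffian minor ``must contain all of $\{\lambda_1,\dots,\lambda_{2\ell}\}$'', which you read as $\lambda\subseteq\mu$. But this would be containment \emph{of the set of parts}, and that is false. In the same example, expanding the $6\times6$ Pfaffian (first forcing $+0\leftrightarrow+1$) gives
\[
\tau(\t_B;w_{(1)}) \;=\; Q_{(1)} - r_{2,-1}\,Q_{(2)} - r_{2,0}\,Q_{(2,1)},
\]
so $Q_{(2)}$ appears with nonzero coefficient although $\{1\}\not\subseteq\{2\}$. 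The order $\lambda\subsetneq\mu$ in the corollary is the Bruhat/weight order on $\mathfrak N_B$ (equivalently, containment of shifted Young diagrams, $\lambda_i\le\mu_i$ for all $i$), not set-containment of parts. Your combinatorial mechanism does not detect this finer order: once the forced pairings on the positive $\lambda_j$'s are made, the negative index $-\lambda_j$ is \emph{free} to match either inside the $Q$-block or with some positive $i_k$ through an $r_{i_k,-\lambda_j}$ entry, so nothing forces $\lambda_j\in\mu$.

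\medskip
\noindent\textbf{What the paper actually does.} The paper gives no Pfaffian expansion at all here; it says the corollary follows immediately from Theorem \ref{thm:normalformL} and Theorem \ref{thm:generaltaufunction}. The intended one-line argument is representation-theoretic: a point of $\mathcal L_\lambda$ is $\tilde N\dot w_\lambda\cdot v^{\omega_1}$ with $\tilde N\in\mathcal N_-$, and since $\mathcal N_-$ only \emph{lowers} weights in the spin module, $\tilde N\dot w_\lambda v^{\omega_1}=\dot w_\lambda v^{\omega_1}+(\text{strictly lower weight vectors})$. Under the Boson--Fermion correspondence \eqref{eq:BonsonFermionBPoly} the weight vector $\dot w_\mu v^{\omega_1}$ is sent to $Q_\mu$, and ``strictly lower weight'' is exactly $\lambda\subsetneq\mu$ in the Bruhat order on $\mathfrak N_B$. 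Theorem \ref{thm:generaltaufunction} then identifies each coefficient $R_\mu$ as a sub-Pfaffian of $W_\lambda$. If you want to push the purely combinatorial route, you would need to track how the mixed $r$-entries in the coupling block interact with the $Q$-block under the minor summation, and then verify that the resulting index combinatorics reproduces the shifted-diagram order rather than set containment.
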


\subsection{Pure spinors of E.Cartan}\label{sec:purespinor}
Theorem \ref{thm:generaltaufunction} also provides us a constructive description for the pure spinors of E. Cartan (see Section \ref{sec:Orthogonal}) which we now explain. Recall that in Section \ref{sec:f-KTAB} and Section \ref{sec:ABKP} we saw that under the Boson-Fermion correspondence geometrically $B$-Toda and BKP describes the group orbit $\mathcal{G} \cdot v^{\omega_1}$, where $\mathcal{G}$ is a finite-dimensional or a properly defined infinite-dimensional orthogonal group and $v^{\omega_1}$ is the highest weight vector in the corresponding spin module. The Bonson-Fermion correspondence $\sigma_0^B: \mathcal{G} \cdot v^{\omega_1} \cong \mathbb{C}[t_1, t_3, \dots]$ in type $B$ is given explicitly by (c.f. \cite{Date-Jimbo-Kashiwara-Miwa1982, Jimbo-Miwa1983, You1989})
\begin{equation}\label{eq:BonsonFermionBPoly}
\sigma^B_0(w_{\lambda} \cdot v^{\omega_1}) = Q_{\lambda}(\t_B),
\end{equation}
where $l(\lambda)$ is the length of the partition $\lambda$. Note that $(Q_{\lambda}(\t_B))_{\lambda \in \text{DP}}$ is a basis for $\mathbb{C}[t_1, t_3, \dots]$. Combining Equation \eqref{eq:BonsonFermionBPoly} and Theorem \ref{thm:normalformL}, Theorem \ref{thm:generaltaufunction} and Corollary \ref{cor:generaltaufunction}, we obtain the proof for Theorem \ref{thm:purespinor}.


\subsection{An example}
We use the following simple example to illustrate the content of this section. Taking $n = 2$ and $w = w_1$, we then have
\[L_{1} = \begin{pmatrix}
& & & & & 1\\
& & & & -1 & -r_{2, -1}\\
& & & 1 & 0 & -r_{2, 0} \\
& & 1 & 0 & 0 & -r_{2, 0} \\
& -1 & 0 & 0 & 0 & 0 \\
1 & r_{2, -1} & r_{2, 0} & r_{2, 0} & 0 & -r_{2, 0}^2
\end{pmatrix}, \qquad \tilde{L}_{1} = \begin{pmatrix}
& & & & & 1\\
& & & & -1 & -r_{2, -1}\\
& & & 1 & 0 & -r_{2, 0} \\
& & 1 & 0 & 0 & 0 \\
& -1 & 0 & 0 & 0 & 0 \\
1 & r_{2, -1} & r_{2, 0} & 0 & 0 & 0
\end{pmatrix}\]
and
\begin{align*}
\tau_{\text{BKT}}^2(w_1, \t_B)  & = \det \begin{pmatrix}
0 & -Q_{2,1} & -Q_{2,0} & 0 & 0 & 1 & 0 & 0 & 0\\
Q_{2,1} & 0 & -Q_{1,0} & 0 & -1 & 0 & 0 & 0 & 0\\
Q_{2,0} & Q_{1,0} & 0 & 1 & 0 & 0 & 0 & 0 & 0 \\
0 & 0 & 0 & 0 & 0 & -1 & 0 & 0 & 1\\
0 & 0 & 0 & 0 & -1 & 0 & 0 & 0 & -r_{2, 1}\\
0 & 0 & 0 & -1 & 0 & 0 & 0 & 0 & -r_{2, 0}\\
0 & 0 & -1 & 0 & 0 & 0 & 0 & 1 & 0\\
0 & -1 & 0 & 0 & 0 & 0 & -1 & 0 & 0\\
-1 & 0 & 0 & 0 & 0 & 0 & r_{2, 1} & r_{2, 0} & 0
\end{pmatrix}\\
 & = \det \begin{pmatrix}
 0 & -Q_{2,1} & -Q_{2,0} & 0 & 0 & 1 \\
 Q_{2,1} & 0 & -Q_{1,0} & 0 & 0 & r_{2, -1} \\
 Q_{2,0} & Q_{1,0} & 0 & 0 & 0 & -r_{2, 0} \\
 0 & 0 & 0 & 0 & 1 & 0 \\
 0 & 0 & 0 & -1 & 0 & 0 \\
 -1 & -r_{2, -1} & r_{2, 0} & 0 & 0 & 0
\end{pmatrix}.
\end{align*}

\section{An application: KdV as $4$-reduction of BKP}\label{sec:KdVBKP}
In \cite{Alexandrov2021}, A. Alexandrov proved that solutions of KdV satisfy BKP hierarchy up to rescaling of the time parameters by $2$. More precisely, we have
\begin{theorem}[\cite{Alexandrov2021}]\label{thm:KdVinBKPb}
For any KdV $\tau$-function,
\begin{equation}\label{eq:KdVinBKP}
\tau(\t_B) = \tau_{\text{KdV}}(\t_B \slash 2)
\end{equation}
is a $\tau$-function of the BKP hierarchy.
\end{theorem}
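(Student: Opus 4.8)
The plan is to reprove Theorem~\ref{thm:KdVinBKPb} by realizing the KdV hierarchy as the reduction of the BKP hierarchy attached to the sub-Lie algebra $\widehat{\mathfrak{sl}}_4\cap\mathfrak b_\infty$, and then to extract the factor of $2$ by comparing the two Boson--Fermion correspondences \eqref{eq:Bonson-FermionA} and \eqref{eq:Bonson-FermionB}. Recall that the KdV hierarchy is the $2$-reduction of the KP hierarchy, so that every KdV $\tau$-function is, after setting the even times to zero, of the form $\sigma^A_0(g\cdot v_0)$ for some $g$ in the subgroup of $\mathcal A_\infty$ whose Lie algebra is the copy of $\widehat{\mathfrak{sl}}_2=A_1^{(1)}$ realized principally inside $\mathfrak a_\infty$ (as $2$-periodic $\mathbb Z\times\mathbb Z$ matrices); in particular $g\cdot v_0$ already lies in the $A_1^{(1)}$-suborbit of $F^{(0)}\cong\mathbb C[t_1,t_3,t_5,\dots]$, the last identification being the restriction of \eqref{eq:Bonson-FermionA} to the odd times, which is exactly the principal bosonization of the basic module of $A_1^{(1)}$.

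The first step is to pin down the algebra. I would realize $\widehat{\mathfrak{sl}}_4$ inside $\mathfrak a_\infty$ as $4$-periodic matrices modulo the central extension and intersect it with the type-$B$ constraint $a_{i,j}=(-1)^{i+j+1}a_{-j,-i}$ cutting out $\mathfrak b_\infty$; a short computation on the Chevalley generators identifies the fixed subalgebra with the twisted affine algebra $D_2^{(2)}$ and exhibits the folding isomorphism $D_2^{(2)}\cong A_1^{(1)}$. The point to verify with care is that this isomorphism is compatible with the principal gradations, so that the principal Heisenberg of $A_1^{(1)}$, concentrated in odd degrees, is carried onto the principal Heisenberg of $D_2^{(2)}$, which inside $\mathfrak b_\infty$ is spanned by the operators $H_{2k-1}$ acting on the spin module $S$.

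The second step identifies the modules. Since $v^{\omega_1}$ is cyclic for the principal Heisenberg $\{H_{2k-1}\}$ (this is precisely the type-$B$ Boson--Fermion correspondence \eqref{eq:Bonson-FermionB}), the spin module $S$ is, when restricted along $D_2^{(2)}\hookrightarrow\mathfrak b_\infty$, the basic module of $D_2^{(2)}\cong A_1^{(1)}$, i.e.\ the same module $F^{(0)}\cong\mathbb C[t_1,t_3,\dots]$ in which KdV $\tau$-functions live. Hence the orbit point $g\cdot v_0$, lying in the $A_1^{(1)}$-suborbit, is identified with a point $h\cdot v^{\omega_1}$ of the $D_2^{(2)}$-suborbit of $S$, a fortiori of $\mathcal B_\infty\cdot v^{\omega_1}$, so that $\sigma^B_0(h\cdot v^{\omega_1})$ is a BKP $\tau$-function. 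It remains only to compare the two outputs attached to this single vector: the type-$A$ bosonization produces $\tau_{\text{KdV}}(t_1,t_3,\dots)$ while the type-$B$ bosonization produces $\tau_{\text{BKP}}(\t_B;h)$; but \eqref{eq:Bonson-FermionA} and \eqref{eq:Bonson-FermionB} realize the same principal Heisenberg with the creation operators $H_{-n}$ normalized by $n$ versus $\frac12 n$, so the two polynomials are related by the substitution $t_n\mapsto t_n/2$. Therefore $\tau_{\text{BKP}}(\t_B;h)=\tau_{\text{KdV}}(\t_B/2)$, which is \eqref{eq:KdVinBKP}; carrying out the construction with the completed groups $\mathcal A_\infty\supset\mathcal B_\infty$ handles all, not merely rational, KdV $\tau$-functions.

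The main obstacle is the algebra identification in the first step: one must choose the embedding $\widehat{\mathfrak{sl}}_4\cap\mathfrak b_\infty\cong A_1^{(1)}$ so that it intertwines the \emph{principal} gradations and Heisenberg subalgebras, and compute the exact proportionality constant between the two normalizations of the odd-degree Heisenberg generators. It is this constant --- forced to be $2$ by the discrepancy between \eqref{eq:Bonson-FermionA} and \eqref{eq:Bonson-FermionB} --- that produces precisely the rescaling by $2$ and no other factor; once it is nailed down, combining it with the description of the Schubert strata of the orthogonal Grassmannian in Section~\ref{sec:generalcase} should moreover yield the complete characterization of the KdV orbits inside the BKP hierarchy.
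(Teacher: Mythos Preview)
Your proposal is correct and follows essentially the same route as the paper: realize $A_1^{(1)}\cong D_2^{(2)}=\widehat{\mathfrak{sl}}_4\cap\mathfrak b_\infty$ inside $\mathfrak b_\infty$, identify the basic $A_1^{(1)}$-module with the spin module restricted along this embedding, and read off the factor $2$ from the discrepancy between the normalizations in \eqref{eq:Bonson-FermionA} and \eqref{eq:Bonson-FermionB}. The paper carries out exactly these steps, writing down the explicit Chevalley generators \eqref{eq:BasisD22}--\eqref{eq:D22inB} and the principal Heisenberg $\mathcal H_{\text{KdV}}$ to make the algebra identification concrete.
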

Alexandrov showed this result by directly comparing the Hirota bilinear identities the $\tau$-functions of KdV and the $\tau$-functions of BKP should satisfy and did not provide an explanation from the first principle why such ``surprising'' result should be true. Later in \cite{vandeLeur2021}, J. van de Leur tried to give Theorem \ref{thm:KdVinBKPb} a representation theoretical interpretation, and succeeded in the case of rational solutions. 

In the following, we show that KdV is nothing but the $4$-reduction of BKP hierarchy and Theorem \ref{thm:KdVinBKPb} follows from the lower rank coincidence of Kac-Moody Lie algebras $A_1^{(1)} \cong D_2^{(2)}$. The other coincidences of lower rank Kac-Moody Lie algebras and their implications to integrable systems will be investigated elsewhere. For example, reading from the Dynkin diagrams (c.f. \cite{Kac1990}), we may deduce that KdV is also a $2$-reduction of CKP hierarchy.

\subsection{Realization of the Kac-Moody Lie algebras $A^{(1)}_1$ and $D^{(2)}_2$}\label{sec:A11andD22}
First we note that $A^{(1)}_1$ and $D_2^{(2)}$ share the same Dynkin diagram (c.f. \cite{Kac1990}, page 54-55) thus the same generalized Cartan matrix $A$: 
\[\circ \Leftrightarrow \circ, \qquad A = \begin{pmatrix} 2 & -2 \\ -2 & 2 \end{pmatrix}.\]
And from the general theory they should give isomorphic Kac-Moody Lie algebras, that is, they should give different realization of the abstract Lie algebra generated by six elements $h_0, h_1, e_0, e_1, f_0, f_1$ subjected to the following relations
\begin{align}\label{eq:SerreA11}
& [h_0, h_1] = 0, \nonumber\\
& [e_i, f_j] = \delta_{ij}h_i,\nonumber\\
& [h_i, e_j] = A_{ij}e_j,\\
& [h_i, f_j] = -A_{ij}f_j,\nonumber\\
& [e_i, [e_i, [e_i, e_j]]] = 0 = [f_i, [f_i, [f_i, f_j]]] \qquad \text{if} \quad i \ne j. \nonumber
\end{align}

We will construct the isomorphism $A_1^{(1)} \cong D_2^{(2)}$ and the corresponding $\tau$-functions explicitly by embedding both of them into $\mathfrak{a}_{\infty}$.

Recall that $A_{\ell}^{(1)}$ can be realized as a central extension of loop algebras (c.f. \cite{Kac1990}). 

The loop algebra $\widetilde{\mathfrak{gl}}_{\ell + 1}$ is defined as $\mathfrak{gl}_{\ell + 1}(\mathbb{C}[z, z^{-1}])$, i.e. as the complex Lie algebra of $(\ell + 1) \times (\ell + 1)$ matrices with Laurent polynomials as entries. A basis of $\widetilde{\mathfrak{gl}}_{\ell + 1}$ can be chosen as
\[{E}_{i, j}(m) \equiv {E}_{i, j} \otimes z^m, \qquad 1 \le i, j \le \ell + 1 \quad \text{and} \quad m \in \mathbb{Z},\]
with the following commutation relations
\[[{E}_{i, j}(m), {E}_{k, l}(n)] = \delta_{jk}{E}_{i, l}(m + n) - \delta_{li}{E}_{k, j}(m + n).\]
As $\mathfrak{gl}_{\ell+1}$ naturally acts on $\mathbb{C}^{\ell + 1}$, which has a standard basis ${e}_1, \dots, e_{\ell +1}$ of $\ell + 1$ column vectors in which $e_{i} (1 \le i \le \ell + 1)$ has $1$ in the $i$-th row and $0$ elsewhere. The loop algebra $\widetilde{\mathfrak{gl}}_{\ell + 1}$ acts on $\mathbb{C}[z, z^{-1}]^{\ell + 1}$, which consists of $(\ell + 1) \times 1$ column vectors with Laurent polynomials in $z$ as entries. The vectors
\[e_{k(\ell + 1)+j} = {e}_j \otimes z^{-k}\]
form a basis of $\mathbb{C}[z, z^{-1}]^{\ell + 1}$ (over $\mathbb{C}$) indexed by $\mathbb{Z}$. Thus we obtain an identification of $\mathbb{C}[z, z^{-1}]^{\ell + 1}$ with $\mathbb{C}^{\infty}$. Note that
\[{E}_{i, j}(k)e_{s(\ell + 1)+j} = e_{(\ell + 1)(s-k) + i},\]
which induce the following embedding of Lie algebras
\[\begin{array}{rcl}
\iota: \tilde{\mathfrak{gl}}_{\ell + 1} & \hookrightarrow & \overline{\mathfrak{a}}_{\infty}\\
{E}_{i, j}(k) & \mapsto & \sum \limits_{s \in \mathbb{Z}}E_{(\ell + 1)(s-k)+i, (\ell + 1)s + j}.
\end{array}
\]
An element of $\widetilde{\mathfrak{gl}}_{\ell + 1}$ has the form
\[a(z) = \sum \limits_{k}a_k \otimes z^k \qquad (a_k \in \mathfrak{gl}_{\ell + 1}),\]
where $k$ runs over a finite subset of $\mathbb{Z}$. Then the infinite matrix $\iota(a(z)) \in \overline{\mathfrak{a}}_{\infty}$ takes the form
\[\iota(a(z)) = \begin{pmatrix}
  \ddots & \ddots & \ddots & \cdots & \cdots & \cdots & \cdots\\
 \cdots & a_{-1} & a_0 & a_1 & \cdots & \cdots & \cdots\\
 \cdots & \cdots & a_{-1} & a_0 & a_1 & \cdots & \cdots \\
 \cdots & \cdots & \cdots & a_{-1} & a_0 & a_1 & \cdots \\
 \cdots & \cdots & \cdots & \cdots & \ddots & \ddots & \ddots
\end{pmatrix}.\]

Thus the central extension of $\widetilde{\mathfrak{gl}}_{\ell + 1}$
\[\widehat{\mathfrak{gl}}_{\ell + 1} = \widetilde{\mathfrak{gl}}_{\ell + 1} \oplus \mathbb{C}c,\]
will have a linear representation as a sub-Lie algebra of $\mathfrak{\mathfrak{a}}_{\infty}$, the central extension of $\overline{\mathfrak{a}}_{\infty}$. The restriction of the two-cycle $\alpha$ defined in \eqref{eq:twocycle} on $\mathfrak{a}_{\infty}$ now takes the following form
\[\alpha(\iota({E}_{ij}(m)), \iota({E}_{kl}(n))) = \delta_{il}\delta_{jk}\delta_{m+n, 0}m.\]
It follows by linearity that if $X(m) = X \otimes z^m, Y(m) = Y \otimes z^n$, then
\[\alpha(\iota(X(m)), \iota(Y(n))) = \delta_{m+n, 0}m\text{tr}(XY),\]
where tr denotes the trace in $\mathfrak{gl}_{\ell + 1}$. For general elements $a(z), b(z)$ in $\widetilde{\mathfrak{gl}}_{\ell + 1}$, we have
\[\alpha(\iota(a(z)), \iota(b(z))) = \text{Res}_0\text{tr}(a'(z)b(z)),\]
where $a'(z)$ is the derivative of $a$ with respect to $z$ and Res$_0$ is the residue at $z = 0$, i.e., the coefficient of $1 \slash z$.

\begin{example}
Taking $\ell = 1$, we have the following classical realization of $A_1^{(1)}$ as $\widehat{\mathfrak{sl}}_2$ with Chevalley basis:
\begin{align*}
& e_1 = \begin{pmatrix} 0 & 1 \\ 0 & 0 \end{pmatrix} \otimes 1, \quad f_1 = \begin{pmatrix} 0 & 0 \\ 1 & 0 \end{pmatrix} \otimes 1, \quad h_1 = [e_1, f_1] = \begin{pmatrix} 1 & 0 \\ 0 & -1 \end{pmatrix} \otimes 1,\\
& e_0 = f_1 \otimes z, \quad f_0 = e_1 \otimes z^{-1}, \quad h_0 = [e_0, f_0] = -h_1 + c.
 \end{align*}
 It is easy to check that these elements indeed satisfy all the commutation relations in \eqref{eq:SerreA11}. 
\end{example}

\begin{example}
$D_2^{(2)}$ can be realized in $\mathfrak{a}_{\infty}$ as the intersection of $\iota(\widehat{\mathfrak{sl}}_4)$ and $\mathfrak{b}_{\infty}$ (c.f. \cite{Jimbo-Miwa1983}, page 977-978). As a sub-Lie algebra of $\widehat{\mathfrak{sl}}_4$, it has Chevalley basis:
\begin{align}\label{eq:BasisD22}
& e_1 = \begin{pmatrix} 0 & 0 & 0 & 0\\ 0 & 0 & 1 & 0\\ 0 & 0 & 0 & 1\\ 0 & 0 & 0 & 0 \end{pmatrix} \otimes 1, \quad f_1 = \begin{pmatrix} 0 & 0 & 0 & 0\\ 0 & 0 & 0 & 0\\ 0 & 2 & 0 & 0\\ 0 & 0 & 2 & 0 \end{pmatrix} \otimes 1, \nonumber\\
& e_0 = \begin{pmatrix} 0 & 1 & 0 & 0 \\ 0 & 0 & 0 & 0\\ 0 & 0 & 0 & 0 \\ 0 & 0 & 0 & 0 \end{pmatrix} \otimes 1 + \begin{pmatrix} 0 & 0 & 0 & 0 \\ 0 & 0 & 0 & 0 \\ 0 & 0 & 0 & 0\\ 1 & 0 & 0 & 0 \end{pmatrix} \otimes {z}, \quad f_0 = \begin{pmatrix} 0 & 0 & 0 & 0 \\ 2 & 0 & 0 & 0 \\ 0 & 0 & 0 & 0\\ 0 & 0 & 0 & 0\end{pmatrix} + \begin{pmatrix}0 & 0 & 0 & 2\\ 0 & 0 & 0 & 0\\ 0 & 0 & 0 & 0\\ 0 & 0 & 0 & 0 \end{pmatrix} \otimes {z}^{-1}\\
& h_1 = \begin{pmatrix} 0 & 0 & 0 & 0 \\ 0 & 2 & 0 & 0\\ 0 & 0 & 0 & 0 \\ 0 & 0 & 0 & -2 \end{pmatrix} \otimes 1,  \quad h_0 = [e_0, f_0] = -h_1 + 2c. \nonumber
\end{align}
 It can be verified that the above six elements also satisfy all the Serre relations in \eqref{eq:SerreA11}, so it gives another presentation for $A^{(1)}_1$. We also note that these elements generate exactly all the elements in $\widehat{\mathfrak{sl}}_4 \cap \mathfrak{b}_{\infty}$. To see this clearly we just need to renumber the rows and columns of their image $\iota(X)$ in $\mathfrak{a}_{\infty}$ by shifting them by $3$, i.e. we have
\begin{align}\label{eq:D22inB}
& \iota(e_0) = \sum \limits_{i \in \mathbb{Z}} (E_{4i+1, 4i+2} + E_{-4i-2, -4i-1}), \qquad \iota(f_0) = 2 \sum \limits_{i \in \mathbb{Z}}(E_{4i+2, 4i+1} + E_{-4i-1, -4i-2}), \nonumber\\
& \iota(e_1) = \sum \limits_{i \in \mathbb{Z}}(E_{4i, 4i+1} + E_{-4i-1, -4i}), \qquad \iota(f_1) = 2 \sum \limits_{i \in \mathbb{Z}}(E_{4i+1, 4i} + E_{-4i, -4i-1}),\\
& \iota(h_1) = 2 \sum \limits_{i \in \mathbb{Z}} (E_{4i-1, 4i-1} - E_{4i+1, 4i+1}). \nonumber
\end{align}
\end{example}

\begin{remark}
As noted in Section \ref{sec:ABKP} the shifts $\nu_s$ induce isomorphic irreducible representations and give the same $\tau$-functions for $\mathfrak{a}_{\infty}$.
\end{remark}

\begin{remark}\label{rem:subHeisenberg}
We also note that $D_2^{(2)} \subset \mathfrak{a}_{\infty}$ contains the Heisenberg algebra $\mathcal{H}_{\text{KdV}} := \{H_{2k+1}, k \in \mathbb{Z}; c\}$ as a sub-Lie algebra. For example, we have
\begin{align*}
& H_1 = \iota(e_0 + e_1), \quad H_3 = -\text{ad}_{[\iota(e_0), \iota(e_1)]}(\iota(e_0-e_1)), \quad H_5 = -\text{ad}^2_{[\iota(e_0), \iota(e_1)]}(\iota(e_0+e_1)), \dots.
\end{align*}
\end{remark}

\subsection{KdV as $4$-reduction of BKP}
Realizing $A_1^{(1)}$ as a sub-Lie algebra $\iota(\widehat{\mathfrak{sl}}_2) \subset \mathfrak{a}_{\infty}$ of $\mathfrak{a}_{\infty}$, the $\tau$-functions of KdV are obtained from the $\tau$-functions of the KP hierarchy by restricting to the group orbit of $\widehat{\mathcal{SL}}_2$. More explicitly, let
\[{H}_{\text{KdV}}(t_1, t_3, t_5\dots) = \sum \limits_{n > 0}t_{2n-1}H_{2n-1}, \quad \text{and} \quad \bar{H}_{\text{KdV}}(t_1, t_3, t_5\dots) = \sum \limits_{n > 0}t_{2n-1}H_{-2n+1},\]
then $\tau$-function of the KdV hierarchy has the following form
\begin{equation}\label{eq:KdVA}
\tau^A_{\text{KdV}}(t_1, t_3, t_5, \dots; g) = \langle \bar{R}(\exp(\bar{H}_{\text{KdV}}(\t_B)))v_0, \bar{R}(g) \cdot v_0 \rangle, \qquad g \in \widehat{\mathcal{SL}}_2.
\end{equation}
Recall that for the finite-dimensional simple Lie algebras, the lower rank coincidence $\mathfrak{sl}_2 \cong \mathfrak{so}_3$ tells us that the standard representation of $\mathfrak{sl}_2$ on $\mathbb{C}^2$ viewed as a representation of $\mathfrak{so}_3$ is exactly the spin representation of $\mathfrak{so}_3$ as an orthogonal Lie algebra (c.f. \cite{Fulton-Harris2013}). Here we have a similar situation for  infinite-dimensional Lie algebras: the $\tau$-function of KdV given in \eqref{eq:KdVA} is actually associated with the spin representation of $D_2^{(2)}$ as a sub-Lie algebra of $\mathfrak{b}_{\infty}$. So to get the same $\tau$-function from the presentation $D_2^{(2)} = \widehat{\mathfrak{sl}}_4 \cap \mathfrak{b}_{\infty} \subset \mathfrak{b}_{\infty}$ of $A_1^{(1)}$, we need to restrict the $\tau$-function associated with the spin representation of $\mathfrak{b}_{\infty}$ to the group orbit of $D_2^{(2)}$. 

Now the rescaling of the time parameters by $2$ in Theorem \ref{thm:KdVinBKPb} can also be easily explained. For $g$ in the group of $D_2^{(2)}$ there exists $\tilde{g} \in \widehat{\mathcal{SL}}_2$ such that
\begin{align*}
\tau^B_{\text{KdV}}(t_1, t_3, t_5, \dots; g) & = \langle v^{\omega_1}, \bar{\rho}(\exp(H_{\text{KdV}}(t_1, t_3, \dots))g)v^{\omega_1}\rangle\\
& = \langle \bar{\rho}(\exp(\bar{H}_{\text{KdV}}({t_1}, {t_3}, {t_5}\dots))) v^{\omega_1}, \bar{\rho}(g)v^{\omega_1}\rangle\\
& = \langle \bar{R}(\exp(\bar{H}_{\text{KdV}}(\frac{t_1}{2}, \frac{t_3}{2}, \frac{t_5}{2}\dots)))v_0, \bar{R}(\tilde{g}) \cdot v_0 \rangle\\
& = \tau^A_{\text{KdV}}(\frac{t_1}{2}, \frac{t_3}{2}, \frac{t_5}{2}\dots; \tilde{g}),
\end{align*}
where $(\bar{\rho}, \bar{S})$ is the spin representation on certain completion $\bar{S}$ of $S$ with highest weight vector $v^{\omega_1}$, and the third equality is obtained by comparing \eqref{eq:Bonson-FermionA} and \eqref{eq:Bonson-FermionB}. Since $\tau^B_{\text{KdV}}$ is the $\tau$-function of $\mathfrak{b}_{\infty}$ restricted to the group orbit of $D_2^{(2)}$, it is a $\tau$-function for the BKP hierarchy, and we obtain another proof of Theorem \ref{thm:KdVinBKPb}.

\begin{example}
For $n \in \mathbb{N}$, and partition $\lambda = (n, n-1, \dots, 1)$ which is associated with a Weyl group element of $A_1^{(1)}$ (c.f. \cite{Jimbo-Miwa1983}), we have
\[S_{\lambda}(\t_B \slash 2) = 2^{-n \slash 2}Q_{\lambda}(\t_B).\]
\end{example}

Now we can completely characterize the KdV orbits inside the BKP hierarchy. Note that the relation $D_2^{(2)} = \widehat{\mathfrak{sl}}_4 \cap \mathfrak{b}_{\infty}$ tells us that KdV is nothing but the $4$-reduction of $\mathfrak{b}_{\infty}$. More precisely, consider the sub-Lie algebra $\mathfrak{b}^{(4)}_{\infty}$ of $\mathfrak{b}_{\infty}$ consisting of those elements whose adjoint representations commute with $\nu_4(a_{i, j}) = a_{i-4, j-4}$:
\[\mathfrak{b}^{(4)}_{\infty} = \{X \in \mathfrak{b}_{\infty}\ |\ [\text{ad}\ X, \nu_4] = 0\}.\]
Then $\mathfrak{b}^{(4)}_{\infty}$ contains a Heisenberg subalgebra $\mathcal{H}_4$ spanned by $H_{4k}$ and $1$, and split into the direct sum of $\bigoplus \limits_{k \in \mathbb{Z}}\mathbb{C}H_{4k}$ and an algebra isomorphic to $D_2^{(2)}$. Realizing the Lie group of $D_2^{(2)}$ inside $\mathcal{B}_{\infty}$ is also easy, the Lie group of $\mathfrak{b}^{(4)}_{\infty}$ is the direct product of the Lie group of $\mathcal{H}_4$ and the Lie group of $D_2^{(2)}$.

Since elements $\mathcal{H}_4$ and elements in $D_2^{(2)}$ mutually commute with each other, a $D_2^{(2)}$ module  can be extended to a $\mathfrak{b}^{(4)}_{\infty}$ module such that $\mathcal{H}_{4k}, k \ne 0$ trivially acts on it. 
The group orbit of $D_2^{(2)} \subset \mathfrak{b}_{\infty}$ on the spin module $\bar{S}$ is characterized by the following conditions
\[\bar{S}^{(4)} := \{v \in \bar{\rho}(\mathcal{B}_{\infty}) \cdot v^{\omega_1}\ |\ \bar{\rho}(H_{4k}) \cdot v = 0, \quad k = 1, 2, \dots\}.\]
We note that this subspace of $\bar{S}$ is irreducible and invariant under the action of $D_2^{(2)}$. We can also characterize this orbit by $\tau$-functions through Theorem \ref{thm:ABKPrelation} as follows. Let $\bar{F}_B \subset \bar{F}^{(0)}$ be the irreducible $\mathfrak{b}_{\infty}$-submodule of $\bar{F}^{(0)}$ generated by $v_0$, then again the group orbit of $D_2^{(2)}$ is characterized by
\[\bar{F}^{(4)}_B := \{v \in \bar{F}_B \ |\ \bar{r}(H_{4k}) \cdot v = 0, \quad k = 1, 2, \dots\}.\]
There is a natural projection $\pi_B: \mathcal{B}_{\infty} \to \text{OG}(\bar{V}) = \mathcal{B}_{\infty} \slash \bar{\mathcal{P}}_0$ where $\bar{\mathcal{P}}_0 = \{\bar{T} \in \mathcal{B}_{\infty} | T\bar{L}_0 = \bar{L}_0\}$ is the isotropic subgroup of the Lagrangian subspace $\bar{L}_0 \subset \bar{V}$ spanned by $e_{i} (i < 0)$. The Lagrangian subspace associated with $D_2^{(2)}$ are all the Lagrangian subspaces of $\mathfrak{b}_{\infty}$ such that $\nu_4(W) \subset W$, where $\nu_4(e_i) = e_{i-4}$. We note that this characterization of group orbit of KdV is the exact analogue of the well-known fact that KdV is a $2$-reduction of KP (c.f. \cite{Segal-Wilson1985}).

At last we also describe a construction of the module $\bar{F}^{(4)}_B$ as induction from an irreducible module of a sub-Lie algebra of $D_2^{(2)}$ instead of as restriction from $\bar{F}_B$.  By Remark \ref{rem:subHeisenberg}, $\iota(D_2^{(2)}) \subset \mathfrak{a}_{\infty}$ contains $\mathcal{H}_{\text{KdV}}$ as a Heisenberg subalgebra, then following the proof of Theorem 4.7 in \cite{Lepowsky-Wilson1978} word by word, we see that $\bar{F}^{(4)}_B$ is the $\mathcal{H}_{\text{KdV}}$-module generated by $v_0$. 

From any description provided above, we immediately obtain the following theorem regarding solutions of the KdV hierarchy. 
\begin{theorem}
For any sequence of complex numbers $\mathbf{a}=(a_1, a_3, a_5, \dots)$, the following formal series is a $\tau$-function for the KdV hierarchy
\[\tau_{\text{KdV}}({t_1}, {t_3}, \dots) = \sum \limits_{\lambda \in \text{DP}}Q_{\lambda}(\mathbf{a})Q_{\lambda}(2\t_B).\]
\end{theorem}
We note that a very special case of Schur-Q expansion similar to but not exact the same as this one which is of great importance in Gromov-Witten theory has already been obtained in \cite{Alexandrov2021c} and \cite{Liu-Yang2022} independently and the authors used Schur-Q expansion of a class of the so-called hypergeometric $\tau$-functions of BKP hierarchy (c.f. \cite{Orlov2003}).


\raggedright



\bibliographystyle{plain}
\bibliography{BKP.bib}
\end{document}